\setlist[itemize]{itemsep=0mm} 
\numberwithin{equation}{section} 
\titleformat*{\section}{\Large \scshape\center} 
\titleformat*{\subsection}{\fontsize{14}{14} \sffamily} 
\theoremstyle{plain}
\newtheorem{theorem}{Theorem}[section]
\newtheorem*{theorem*}{Theorem} 
\newtheorem*{proposition*}{Proposition} 
\newtheorem{lemma}[theorem]{Lemma}
\newtheorem{proposition}[theorem]{Proposition}
\newtheorem{corollary}[theorem]{Corollary}
\theoremstyle{definition}
\newtheorem{definition}[theorem]{Definition}
\newtheorem{example}[theorem]{Example}
\newcommand{\Addresses}{{
  \bigskip
  \footnotesize

  \textsc{Department of Mathematical Sciences, Norwegian University of Science and Technology,\\ 7491 Trondheim, Norway.}\par\nopagebreak
  \textit{E-mail addresses}: \texttt{eirik.berge@ntnu.no}, \texttt{stine.m.berge@ntnu.no}, and \texttt{franz.luef@ntnu.no}
}}
\theoremstyle{remark}
\newtheorem*{remark}{Remark}
\newtheorem*{remarks}{Remarks}
\begin{document}
\pagenumbering{gobble}
\title{\Huge{The Affine Wigner Distribution}}
\author{Eirik Berge, Stine Marie Berge, and Franz Luef}
\date{}
\maketitle

\pagenumbering{arabic}

\begin{abstract}
    We examine the affine Wigner distribution from a quantization perspective with an emphasis on the underlying group structure. One of our main results expresses the scalogram as (affine) convolution of affine Wigner distributions. We strive to unite the literature on affine Wigner distributions and we provide the connection to the Mellin transform in a rigorous manner. Moreover, we present an affine ambiguity function and show how this can be used to illuminate properties of the affine Wigner distribution. In contrast with the usual Wigner distribution, we demonstrate that the affine Wigner distribution is never an analytic function. \par
    Our approach naturally leads to several applications, one of which is an approximation problem for the affine Wigner distribution. We show that the deviation for a symbol to be an affine Wigner distribution can be expressed purely in terms of intrinsic operator-related properties of the symbol. Finally, we present an \textit{affine positivity conjecture} regarding the non-negativity of the affine Wigner distribution.
\end{abstract}

\section{Introduction}

The most studied quadratic time-frequency representation is the \textit{Wigner distribution} defined by
\begin{equation}
\label{Wigner_distribution}
    W_{f}(x,\omega) := \int_{\mathbb{R}^d}f\left(x + \frac{t}{2}\right)\overline{f\left(x - \frac{t}{2}\right)}e^{-2\pi i \omega t} \, dt, \quad (x,\omega) \in \mathbb{R}^{2d}.
\end{equation}
Originally invented by Wigner in \cite{wigner1997quantum} almost a century ago, the Wigner distribution is essential in quantum mechanics as it gives the expectation values for Weyl quantization of symbols \cite{ercolessi2007wigner}. In recent decades, the Wigner distribution has found many applications in time-frequency analysis \cite[Chapter 4]{grochenig2013foundations} due to its connections with the short-time Fourier transform $V_{g}f$ defined precisely in \eqref{STFT}. One of the more surprising connections is the convolution relation 
\begin{equation}
\label{wigner_convolution}
    |V_{g}f(x,\omega)|^2 = W_{P(g)} * W_{f}(x,\omega),
\end{equation} 
where $P$ is the reflection operator $P(g)(x) := g(-x)$. The function $\mathrm{SPEC}_{g}f := |V_{g}f(x,\omega)|^2$ is called the \textit{spectrogram} of $f$ with window $g$. The spectogram is an important tool for analyzing time-frequency content and has been used extensively in the engineering literature since its introduction. \par 
Parallel to the theory of time-frequency analysis is the time-scale (or wavelet) paradigm. Although there have been many attempts at finding a suitable Wigner distribution in the time-scale setting, there is no general consensus in the literature.
We will motivate a particular choice of a time-scale Wigner distribution $W_{\mathrm{Aff}}^\psi$ given by \[W_{\mathrm{Aff}}^\psi(x,a)
:= \int_{-\infty}^{\infty}\psi\left(\frac{aue^u}{e^u - 1}\right)\overline{\psi\left(\frac{au}{e^{u} - 1}\right)}e^{-2\pi i x u} \, du, \quad (x,a) \in \mathbb{R} \times \mathbb{R}_{+}.\] The function $W_{\mathrm{Aff}}^\psi$ is called the \textit{affine Wigner distribution} due to its relation to the \textit{affine group} $\mathrm{Aff}$. It was derived through a quantization procedure in \cite{gayral2007fourier}. The authors showed that the affine Wigner distribution satisfies $W_{\mathrm{Aff}}^\psi \in L_{r}^{2}(\mathrm{Aff})$ for every $\psi \in L^{2}(\mathbb{R}_+, a^{-1}\,da)$, where $L_{r}^{2}(\mathrm{Aff})$ denotes all measurable functions on the upper half-plane $\mathbb{R} \, \times \, \mathbb{R}_{+}$ that are square integrable with respect to the measure $a^{-1} \, da \, dx$. \par 
The affine Wigner distribution $W_{\mathrm{Aff}}^\psi$ has appeared in the literature several times throughout the years; as a particular Bertrand distributions in \cite{papandreou1998quadratic}, and as a tool for studying the quantum mechanics of the Morse potential in \cite{molnar2001coherent}. The basic properties of the affine Wigner distribution will be developed in a rigorous manner to fill gaps in the literature. In particular, for all sufficiently nice $\psi \in L^{2}(\mathbb{R}_+, a^{-1} \, da)$ we have the \textit{marginal properties} 
\begin{equation*}
    \int_{-\infty}^{\infty} W_{\mathrm{Aff}}^{\psi}(x,a) \, dx = |\psi(a)|^2,
\end{equation*} 
\begin{equation}
\label{Mellin_margin_in_introduction}
    \int_{0}^{\infty} W_{\mathrm{Aff}}^{\psi}(x,a) \, \frac{da}{a} = |\mathcal{M}(\psi)(x)|^2.
\end{equation}
The symbol $\mathcal{M}(\psi)(x)$ denotes the \textit{Mellin transform} of $\psi \in L^{2}(\mathbb{R}_+, a^{-1} \, da)$ at the point $x \in \mathbb{R}$ given by \[\mathcal{M}(\psi)(x) = \mathcal{M}_{a}(\psi)(x) := \int_0^\infty \psi(a)a^{-2\pi i x}\, \frac{da}{a}.\]

The first significant contribution is to develop a connection between the affine Wigner distribution and the \textit{scalogram} defined by\[\mathrm{SCAL}_{g}f(x,a) := |\mathcal{W}_{g}f(x,a)|^2, \quad (x,a) \in \mathbb{R} \times \mathbb{R}_{+},\] where $\mathcal{W}_{g}f$ denotes the continuous wavelet transform of $f$ with respect to $g$ defined precisely in \eqref{continuous_wavelet_transform}. By comparing with \eqref{wigner_convolution} in the time-frequency setting, one would expect a simple convolution relation to hold. However, as the group underlying the symmetries in the time-scale case is the affine group we obtain the following result.

\begin{theorem*}
Let $f$ and $g$ be square integrable functions on the real line and assume their Fourier transforms $\phi := \widehat{f}$ and $\psi := \widehat{g}$ are supported in $\mathbb{R}_{+}$ and are in $L^{2}(\mathbb{R}_{+},a^{-1} \, da)$. Then the scalogram of $f$ with window $g$ is given by the affine convolution \[\mathrm{SCAL}_{g}{f}(x,a) = \left(I\left(W_{\mathrm{Aff}}^{\psi}\right) *_{\mathrm{Aff}} \Delta W_{\mathrm{Aff}}^{\phi}\right)\left(\frac{x}{a}, \frac{1}{a}\right), \quad (x,a) \in \mathbb{R} \times \mathbb{R}_{+},\]
where $\Delta$ and $I$ denote the modular function and the involution on the affine group, respectively. 
\end{theorem*}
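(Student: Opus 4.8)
The plan is to imitate the representation-theoretic derivation of the time-frequency identity \eqref{wigner_convolution}, transporting everything to the Fourier side and replacing the abelian time-frequency group by the non-unimodular affine group $\mathrm{Aff}$, while carefully tracking the modular function that the non-unimodularity introduces. By hypothesis $\phi = \widehat{f}$ and $\psi = \widehat{g}$ are supported in $\mathbb{R}_{+}$ and lie in $L^{2}(\mathbb{R}_{+}, a^{-1}\,da)$, so the continuous wavelet transform can be written as an inner product against the quasi-regular representation $\pi$ of $\mathrm{Aff}$ on that space, namely $\mathcal{W}_{g}f(x,a) = \langle \phi, \pi(x,a)\psi\rangle$. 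Hence the object to be rewritten is
\[
\mathrm{SCAL}_{g}f(x,a) = \left|\langle \phi, \pi(x,a)\psi\rangle\right|^{2}, \quad (x,a) \in \mathbb{R} \times \mathbb{R}_{+},
\]
and the goal is to turn the right-hand side into an affine convolution of the two Wigner distributions.

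Two structural facts drive the argument, and I would establish them first (as separate lemmas, if not already available): a \emph{Moyal-type orthogonality relation} identifying $|\langle \phi, \eta\rangle|^{2}$ with the $L^{2}_{r}(\mathrm{Aff})$-pairing $\langle W_{\mathrm{Aff}}^{\phi}, W_{\mathrm{Aff}}^{\eta}\rangle$ (the affine analogue of $\langle W_{f}, W_{h}\rangle = |\langle f, h\rangle|^{2}$), and a \emph{covariance identity} expressing $W_{\mathrm{Aff}}^{\pi(x,a)\psi}$ as the left translate of $W_{\mathrm{Aff}}^{\psi}$ by the group element $(x,a)$, which is the affine analogue of the translation covariance in the time-frequency case. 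Both should follow from the integral definition of $W_{\mathrm{Aff}}^{\psi}$ together with the explicit form of $\pi$ on $L^{2}(\mathbb{R}_{+}, a^{-1}\,da)$.

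Combining the two, I would apply the orthogonality relation with $\eta = \pi(x,a)\psi$ and then insert the covariance identity, obtaining
\[
\mathrm{SCAL}_{g}f(x,a) = \left\langle W_{\mathrm{Aff}}^{\phi}, W_{\mathrm{Aff}}^{\pi(x,a)\psi}\right\rangle_{L^{2}_{r}(\mathrm{Aff})}.
\]
Unfolding the covariant factor turns this pairing into an integral over $\mathrm{Aff}$ against the right Haar measure $a^{-1}\,da\,dx$, which is precisely the raw material for an affine convolution; the group element $(x,a)$ now enters only through a left translation inside the integrand.

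The final step is to match this integral with the affine convolution in the statement. The conjugate-and-invert structure of the translated factor in the pairing is exactly what the involution $I$ encodes, so $W_{\mathrm{Aff}}^{\psi}$ is replaced by $I(W_{\mathrm{Aff}}^{\psi})$; converting the pairing (written with the right-invariant measure) into the convolution $*_{\mathrm{Aff}}$ (naturally written with the left Haar measure) forces the modular factor $\Delta$ to appear on the remaining factor, producing $\Delta W_{\mathrm{Aff}}^{\phi}$; and a change of variables governed by the group inverse relocates the evaluation point to $(x/a, 1/a)$. The main obstacle is precisely this last bookkeeping: because $\mathrm{Aff}$ is non-unimodular, the interplay between the covariance translation, the Moyal pairing, and the convolution is what generates both $\Delta$ and $I$, and one must keep the left/right Haar discrepancy straight to land on the exact point $(x/a, 1/a)$ rather than at $(x,a)$ or its inverse.
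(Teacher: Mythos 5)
Your proposal follows essentially the same route as the paper's proof of Theorem \ref{convolution_relation_theorem}: pass to the Fourier side via Parseval so that $\mathrm{SCAL}_{g}f(x,a)=|\langle\phi,\sqrt{a}\,U(x,a)\psi\rangle|^{2}$, apply the affine orthogonality relation of Proposition \ref{affine_orthogonality_proposition} (your Moyal-type lemma), use the covariance property \eqref{invariance_property} (your covariance lemma), and finish with the involution/modular-function bookkeeping that relocates the evaluation point to $(x/a,1/a)$. The two key lemmas you isolate are exactly the ingredients the paper uses, so the proposal is correct and matches the paper's argument.
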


The affine group $\mathrm{Aff}$ and constructions on it will be thoroughly described in the sections to come. We will use the affine group time and time again to shed light on various results we derive. In particular, it will be clear that the measure used in the marginal property \eqref{Mellin_margin_in_introduction} is the correct one. \par 
We introduce an \textit{affine ambiguity function} $A_{\mathrm{Aff}}^{\psi}$ for $\psi \in L^{2}(\mathbb{R}_+, a^{-1} \, da)$ given by 
\[A_{\mathrm{Aff}}^{\psi}(x,a) := \int_{0}^{\infty} \psi\left(r \sqrt{a}\right)\overline{\psi\left(\frac{r}{\sqrt{a}}\right)}r^{-2\pi i x} \, \frac{dr}{r}, \quad (x,a) \in \mathbb{R} \times \mathbb{R}_{+}.\]
The affine ambiguity function is intimately related to the radar ambiguity function in time-frequency analysis \cite[Chapter 4.2]{grochenig2013foundations}. We will show that the affine Wigner distribution and the affine ambiguity function are related through the Mellin transform by 
\begin{equation}
\label{lemma_Mellin_introduction}
    W_{\mathrm{Aff}}^{\psi}(x,a) = \mathcal{M}_{y}^{-1} \otimes \mathcal{M}_{b}\left[\left(\frac{\sqrt{b}\log(b)}{b - 1}\right)^{2\pi i y}  A_{\mathrm{Aff}}^{\psi}(y,b)\right](x,a).
\end{equation}
The relation \eqref{lemma_Mellin_introduction} is used in the proof of Proposition \ref{Wigner_Schwartz_functions} to show that the affine Wigner distribution preserves Schwartz functions. \par It turns out that affine Wigner distributions are never analytic functions on the upper half-plane. This is a consequence of the non-existence of analytic functions in the space $L_{r}^{2}(\mathrm{Aff})$. However, the space $L_{r}^{2}(\mathrm{Aff})$ can be completely decomposed into \textquote{almost analytic} functions as the following result shows.

\begin{proposition*}
We have the orthogonal decomposition 
\begin{equation}
\label{orthogonal_decomposition_equation_introduction}
    L_{r}^{2}(\mathrm{Aff}) = \bigoplus_{n = 2}^{\infty}\mathcal{A}^{n}(\mathrm{Aff}) \oplus \mathcal{A}^{\perp,n}(\mathrm{Aff}),
\end{equation}
where $\mathcal{A}^{n}(\mathrm{Aff})$ and $\mathcal{A}^{\perp,n}(\mathrm{Aff})$ denote the spaces of pure poly-analytic and pure anti-poly-analytic functions of order $n$, respectively. In particular, there are no analytic or anti-analytic functions on the form $W_{\mathrm{Aff}}^{\psi}$ for $\psi \in L^{2}(\mathbb{R}_{+}, a^{-1} \, da)$. 
\end{proposition*}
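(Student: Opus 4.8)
The plan is to collapse the whole statement to a frequency-by-frequency computation via the Fourier transform in the first variable. Writing $z = x + ia$ and using the standard operator $\partial_{\bar z} = \tfrac{1}{2}(\partial_x + i\partial_a)$, a function $F \in L_r^2(\mathrm{Aff})$ is poly-analytic of order $n$ exactly when $\partial_{\bar z}^{\,n}F = 0$. Applying $\widehat{F}(\xi,a) = \int_{\mathbb{R}} F(x,a) e^{-2\pi i x \xi}\,dx$ in the $x$-variable turns this into the fibered ordinary differential equation $(\partial_a + 2\pi\xi)^n \widehat{F}(\xi,\cdot) = 0$ for almost every $\xi$, whose solutions are $\widehat{F}(\xi,a) = p_\xi(a)e^{-2\pi\xi a}$ with $p_\xi$ a polynomial of degree at most $n-1$. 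By Plancherel in $x$ the measure $a^{-1}\,da\,dx$ becomes $a^{-1}\,da\,d\xi$, so the problem decouples into the fibers $L^2(\mathbb{R}_+, a^{-1}\,da)$ indexed by $\xi$.

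First I would record the two effects of the weight $a^{-1}$. For $\xi > 0$ the factor $e^{-2\pi\xi a}$ decays at infinity, but near $a = 0$ one has $\int_0 |p_\xi(a)|^2 a^{-1}\,da < \infty$ only if the constant term of $p_\xi$ vanishes; for $\xi < 0$ the same analysis applies with $\partial_z$ in place of $\partial_{\bar z}$, producing anti-poly-analytic functions $\widehat{F}(\xi,a) = q_\xi(a)e^{2\pi\xi a}$. In particular $n = 1$ forces $p_\xi \equiv 0$, so there are no nonzero analytic (and, symmetrically, no anti-analytic) functions in $L_r^2(\mathrm{Aff})$; combined with $W_{\mathrm{Aff}}^{\psi} \in L_r^2(\mathrm{Aff})$ this yields the final assertion and explains why the index in \eqref{orthogonal_decomposition_equation_introduction} starts at $n = 2$.

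Next I would identify the pure spaces explicitly. After the substitution $s = 4\pi|\xi|a$, the admissible fiber functions are spanned by $\{s^j e^{-s/2}\}_{j \ge 1}$ inside $L^2(\mathbb{R}_+, s^{-1}\,ds)$, and orthogonalizing against this weight produces the associated Laguerre system $e_k(s) = s\,L_k^{(1)}(s)e^{-s/2}$, orthogonal because $\int_0^\infty s\,L_k^{(1)}(s)L_m^{(1)}(s)e^{-s}\,ds \propto \delta_{km}$. Since $e_k$ is a polynomial of degree $k+1$ vanishing at $0$ times the exponential, it is pure poly-analytic of order $k+2$; hence the fiber of $\mathcal{A}^{n}(\mathrm{Aff})$ over each $\xi > 0$ is spanned by $e_{n-2}$, and the analogous Laguerre modes over $\{\xi < 0\}$ span the fibers of $\mathcal{A}^{\perp,n}(\mathrm{Aff})$. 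Pairwise orthogonality is then automatic: distinct orders are orthogonal fiberwise by Laguerre orthogonality, while $\mathcal{A}^{n}(\mathrm{Aff})$ and any $\mathcal{A}^{\perp,m}(\mathrm{Aff})$ live over $\{\xi > 0\}$ and $\{\xi < 0\}$ respectively and so are orthogonal by Plancherel.

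The main obstacle is completeness: one must show the closed span of all the $e_k$, over both frequency signs, exhausts $L_r^2(\mathrm{Aff})$. Fiberwise this is density of $\{s^j e^{-s/2}\}_{j \ge 1}$ in $L^2(\mathbb{R}_+, s^{-1}\,ds)$. I would prove it either by invoking completeness of the associated Laguerre functions, or directly: if $h \perp s^j e^{-s/2}$ for all $j \ge 1$, then $G(s) := h(s)e^{-s/2}$ lies in $L^1(\mathbb{R}_+)$ by Cauchy--Schwarz and has vanishing moments of all orders $k \ge 0$, so its Laplace transform is analytic on $\{\operatorname{Re}(w) > -\tfrac{1}{2}\}$ with all derivatives vanishing at the interior point $w = 0$; hence it is identically zero, forcing $G \equiv 0$ and $h \equiv 0$. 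Integrating the fiberwise orthonormal bases against $d\xi$ through a direct-integral (Fubini--Plancherel) argument then assembles the global orthogonal decomposition and closes the proof.
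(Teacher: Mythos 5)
Your proposal is correct and takes essentially the same route as the paper's own proof: Fourier transform in the first variable, a fiberwise ODE whose solutions in $L^{2}(\mathbb{R}_+, a^{-1}\,da)$ force the frequency-sign splitting (poly-analytic over $\xi>0$, anti-poly-analytic over $\xi<0$) and kill the constant term near $a=0$ (hence no analytic functions and the start at $n=2$), followed by identification of the pure pieces with the $\alpha=1$ Laguerre functions $\mathcal{L}_{n-2}$. The only differences are cosmetic --- the paper packages your $\xi$-dependent rescaling $s = 4\pi|\xi|a$ into explicit unitaries $M_{\sqrt{a}}\,\mathcal{V}\,(\mathcal{F}\otimes I)\,M_{\frac{1}{\sqrt{a}}}$ --- plus you supply a moment/Laplace-transform proof of Laguerre completeness that the paper merely asserts.
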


As an application to the theory developed we consider the approximation problem of understanding, for a given $f \in L_{r}^{2}(\mathrm{Aff})$, the quantity 
\begin{equation}
\label{approximation_problem_introduction}
    \inf_{\psi \in L^{2}(\mathbb{R}_{+}, a^{-1} \, da)}\left\|f - W_{\mathrm{Aff}}^\psi\right\|_{L_{r}^{2}(\mathrm{Aff})}.
\end{equation}
Notice that \eqref{approximation_problem_introduction} measures how far $f$ is from being an affine Wigner distribution. The analogous problem in time-frequency analysis has been recently studied in \cite{ben2018wigner}. For each symbol $f \in L_{r}^{2}(\mathrm{Aff})$ there is a Hilbert-Schmidt operator $A_f$ on $L^{2}(\mathbb{R}_{+}, a^{-1} \, da)$ that is weakly defined by the relation
\begin{equation}
\label{quantization_relation_introduction}
    \left\langle A_{f}\psi,\phi \right\rangle_{L^{2}(\mathbb{R}_{+},a^{-1} \, da)} = \left\langle f,W_{\mathrm{Aff}}^{\phi,\psi}\right\rangle_{L_{r}^{2}(\mathrm{Aff})}, \quad \psi,\phi \in L^{2}(\mathbb{R}_{+},a^{-1} \, da).
\end{equation}
The following result shows that the quantity \eqref{approximation_problem_introduction} is linked to how much $A_f$ deviates from being a rank-one operator.

\begin{theorem*}
Let $f \in L_{r}^{2}(\mathrm{Aff})$ be real valued. Then (under a mild eigenvalue assumption on $A_f$) we have that \[\inf_{\psi \in L^{2}(\mathbb{R}_{+}, a^{-1} \, da)}\left\|f - W_{\mathrm{Aff}}^\psi\right\|_{L_{r}^{2}(\mathrm{Aff})} = \sqrt{\|A_f\|_{\mathcal{HS}}^2 - \|A_{f}\|_{op}^2},\] where $\|\cdot\|_{\mathcal{HS}}$ and $\|\cdot\|_{op}$ are the Hilbert-Schmidt norm and operator norm, respectively. Moreover, the precise number of distinct minimizers can be deduced from the spectrum of $A_{f}$. 
\end{theorem*}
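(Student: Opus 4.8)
The plan is to transport the whole problem from the symbol space $L_r^2(\mathrm{Aff})$ to the space $\mathcal{HS}$ of Hilbert--Schmidt operators on $L^{2}(\mathbb{R}_{+},a^{-1}\,da)$ via the quantization map $f\mapsto A_f$ of \eqref{quantization_relation_introduction}, and to solve there an explicit rank-one approximation problem. Two structural facts drive everything. First, the orthogonality (Moyal-type) relations for the affine Wigner distributions render $f\mapsto A_f$ an isometry onto $\mathcal{HS}$, so that $\|f-W_{\mathrm{Aff}}^{\psi}\|_{L_r^2(\mathrm{Aff})}=\|A_f-A_{W_{\mathrm{Aff}}^{\psi}}\|_{\mathcal{HS}}$. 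Second, evaluating \eqref{quantization_relation_introduction} at $f=W_{\mathrm{Aff}}^{\psi}=W_{\mathrm{Aff}}^{\psi,\psi}$ and using the same orthogonality relation gives $\langle A_{W_{\mathrm{Aff}}^{\psi}}\eta,\xi\rangle=\langle\eta,\psi\rangle\langle\psi,\xi\rangle$, i.e. $A_{W_{\mathrm{Aff}}^{\psi}}=\psi\otimes\psi$ is the positive rank-one operator $(\psi\otimes\psi)\eta=\langle\eta,\psi\rangle\psi$. As $\psi$ ranges over $L^{2}(\mathbb{R}_{+},a^{-1}\,da)$ these exhaust precisely the positive semidefinite operators of rank at most one, whence
\[
\inf_{\psi}\left\|f-W_{\mathrm{Aff}}^{\psi}\right\|_{L_r^2(\mathrm{Aff})}^{2}=\inf_{\substack{R\ge 0\\ \mathrm{rank}\,R\le 1}}\left\|A_f-R\right\|_{\mathcal{HS}}^{2}.
\]

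Next I would record that for real-valued $f$ the operator $A_f$ is self-adjoint: combining \eqref{quantization_relation_introduction} for the pairs $(\phi,\psi)$ and $(\psi,\phi)$ with the symmetry $\overline{W_{\mathrm{Aff}}^{\phi,\psi}}=W_{\mathrm{Aff}}^{\psi,\phi}$ yields $\langle A_f\eta,\xi\rangle=\overline{\langle A_f\xi,\eta\rangle}$. Being self-adjoint and Hilbert--Schmidt, $A_f$ is compact, so by the spectral theorem $A_f=\sum_k\lambda_k\,e_k\otimes e_k$ with real eigenvalues $\lambda_k$, an orthonormal system $\{e_k\}$, and $\|A_f\|_{\mathcal{HS}}^{2}=\sum_k\lambda_k^{2}$, while $\|A_f\|_{op}=\sup_k|\lambda_k|$ and the supremum of the Rayleigh quotient $\langle A_f v,v\rangle$ over unit vectors equals the largest eigenvalue $\lambda_{\max}$, attained on the top eigenspace.

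Now I solve the reduced problem. Writing $R=\mu\,(v\otimes v)$ with $\mu\ge 0$ and $\|v\|=1$, and using $\|v\otimes v\|_{\mathcal{HS}}=1$ together with $\langle A_f,v\otimes v\rangle_{\mathcal{HS}}=\langle A_f v,v\rangle\in\mathbb{R}$, I expand
\[
\left\|A_f-\mu\,v\otimes v\right\|_{\mathcal{HS}}^{2}=\|A_f\|_{\mathcal{HS}}^{2}-2\mu\,\langle A_f v,v\rangle+\mu^{2}.
\]
For fixed $v$ the minimizing $\mu\ge 0$ is $\mu=\max(0,\langle A_f v,v\rangle)$, and when $\langle A_f v,v\rangle\ge 0$ the value is $\|A_f\|_{\mathcal{HS}}^{2}-\langle A_f v,v\rangle^{2}$. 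Minimizing over $v$ thus amounts to maximizing $\langle A_f v,v\rangle$ over its nonnegative range, whose supremum is $\lambda_{\max}$. This is exactly where the positivity constraint $\mu\ge 0$ bites and the \emph{mild eigenvalue assumption} enters: it guarantees $\|A_f\|_{op}=\lambda_{\max}$ (the eigenvalue of largest modulus is the positive one), so the optimal value is $\|A_f\|_{\mathcal{HS}}^{2}-\|A_f\|_{op}^{2}$ and the stated formula follows after taking square roots. Without this hypothesis the answer would be governed by $\lambda_{\max}$ rather than $\|A_f\|_{op}$, since a negative dominant eigenvalue cannot be matched by the positive operator $\psi\otimes\psi$.

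Finally, for the count of minimizers: a minimizer must satisfy $\langle A_f v,v\rangle=\lambda_{\max}$ with $\mu=\lambda_{\max}$, forcing $v$ to be a unit vector of the top eigenspace and $\psi=\sqrt{\lambda_{\max}}\,v$. Since $f\mapsto A_f$ is injective and $\psi\otimes\psi=\psi'\otimes\psi'$ holds exactly when $\psi'=e^{i\theta}\psi$, distinct minimizers $W_{\mathrm{Aff}}^{\psi}$ correspond bijectively to the one-dimensional subspaces of the $\lambda_{\max}$-eigenspace: there is a unique minimizer when $\lambda_{\max}$ is simple and infinitely many otherwise, so the number is read off from the multiplicity of $\lambda_{\max}$. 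The main obstacle throughout is not any single computation but the interplay of the positivity constraint with the operator norm --- pinning down precisely the spectral hypothesis under which the constrained rank-one problem returns $\|A_f\|_{op}$ --- together with verifying the orthogonality relation that simultaneously makes $f\mapsto A_f$ an isometry and identifies $A_{W_{\mathrm{Aff}}^{\psi}}$ as a rank-one positive operator.
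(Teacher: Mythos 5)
Your proposal is correct and follows essentially the same route as the paper's proof: reduce via the quantization isometry to approximating the compact self-adjoint operator $A_f$ by positive rank-one operators $\psi\otimes\overline{\psi}$, apply the spectral theorem to get the value $\sqrt{\|A_f\|_{\mathcal{HS}}^2-\lambda_{\max}^{+}(A_f)^2}$, and then invoke the mild eigenvalue assumption $\lambda_{\max}^{+}(A_f)=\max_{\lambda\in\mathrm{Spec}(A_f)}|\lambda|$ to replace $\lambda_{\max}^{+}(A_f)$ by $\|A_f\|_{op}$, exactly as in the paper's Theorem \ref{minimizing_theorem} and Corollary \ref{Corollary_approximation}. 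If anything, your explicit expansion $\|A_f-\mu\,v\otimes v\|_{\mathcal{HS}}^{2}=\|A_f\|_{\mathcal{HS}}^{2}-2\mu\langle A_f v,v\rangle+\mu^{2}$ with the constrained optimization over $\mu\ge 0$ makes rigorous the step the paper passes over with ``clearly minimized'', and your description of the minimizers (unique when the top positive eigenvalue is simple, a continuum parametrized by the unit vectors of the top eigenspace modulo phase otherwise) is the accurate reading, whereas the paper's claim that the number of minimizers equals the multiplicity of $\lambda_{\max}^{+}(A_f)$ is imprecise for multiplicity at least two.
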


The structure of the paper is as follows: In Section \ref{sec: Preliminaries} we outline nessesary definitions and briefly review the affine group as it will be central for many of the results we develop. In Section \ref{sec: Baisc_Properties} we derive basic properties of the affine Wigner distribution.
We devote Section \ref{sec: Alternative_Descriptions} to uniting the literature and pointing out how the affine Wigner distribution can be derived by emphasizing symmetry. The convolution relation between the affine Wigner distribution and the scalogram will be proved in Section \ref{sec: Convolution_Relation}. In Section \ref{sec: Affine_ambiguity_function_section} we define the affine ambiguity function and show how this allows us to extend the affine quantization \eqref{quantization_relation_introduction} to the distributional setting. We prove the decomposition \eqref{orthogonal_decomposition_equation_introduction} of $L_{r}^{2}(\mathrm{Aff})$ in Section \ref{sec: Polyanalytic_Decomposition} and show how the Laguerre polynomials are a useful tool in our setting. In addition to the approximation problem described above, we show in Section \ref{sec: Applications} how basic questions regarding operators on $\mathbb{R}_+$ can be answered with our framework. Finally, we discuss the affine Grossmann-Royer operator and the affine positivity conjecture in Section \ref{sec: Open_Questions}. The authors are grateful for helpful suggestions from Eirik Skrettingland and Lu\'{i}s Daniel Abreu.

\section{Preliminaries}
\label{sec: Preliminaries}

The notation $\mathcal{S}(\mathbb{R}^d)$ will be used for the Schwartz space of rapidly decaying smooth functions on $\mathbb{R}^d$. Its dual space of tempered distributions is denoted by $\mathcal{S}'(\mathbb{R}^d)$. The Fourier transform of a function $f \in L^{2}(\mathbb{R}^d)$ will be denoted by \[\mathcal{F}f(\omega) = \hat{f}(\omega) := \int_{\mathbb{R}^d} f(x) e^{-2\pi i x \omega} \, dx, \quad \omega \in \mathbb{R}^d.\] \par 
We will frequently consider the space $L^{2}(\mathbb{R}_{+}) := L^{2}(\mathbb{R}_+, a^{-1} \, da)$ consisting of measurable functions $f:\mathbb{R}_+ \to \mathbb{C}$ such that \[\int_{0}^{\infty}|f(a)|^2 \, \frac{da}{a} < \infty.\] This notation is consistent with our group theoretical approach as $a^{-1} \, da$ is the Haar measure on $\mathbb{R}_+$. If we want to consider the usual Euclidean measure on $\mathbb{R}_+$, we will explicitly write $L^{2}(\mathbb{R}_+, dx)$ to avoid any confusion. We use the notation $\mathcal{S}(\mathbb{R}_+)$ for the smooth functions $\psi:\mathbb{R}_+ \to \mathbb{C}$ such that $\Phi(x) := \psi(e^{x}) \in \mathcal{S}(\mathbb{R})$. \par 
The reader is referred to Appendix \ref{Appendix_1} for notation and basic properties regarding Schatten class operators $\mathcal{S}_{p}(\mathcal{H})$ on a separable Hilbert space $\mathcal{H}$ for $1 \leq p < \infty$. In particular, the Hilbert-Schmidt operators on $\mathcal{H}$ will be denoted by $\mathcal{S}_{2}(\mathcal{H})$.

\subsection{The Classical Wigner Distribution and the Heisenberg Group}
\label{sec: subsection_Classical_Wigner}

We begin by recalling basic definitions from time-frequency analysis and their connection with the Heisenberg group. The \textit{cross-Wigner transform} $W(f,g)$ of $f, g \in L^{2}(\mathbb{R}^d)$ is defined to be 
\begin{equation*}
    W(f,g)(x,\omega) := \int_{\mathbb{R}^d}f\left(x + \frac{t}{2}\right)\overline{g\left(x - \frac{t}{2}\right)}e^{-2\pi i \omega t} \, dt, \quad (x,\omega) \in \mathbb{R}^{2d}.
\end{equation*}
Notice that the Wigner distribution $W_{f}$ given in \eqref{Wigner_distribution} is precisely the diagonal term $W(f,f)$. The cross-Wigner transform satisfies the orthogonality property 
\begin{equation}
\label{usual_Wigner_orthogonality}
    \langle W(f_1,g_1), W(f_2,g_2) \rangle_{L^{2}(\mathbb{R}^{2d})} = \langle f_1, f_2 \rangle_{L^{2}(\mathbb{R}^d)}\overline{\langle g_1, g_2 \rangle}_{L^{2}(\mathbb{R}^d)}.
\end{equation} \par 
A key feature of the Wigner distribution is its connection with the Weyl calculus: For a symbol $\sigma \in \mathcal{S}'(\mathbb{R}^{2d})$, the \textit{Weyl (pseudo-differential) operator} $L_{\sigma}$ corresponding to the symbol $\sigma$ is the operator
\begin{equation}
\label{pseudo_equation}
    L_{\sigma}f := \int_{\mathbb{R}^{2d}} e^{-\pi i \xi u}\hat{\sigma}(\xi, u) T_{-u} M_{\xi} f\, du \, d\xi.
\end{equation}
The operators $T_{-u}$ and $M_{\xi}$ in \eqref{pseudo_equation} are respectively the \textit{time-shift operator} and the \textit{frequency-shift operator} defined by \begin{equation*}
    T_{x}f(t) := f(t - x), \qquad M_{\omega}f(t) := e^{2 \pi i \omega t}f(t), \quad x, \omega, t \in \mathbb{R}^d.
\end{equation*}
The association $\sigma \mapsto L_{\sigma}$ is called the \textit{Weyl transform} and the operator $L_{\sigma}$ maps $\mathcal{S}(\mathbb{R}^d)$ into $\mathcal{S}'(\mathbb{R}^d)$ by \cite[Lemma 14.3.1]{grochenig2013foundations}. Moreover, the Weyl transform is a one-to-one correspondence between square integrable symbols $\sigma \in L^{2}(\mathbb{R}^{2d})$ and Hilbert-Schmidt operators $L_{\sigma} \in \mathcal{S}_{2}\left(L^{2}(\mathbb{R}^d)\right)$ by a classical result of Poole \cite[Proposition V.1]{pool1966mathematical}. The connection between the Weyl calculus and the cross-Wigner transform is the relation
\begin{equation*}
    \langle L_{\sigma}f,g \rangle_{L^{2}(\mathbb{R}^{d})} = \langle \sigma, W(g,f) \rangle_{L^{2}(\mathbb{R}^{2d})},
\end{equation*} 
for $\sigma \in L^{2}(\mathbb{R}^{2d})$ and $f,g \in L^{2}(\mathbb{R}^{d})$.
Since the Weyl transform is a quantization procedure, one can think of the inverse transformation $L_{\sigma} \mapsto \sigma$ for $L_{\sigma} \in \mathcal{S}_{2}\left(L^{2}(\mathbb{R}^d)\right)$ as \textit{dequantization}. In this terminology, the Wigner distribution $W_{f}$ for $f \in L^{2}(\mathbb{R}^d)$ is the dequantization of the rank-one operator 
\begin{equation}
\label{Wigner_as_dequantization}
    L_{W_f}g := \langle g, f \rangle f, \qquad g \in L^{2}(\mathbb{R}^d).
\end{equation}
The reader should consult \cite[Chapter 13]{hall2013quantum} and \cite[Chapter 4]{de2017wigner} for more details about the Weyl transform from a quantum mechanical perspective.  \par
Central to time-frequency analysis and its transforms is the representation theory of the Heisenberg group $\mathbb{H}^{2d + 1}$. We can realize the Heisenberg group $\mathbb{H}^{2d + 1}$ as $\mathbb{R}^d \times \mathbb{R}^d \times \mathbb{R}$ where the multiplication between two elements is given by
\begin{equation*}
    (x,\omega,t) \cdot_{\mathbb{H}^{2d + 1}} (x',\omega',t') := \left(x + x', \omega + \omega', t + t' + \frac{1}{2}\left(x' \omega - x \omega' \right)\right).
\end{equation*}
The Heisenberg group $\mathbb{H}^{2d+1}$ acts on the space $L^{2}(\mathbb{R}^d)$ by 
\begin{equation*}
    \pi(x,\omega,t)(f) := e^{2\pi i t}e^{\pi i x \omega}T_{x} M_{\omega} f.
\end{equation*}
We refer to $\pi$ as the \textit{Schr\"{o}dinger representation} and it is both unitary and irreducible. \par A short computation reveals that the matrix coefficients of the Schr\"{o}dinger representation is given by \[\langle f, \pi(x,\omega,t)g \rangle_{L^{2}(\mathbb{R}^d)} = e^{\pi i (x\omega - 2t)}V_{g}f(x,\omega),\] where $V_{g}f$ is the \textit{short-time Fourier transform (STFT)} given by 
\begin{equation}
\label{STFT}
    V_{g}f(x,\omega) := \langle f, M_{\omega} T_{x}g \rangle_{L^{2}(\mathbb{R}^d)} = \int_{\mathbb{R}^d}f(t)\overline{g(t-x)}e^{-2\pi i \omega t} \, dt.
\end{equation}
We have from \cite[Lemma 4.3.1]{grochenig2013foundations} that the cross-Wigner transform and the STFT is related by the formula \[W(f,g)(x,\omega) = 2^de^{4\pi ix\omega}V_{P(g)}f(2x,2\omega),\] where $P(g)(x) := g(-x)$. Thus the matrix coefficients of the Schr\"odinger representation is related to the Wigner distribution by the formula \[W_{f}(x,\omega) = 2^{d}e^{2\pi i t}\langle f, \pi(2x,2\omega,t)P(f) \rangle_{L^{2}(\mathbb{R}^d)}.\] Moreover, the Stone-von Neumann theorem \cite[Theorem 9.3.1]{grochenig2013foundations} reinforces the importance of the Schr\"odinger representation by declaring that the only infinite-dimensional, irreducible, unitary representations of the Heisenberg group are appropriate dilations of the Schr\"odinger representation. 

\subsection{Wavelet Transforms and the Affine Group}

The two main operators in time-scale analysis are the time-shift operator $T_{x}$ and the \textit{dilation operator} $D_{a}$ given by 
\begin{equation}
\label{dilation_operator}
    D_{a}f(x) := \frac{1}{\sqrt{a}}f\left(\frac{x}{a}\right),
\end{equation}
for $a > 0$ and $f \in L^{2}(\mathbb{R})$. In time-scale (or wavelet) analysis, the affine group takes the role that the Heisenberg group has in time-frequency analysis. The \textit{affine group} $\mathrm{Aff} := (\mathbb{R} \times \mathbb{R}_{+}, \cdot_{\mathrm{Aff}})$ has the group operation \[(x,a) \cdot_{\mathrm{Aff}} (y,b) := (x + ay,ab), \quad (x,a),(y,b) \in \mathrm{Aff},\] motivated by the composition rule \[(T_x D_a)(T_y D_b) = T_x T_{ay} D_a D_b = T_{x + ay}D_{ab}.\] \par 
We can represent the affine group $\mathrm{Aff}$ and its Lie algebra $\mathfrak{aff}$ as \[\mathrm{Aff} = \left\{\begin{pmatrix} a & x \\ 0 & 1\end{pmatrix}\Big| \, a> 0, \, x \in \mathbb{R}\right\}, \quad \mathfrak{aff} = \left\{\begin{pmatrix} u & v \\ 0 & 0\end{pmatrix}\Big| \, u,v \in \mathbb{R}\right\}.\] Essential for computations is the fact that the exponential map $\exp:\mathfrak{aff} \to \mathrm{Aff}$ given by \[\exp\begin{pmatrix} u & v \\ 0 & 0 \end{pmatrix} = \begin{pmatrix} e^{u} & \frac{v(e^{u} - 1)}{u} \\ 0 & 1\end{pmatrix}\] is a global diffeomorpism. The left Haar measure on $\mathrm{Aff}$ is given by $a^{-2} \, da \, dx$, while the right Haar measure is $a^{-1} \, da \, dx$. We will use the notation $L_{r}^{2}(\mathrm{Aff})$ and $L_{l}^{2}(\mathrm{Aff})$ to indicate if we are using the right or left Haar measure, respectively. The left and right Haar measures on $\mathrm{Aff}$ can be written in the coordinates induced by the exponential map as \[\frac{da \, dx}{a^2} = \frac{du \, dv}{\lambda(u)}, \qquad \frac{da \, dx}{a} = \frac{du \, dv}{\lambda(-u)},\] where the function $\lambda$ is given by \begin{equation}
\label{lambda_function}
    \lambda(u) := \frac{ue^{u}}{e^{u} - 1} = \frac{ue^{\frac{u}{2}}}{2\sinh(\frac{u}{2})}.
\end{equation}
\par 
A natural way the affine group can act on $L^{2}(\mathbb{R})$ is by translations and dilations, namely as 
\begin{equation}
\label{space_side_representation}
    f \longmapsto T_{x}D_{a}f, \qquad f \in L^{2}(\mathbb{R}).
\end{equation} This is a unitary representation, although it is not irreducible. The matrix coefficients of this representation are given by 
\begin{equation}
\label{continuous_wavelet_transform}
    \mathcal{W}_{g}f(x,a) := \langle f,T_{x}D_{a}g \rangle_{L^{2}(\mathbb{R})} = \frac{1}{\sqrt{a}}\int_{-\infty}^{\infty}f(y)\overline{g\left(\frac{y-x}{a}\right)} \, dy.
\end{equation}
One typically refer to the map $(x,a) \mapsto \mathcal{W}_{g}f(x,a)$ as the \textit{(continuous) wavelet transform} of $f$ with respect to $g$. The continuous wavelet transform is analogous to the STFT and incorporates the possibility of observing $f$ at different scales through $g$. Moreover, the magnifying aspect coming from the change of scales can characterize local regularity through decay properties of the wavelet transform, see \cite[Theorem 2.9.2]{daubechies1992ten}.

\subsection{A Quantization Approach to the Affine Wigner Distribution}
\label{section: Kirillov}
We will briefly outline a procedure described in \cite{gayral2007fourier} to determine the affine Wigner distribution. The theory is based on Kirillov's theory of coadjoint orbits and we refer further explanations to the aforementioned paper. \par
The affine group $\mathrm{Aff}$ acts on its Lie algebra $\mathfrak{aff}$ through the \textit{adjoint action} 
\begin{equation}
\label{adjoint_action_equation}
    \mathrm{Ad}_{(x,a)}(X) := \begin{pmatrix} u & av - xu \\ 0 & 0 \end{pmatrix}, \qquad X = \begin{pmatrix} u & v \\ 0 & 0 \end{pmatrix} \in \mathfrak{aff}, \, (x,a) \in \mathrm{Aff}.
\end{equation}
A representation $\Phi$ of a Lie group $G$ on a vector space $V$ is always accompanied by a representation $\Phi^{*}$ of $G$ on the dual space $V^{*}$ defined by \[\langle \Phi(g)^{*}\eta, v \rangle := \langle \eta, \Phi(g^{-1})v \rangle, \quad g \in G, \, v \in V, \, \eta \in V^{*},\] where the bracket denotes the natural pairing between $V$ and $V^{*}$. In the case of the adjoint action in \eqref{adjoint_action_equation} we denote the accompanied representation on $\mathfrak{aff}^{*}$ by $\mathrm{Ad}^{*}$ and call it the \textit{coadjoint representation} of the affine group. We can realize $\mathfrak{aff}^{*}$ as matrices on the form \[\mathfrak{aff}^{*} \simeq \left\{(x,y) := \begin{pmatrix} x & 0 \\ y & 0\end{pmatrix}\Big|\, x,y \in \mathbb{R}\right\}.\] \par Any point on the form $(x,0) \in \mathfrak{aff}^{*}$ is a fixed point for the coadjoint representation. The upper and lower half-planes \[\mathcal{H}_{+} := \left\{(x,y) \in \mathfrak{aff}^{*}\Big|\, y > 0\right\}, \qquad \mathcal{H}_{-} := \left\{(x,y) \in \mathfrak{aff}^{*}\Big|\, y < 0\right\},\] both constitute distinct orbits. For reasons of symmetry it suffices to understand the representation corresponding to $\mathcal{H}_{+}$. It is convenient to identify $\mathcal{H}_{+} \simeq \mathrm{Aff}$ as sets and use the notation $(x,a)$ for a general element in $\mathcal{H}_{+}$. From general coadjoint orbit theory \cite[Chapter 1.2]{kirillov2004lectures} it follows that $\mathrm{Aff}$ is equipped with a canonical symplectic structure. In fact, this symplectic structure is simply the right Haar measure $a^{-1} \, da \, dx$ on $\mathrm{Aff}$. \par 
The main idea of Kirillov's theory is to associate irreducible representations of the Lie group to orbits of the coadjoint representation in a one-to-one manner. A realization of the representation corresponding to $\mathcal{H}_{+}$ is given by acting on $\psi \in L^{2}(\mathbb{R}_{+})$ by 
\begin{equation}
\label{Fourier_side_action}
    U(x,a)\psi(r) := e^{2\pi i x r}\psi(ar) = \frac{1}{\sqrt{a}}M_{x}D_{\frac{1}{a}}\psi(r).
\end{equation}
The representation $U$ is (up to a normalization) the representation \eqref{space_side_representation} on the Fourier side. Define the \textit{Stratonovich} -\textit{Weyl operator} on $L^{2}(\mathbb{R}_{+})$ by the formula 
\begin{equation*}
    \Omega(x,a)\psi(r) := a \int_{\mathbb{R}^2}e^{-2\pi i (xu + av)}U\left(\frac{ve^u}{\lambda(u)},e^{u}\right)\psi(r) \, du \, dv,
\end{equation*}
where $\psi \in L^{2}(\mathbb{R}_{+})$, $(x,a) \in \mathrm{Aff}$, and $\lambda$ is the function defined in \eqref{lambda_function}. The following result is given in \cite[Corollary 4.3]{gayral2007fourier}.

\begin{proposition}
\label{quantization_isometry_proposition}
There is an isometric isomorphism between $L_{r}^{2}(\mathrm{Aff})$ and the space of Hilbert-Schmidt operators on $L^{2}(\mathbb{R}_{+})$. The isomorphism sends $f \in L_{r}^{2}(\mathrm{Aff})$ to the operator $A_{f}$ on $L^{2}(\mathbb{R}_{+})$ defined by 
\begin{equation*}
A_{f}\psi(r) := \int_{-\infty}^{\infty} \int_{0}^{\infty}f(x,a)\Omega(x,a)\psi(r) \, \frac{da \, dx}{a}.
\end{equation*}
\end{proposition}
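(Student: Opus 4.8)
The plan is to reduce $A_f$ to an explicit integral operator on $L^{2}(\mathbb{R}_{+})$ and then read off the isometry from the Plancherel theorem together with a single change of variables. All the formal manipulations below (the interchange of integrals and the appearance of a Dirac distribution from a Fourier integral) will first be carried out on the dense class of those $f$ for which the partial Fourier transform in the first variable is smooth and compactly supported and with $\psi \in \mathcal{S}(\mathbb{R}_{+})$, where Fubini and Fourier inversion apply literally; the norm identity proved there then extends $f \mapsto A_{f}$ continuously to all of $L_{r}^{2}(\mathrm{Aff})$.

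The first step is to compute $\Omega(x,a)$ explicitly. Inserting $U\bigl(ve^{u}/\lambda(u),e^{u}\bigr)\psi(r) = e^{2\pi i v e^{u} r/\lambda(u)}\psi(e^{u}r)$ into the definition of $\Omega$ and integrating in $v$ produces a Dirac distribution enforcing $e^{u}r/\lambda(u) = a$, which by $e^{u}/\lambda(u) = (e^{u}-1)/u$ reads $a = r(e^{u}-1)/u$. Carrying out the $a$-integral against this delta and then the $x$-integral in $A_{f}\psi(r) = \int_{\mathbb{R}}\int_{0}^{\infty} f(x,a)\Omega(x,a)\psi(r)\, a^{-1}\,da\,dx$ collapses everything to
\[
A_{f}\psi(r) = \int_{\mathbb{R}} \mathcal{F}_{1}f\bigl(u,\, r(e^{u}-1)/u\bigr)\,\psi(e^{u}r)\, du,
\]
where $\mathcal{F}_{1}f(u,a) := \int_{\mathbb{R}} f(x,a)e^{-2\pi i x u}\,dx$. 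The substitution $s = e^{u}r$ (so $du = s^{-1}\,ds$) rewrites this as $A_{f}\psi(r) = \int_{0}^{\infty} K_{f}(r,s)\psi(s)\, s^{-1}\,ds$ with kernel
\[
K_{f}(r,s) = \mathcal{F}_{1}f\bigl(\log(s/r),\, L(r,s)\bigr), \qquad L(r,s) := \frac{s-r}{\log s - \log r},
\]
the second argument being the logarithmic mean of $r$ and $s$.

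The isometry then follows from two standard facts. Since a Hilbert--Schmidt operator on $L^{2}(\mathbb{R}_{+})$ has $\|A_{f}\|_{\mathcal{HS}}^{2} = \int_{0}^{\infty}\int_{0}^{\infty} |K_{f}(r,s)|^{2}\, r^{-1}s^{-1}\,dr\,ds$, I pass to the logarithmic coordinates $p = \log r$, $q = \log s$ and then to $u = q-p$, $a = (e^{q}-e^{p})/(q-p)$. A direct Jacobian computation gives $\partial(u,a)/\partial(p,q) = -a$, so that $r^{-1}s^{-1}\,dr\,ds = dp\,dq = a^{-1}\,du\,da$, and $(p,q) \mapsto (u,a)$ is a smooth bijection of $\mathbb{R}^{2}$ onto $\mathbb{R}\times\mathbb{R}_{+}$ with inverse $e^{p} = au/(e^{u}-1)$, $q = p+u$. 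Combining this change of variables with Plancherel applied in the first variable yields
\[
\|A_{f}\|_{\mathcal{HS}}^{2} = \int_{0}^{\infty}\!\!\int_{\mathbb{R}} \bigl|\mathcal{F}_{1}f(u,a)\bigr|^{2}\,du\,\frac{da}{a} = \int_{0}^{\infty}\!\!\int_{\mathbb{R}} |f(x,a)|^{2}\,dx\,\frac{da}{a} = \|f\|_{L_{r}^{2}(\mathrm{Aff})}^{2}.
\]
Surjectivity comes for free: the assignment $f \mapsto \mathcal{F}_{1}f \mapsto K_{f}$ is the composition of the unitary partial Fourier transform with the measure-preserving coordinate change, hence it maps $L_{r}^{2}(\mathrm{Aff})$ unitarily onto $L^{2}\bigl(\mathbb{R}_{+}\times\mathbb{R}_{+},\, r^{-1}s^{-1}\,dr\,ds\bigr)$, which is exactly the space of Hilbert--Schmidt kernels.

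The one genuinely delicate point is the rigorous justification of the step that produces the Dirac distribution from the $v$-integral and the subsequent interchange of the $x$-, $a$-, and $u$-integrations. I would handle this by establishing the kernel formula only on the dense class described above, verifying the norm identity there, and then invoking the resulting isometry to extend $f \mapsto A_{f}$ by continuity, the kernel formula persisting in the $L^{2}$ sense. Everything else---the Jacobian, the bijectivity of the coordinate change, and the two appeals to Plancherel---is routine.
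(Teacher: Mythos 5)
Your proof is correct, but there is nothing internal to compare it against: the paper does not prove Proposition \ref{quantization_isometry_proposition} at all, it imports it wholesale from \cite[Corollary 4.3]{gayral2007fourier}. What your self-contained argument buys is a verification that is moreover consistent with everything the paper states around the proposition. Your kernel formula $K_{f}(r,s) = \mathcal{F}_{1}f\bigl(\log(s/r), L(r,s)\bigr)$ is exactly the inverse of the recovery formula recorded just after the proposition: using the identities $\lambda(u)/\lambda(-u) = e^{u}$ and $\lambda(u) - \lambda(-u) = u$, evaluation at $(r,s) = (a\lambda(u), a\lambda(-u))$ gives $K_{f}(a\lambda(u), a\lambda(-u)) = \mathcal{F}_{1}f(-u,a)$, and Fourier inversion returns the paper's formula $f(x,a) = \int_{-\infty}^{\infty} A_{K}(a\lambda(u),a\lambda(-u))e^{-2\pi i xu}\,du$. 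Likewise, your Jacobian computation $\partial(u,a)/\partial(p,q) = -a$ is precisely the (unstated) proof that the map $\Pi$ of Lemma \ref{factorization} is an isometry, so in effect you derive Proposition \ref{quantization_isometry_proposition} and Lemma \ref{factorization} in one stroke, whereas the paper takes the former as a black box and uses it to justify the latter's consequences. The one caveat is the point you yourself flag: the integral defining $\Omega(x,a)$ is not absolutely convergent (the $v$-integral is purely oscillatory), so $\Omega$ and hence $A_{f}$ are only defined distributionally, and the delta-function step is formal. Your remedy---prove the kernel identity on a dense class of nice $f$ and $\psi$, establish the norm identity there, and extend by continuity---is the standard and adequate one; a fully rigorous write-up would only need to add a sentence specifying in what sense the defining integral for $A_{f}$ is interpreted for general $f \in L_{r}^{2}(\mathrm{Aff})$ (e.g.\ weakly, as in \eqref{affine_weyl_correspondence}), since that interpretive work is where the cited reference earns its keep.
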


The association $f \mapsto A_f$ is called \textit{affine quantization}, while the direction $A_f \mapsto f$ is referred to as \textit{affine dequantization}. Moreover, we call $f$ the \textit{(affine) symbol} of $A_f$. Recall that any Hilbert-Schmidt operator $A$ on $L^{2}(\mathbb{R}_{+})$ has an associated  integral kernel $A_{K} \in L^{2}(\mathbb{R}_+ \times \mathbb{R}_+)$ so that \[A\psi(r) = \int_{0}^{\infty}A_{K}(r,s)\psi(s) \, \frac{ds}{s},\] for all $\psi \in L^{2}(\mathbb{R}_{+}).$ If $A = A_{f}$, then one can recover $f \in L_{r}^{2}(\mathrm{Aff})$ from the formula \[f(x,a) = \int_{-\infty}^{\infty}A_{K}\left(a\lambda(u),a\lambda(-u)\right)e^{-2\pi i x u} \, du.\]
Motivated by \eqref{Wigner_as_dequantization}, the affine Wigner distribution should be defined as the affine dequantization of a rank-one operator. Hence we have the following definition.

\begin{definition}
The \textit{affine cross-Wigner transform} acts on functions $\psi,\phi \in L^{2}(\mathbb{R}_{+})$ by 
\begin{align*}
W_{\mathrm{Aff}}^{\psi,\phi}(x,a) & := \int_{-\infty}^{\infty}\psi(a\lambda(u))\overline{\phi(a\lambda(-u))}e^{-2\pi i x u} \, du \\ & = \int_{-\infty}^{\infty}\psi\left(\frac{aue^u}{e^u - 1}\right)\overline{\phi\left(\frac{au}{e^{u} - 1}\right)}e^{-2\pi i x u} \, du,    
\end{align*} 
for $(x,a) \in \mathrm{Aff}$. We will refer to the diagonal $W_{\mathrm{Aff}}^{\psi} := W_{\mathrm{Aff}}^{\psi,\psi}$ as the \textit{affine Wigner distribution} of $\psi$.
\end{definition}

If $f \in L_{r}^{2}(\mathrm{Aff})$ is the symbol of the Hilbert-Schmidt operator $A_{f}$ acting on $L^{2}(\mathbb{R}_{+}),$ then 
\begin{equation}
\label{affine_weyl_correspondence}
    \left\langle A_{f}\psi,\phi \right\rangle_{L^{2}(\mathbb{R}_{+})} = \left\langle f,W_{\mathrm{Aff}}^{\phi,\psi}\right\rangle_{L_{r}^{2}(\mathrm{Aff})}, \quad \psi,\phi \in L^{2}(\mathbb{R}_{+}).
\end{equation}

\section{Basic Properties}
\label{sec: Baisc_Properties}

We begin by deriving basic properties of the affine Wigner distribution. Different approaches to the affine Wigner distribution will be discussed in Section \ref{sec: Alternative_Descriptions}. The affine Wigner distribution is fundamentally related to the transformation 
\begin{align*}
    \Pi:L^{2}(\mathbb{R}^+\times\mathbb{R}^+,(rs)^{-1} \, dr \, ds) & \to L_{r}^{2}(\mathrm{Aff}) \\ 
    \Pi(F)(u,a) & := F(a\lambda(u),a\lambda(-u)),
\end{align*}
where $\lambda$ is the function given in \eqref{lambda_function}. 

\begin{lemma}
\label{factorization}
The transformation $\Pi:L^{2}(\mathbb{R}^+\times\mathbb{R}^+,(rs)^{-1} \, dr \, ds) \to L_{r}^{2}(\mathrm{Aff})$ is an isometry. Moreover, the affine Wigner distribution $W_{\mathrm{Aff}}$ can be factorized as \[W_{\mathrm{Aff}}^{\psi,\phi}(x,a) = \mathcal{F}_1 \Pi(\psi \otimes \overline{\phi}) :=  \mathcal{F}_1 \left(\Pi(\psi \otimes \overline{\phi})(\cdot,a)\right)(x), \quad \psi,\phi \in L^{2}(\mathbb{R}_+),\] where $\mathcal{F}_1$ denotes the Fourier transform in the first component and $\psi \otimes \overline{\phi}(r,s) := \psi(r)\overline{\phi(s)}$ for $r,s \in \mathbb{R}_+$.
\end{lemma}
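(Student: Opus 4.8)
The plan is to dispatch the two assertions in turn, since the factorization is purely formal and the isometry rests on a single change of variables whose only genuine content is a Jacobian identity. For the factorization, I would simply unwind the definition of $\Pi$: since $\Pi(\psi \otimes \overline{\phi})(u,a) = \psi(a\lambda(u))\overline{\phi(a\lambda(-u))}$, applying the partial Fourier transform $\mathcal{F}_1$ in the first slot and evaluating at $x$ reproduces verbatim the integral $\int_{-\infty}^{\infty}\psi(a\lambda(u))\overline{\phi(a\lambda(-u))}e^{-2\pi i x u}\,du$, which is the definition of $W_{\mathrm{Aff}}^{\psi,\phi}(x,a)$. No estimate enters here, so this half is immediate once the notation is expanded.

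For the isometry I would study the map $\Phi(u,a) := (a\lambda(u), a\lambda(-u))$ from $\mathbb{R}\times\mathbb{R}_{+}$ to $\mathbb{R}_{+}\times\mathbb{R}_{+}$ and show it is a diffeomorphism. Positivity of $\lambda$ on all of $\mathbb{R}$ (clear from $\lambda(u)=\frac{ue^{u}}{e^{u}-1}$, with the removable value $1$ at $u=0$) guarantees that $\Phi$ lands in $\mathbb{R}_{+}\times\mathbb{R}_{+}$. Bijectivity comes from the relation $\lambda(u)/\lambda(-u)=e^{u}$: given $(r,s)$ one recovers $u=\log(r/s)$ and then $a=r/\lambda(u)$, and these inversions are smooth. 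This sets up a clean change-of-variables computation transporting the Haar measure $(rs)^{-1}\,dr\,ds$ on $\mathbb{R}_{+}\times\mathbb{R}_{+}$ to the right Haar measure on $\mathrm{Aff}$.

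The crux is the Jacobian of $\Phi$. Writing $r=a\lambda(u)$ and $s=a\lambda(-u)$, the determinant factors as $J = a\big[\lambda'(u)\lambda(-u)+\lambda(u)\lambda'(-u)\big] = a\lambda(u)\lambda(-u)\big[\frac{\lambda'(u)}{\lambda(u)}+\frac{\lambda'(-u)}{\lambda(-u)}\big]$, so everything reduces to evaluating the bracketed sum of logarithmic derivatives. Here I would exploit the identity already used above: taking logarithms of $\lambda(u)/\lambda(-u)=e^{u}$ gives $\log\lambda(u)-\log\lambda(-u)=u$, and differentiating in $u$ yields exactly $\frac{\lambda'(u)}{\lambda(u)}+\frac{\lambda'(-u)}{\lambda(-u)}=1$. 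Hence $J=a\lambda(u)\lambda(-u)$, while $rs=a^{2}\lambda(u)\lambda(-u)$, so that $\frac{|J|}{rs}\,du\,da=\frac{1}{a}\,da\,du$. The change-of-variables formula then converts $\|F\|^{2}$ over $\mathbb{R}_{+}\times\mathbb{R}_{+}$ into $\int|\Pi(F)(u,a)|^{2}\,a^{-1}\,da\,du=\|\Pi(F)\|_{L_{r}^{2}(\mathrm{Aff})}^{2}$, proving the isometry for every $F$. The main (and essentially only) obstacle is recognizing this logarithmic-derivative identity, which turns the Jacobian into precisely the discrepancy between the two Haar measures; the rest is bookkeeping.
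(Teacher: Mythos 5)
Your proof is correct. The paper states Lemma \ref{factorization} without proof, and your argument --- the diffeomorphism $(u,a)\mapsto(a\lambda(u),a\lambda(-u))$ inverted via $\lambda(u)/\lambda(-u)=e^{u}$, together with the logarithmic-derivative identity $\frac{\lambda'(u)}{\lambda(u)}+\frac{\lambda'(-u)}{\lambda(-u)}=1$ which makes the Jacobian $a\lambda(u)\lambda(-u)$ exactly cancel the discrepancy between $(rs)^{-1}\,dr\,ds$ and $a^{-1}\,da\,du$ --- is precisely the computation the authors implicitly rely on, with the factorization half being, as you say, immediate from the definitions.
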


The factorization in Lemma \ref{factorization} is key for understanding essential properties of the affine Wigner distribution. We illustrate its use by extending the orthogonality property of the classical Wigner distribution in \eqref{usual_Wigner_orthogonality} to the affine setting.

\begin{proposition}
\label{affine_orthogonality_proposition}
The affine Wigner distribution satisfies the orthogonality relation
\begin{equation}
\label{affine_orthogonality_relation_new}
    \int_{-\infty}^{\infty}\int_{0}^{\infty}W_{\mathrm{Aff}}^{\psi}(x,a)\overline{W_{\mathrm{Aff}}^{\phi}(x,a)} \, \frac{da \, dx}{a} = |\langle \psi,\phi \rangle|^{2},
\end{equation}
for $\psi,\phi \in L^{2}(\mathbb{R}_+)$.
\end{proposition}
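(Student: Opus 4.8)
The plan is to read the left-hand side of \eqref{affine_orthogonality_relation_new} as the inner product $\langle W_{\mathrm{Aff}}^{\psi}, W_{\mathrm{Aff}}^{\phi}\rangle_{L_r^2(\mathrm{Aff})}$ and to evaluate it by peeling off the two ingredients supplied by the factorization in Lemma \ref{factorization}, namely the partial Fourier transform $\mathcal{F}_1$ and the isometry $\Pi$. First I would substitute $W_{\mathrm{Aff}}^{\psi} = \mathcal{F}_1 \Pi(\psi \otimes \overline{\psi})$ and $W_{\mathrm{Aff}}^{\phi} = \mathcal{F}_1 \Pi(\phi \otimes \overline{\phi})$ into this inner product, keeping in mind that the measure on $L_r^2(\mathrm{Aff})$ is $a^{-1}\,da\,dx$.

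Next, for each fixed $a$ the integral over $x$ is an $L^2(\mathbb{R},dx)$ pairing of two partial Fourier transforms in the first variable. Since $\mathcal{F}_1$ is unitary on $L^2(\mathbb{R})$, Plancherel's theorem lets me strip it off and replace the $x$-integral by the corresponding $u$-integral of $\Pi(\psi \otimes \overline{\psi})$ against $\overline{\Pi(\phi \otimes \overline{\phi})}$. Integrating the result against the remaining factor $a^{-1}\,da$ and invoking Fubini (legitimate since all functions are square-integrable for the relevant measures) produces exactly $\langle \Pi(\psi \otimes \overline{\psi}), \Pi(\phi \otimes \overline{\phi})\rangle_{L_r^2(\mathrm{Aff})}$, the point being that the $a$-variable carries the Haar factor $a^{-1}\,da$ shared by the target space of $\Pi$, while the $\mathcal{F}_1$-variable matches the Lebesgue $dx$ required for Plancherel.

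At this stage I would use that $\Pi$ is an isometry, hence preserves inner products by polarization, to rewrite the expression as $\langle \psi \otimes \overline{\psi}, \phi \otimes \overline{\phi}\rangle$ in $L^2(\mathbb{R}^+ \times \mathbb{R}^+, (rs)^{-1}\,dr\,ds)$. The final step exploits the tensor structure together with the fact that the measure factors as $r^{-1}\,dr \cdot s^{-1}\,ds$: the integrand is $\psi(r)\overline{\phi(r)} \cdot \overline{\psi(s)}\phi(s)$, so the double integral splits as $\langle \psi,\phi\rangle_{L^2(\mathbb{R}_+)} \cdot \overline{\langle \psi,\phi\rangle}_{L^2(\mathbb{R}_+)} = |\langle \psi,\phi\rangle|^2$, which is precisely the claimed identity.

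I do not expect a serious obstacle; the entire content is carried by Lemma \ref{factorization}, and the argument is a clean chain of Plancherel, the isometry property, and a Fubini factorization. The only points requiring a little care are the bookkeeping of the measure so that the Fourier variable and the Haar variable are matched correctly, and the elementary check that $\psi \otimes \overline{\psi}$ and $\phi \otimes \overline{\phi}$ lie in the domain of $\Pi$, which holds because $\|\psi \otimes \overline{\psi}\| = \|\psi\|_{L^2(\mathbb{R}_+)}^2$. It is worth remarking that the identical computation carried out for two cross-Wigner transforms $W_{\mathrm{Aff}}^{\psi_1,\phi_1}$ and $W_{\mathrm{Aff}}^{\psi_2,\phi_2}$ yields the full Moyal-type identity $\langle \psi_1,\psi_2\rangle \overline{\langle \phi_1,\phi_2\rangle}$, in direct analogy with \eqref{usual_Wigner_orthogonality}, so proving the stated diagonal case essentially establishes the general orthogonality relation at no extra cost.
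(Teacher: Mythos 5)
Your proposal is correct and follows essentially the same route as the paper's proof: substitute the factorization $W_{\mathrm{Aff}}^{\psi} = \mathcal{F}_1 \Pi(\psi \otimes \overline{\psi})$ from Lemma \ref{factorization}, strip off $\mathcal{F}_1$ by Plancherel, use the isometry of $\Pi$, and split the resulting tensor inner product over $(rs)^{-1}\,dr\,ds$ to obtain $|\langle \psi,\phi\rangle|^2$. The only difference is that you spell out the measure-matching and Fubini details that the paper leaves implicit, and you note the easy extension to the full cross-Wigner (Moyal-type) identity.
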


\begin{proof}
We use the factorization in Lemma \ref{factorization} and obtain 
\begin{align*}
    \left\langle W_{\mathrm{Aff}}^{\psi}, W_{\mathrm{Aff}}^{\phi} \right\rangle_{L_{r}^{2}(\mathrm{Aff})} & = \left\langle \mathcal{F}_1 \Pi\left(\psi \otimes \overline{\psi}\right), \mathcal{F}_1 \Pi\left(\phi \otimes \overline{\phi}\right) \right\rangle_{L_{r}^{2}(\mathrm{Aff})} \\ & = \left\langle \Pi\left(\psi \otimes \overline{\psi}\right), \Pi\left(\phi \otimes \overline{\phi}\right) \right\rangle_{L_{r}^{2}(\mathrm{Aff})} \\ & = \left\langle \psi \otimes \overline{\psi}, \phi \otimes \overline{\phi} \right\rangle_{L^{2}(\mathbb{R}^+\times\mathbb{R}^+,(rs)^{-1} \, dr \, ds)} \\ & = |\langle \psi,\phi \rangle|^{2}.
\end{align*}
\end{proof}

We will refer to \eqref{affine_orthogonality_relation_new} as the \textit{affine orthogonality relation} motivated by the analogous result for the classical Wigner distribution in \eqref{usual_Wigner_orthogonality}. Through a different (but ultimately equivalent) approach to the affine Wigner distribution taken in \cite{bertrand1992class} and \cite{molnar2001coherent}, the affine orthogonality relation is already known. The usefulness of the affine orthogonality relation can be readily demonstrated.

\begin{corollary}
We have equality $W_{\mathrm{Aff}}^{\psi} = W_{\mathrm{Aff}}^{\phi}$ for $\psi,\phi \in L^{2}(\mathbb{R}_+)$ if and only if $\psi = c \cdot \phi$ with $|c| = 1$.
\end{corollary}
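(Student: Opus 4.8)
The plan is to reduce everything to the affine orthogonality relation of Proposition~\ref{affine_orthogonality_proposition}, which expresses inner products of affine Wigner distributions in terms of inner products of the generating functions, and then to invoke the equality case of the Cauchy--Schwarz inequality.

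For the easy direction, I would simply substitute $\psi = c\phi$ with $|c| = 1$ into the defining integral for $W_{\mathrm{Aff}}^{\psi}$. The two factors $\psi(a\lambda(u))$ and $\overline{\psi(a\lambda(-u))}$ contribute $c$ and $\overline{c}$ respectively, whose product is $|c|^2 = 1$, so that $W_{\mathrm{Aff}}^{\psi} = W_{\mathrm{Aff}}^{\phi}$ follows at once. No nontrivial analysis is needed here.

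For the forward direction, assume $W_{\mathrm{Aff}}^{\psi} = W_{\mathrm{Aff}}^{\phi}$; the case $\phi = 0$ (equivalently $\psi = 0$) is trivial, so I may assume both functions are nonzero. Applying Proposition~\ref{affine_orthogonality_proposition} with both arguments equal to $\psi$ gives $\|W_{\mathrm{Aff}}^{\psi}\|_{L_{r}^{2}(\mathrm{Aff})}^{2} = \|\psi\|^{4}$, and likewise $\|W_{\mathrm{Aff}}^{\phi}\|_{L_{r}^{2}(\mathrm{Aff})}^{2} = \|\phi\|^{4}$; equating these via the hypothesis yields $\|\psi\| = \|\phi\|$. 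Applying the orthogonality relation to the mixed pair $(\psi,\phi)$ gives $\langle W_{\mathrm{Aff}}^{\psi}, W_{\mathrm{Aff}}^{\phi}\rangle_{L_{r}^{2}(\mathrm{Aff})} = |\langle \psi,\phi \rangle|^{2}$, while the hypothesis forces the left-hand side to equal $\|W_{\mathrm{Aff}}^{\psi}\|_{L_{r}^{2}(\mathrm{Aff})}^{2} = \|\psi\|^{4} = \|\psi\|^{2}\|\phi\|^{2}$. Hence $|\langle \psi,\phi \rangle| = \|\psi\|\,\|\phi\|$, which is precisely the equality case of Cauchy--Schwarz.

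The final step is to extract the scalar: equality in Cauchy--Schwarz forces $\psi = c\phi$ for some $c \in \mathbb{C}$, and comparing norms with $\|\psi\| = \|\phi\|$ gives $|c|\,\|\phi\| = \|\phi\|$, so $|c| = 1$. I do not expect any genuine obstacle here, since the whole argument is a formal consequence of Proposition~\ref{affine_orthogonality_proposition} together with the sharpness of Cauchy--Schwarz; the only point requiring minor care is isolating the degenerate zero case before invoking the norm comparison.
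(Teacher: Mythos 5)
Your proposal is correct and follows essentially the same route as the paper: both directions rest on the affine orthogonality relation of Proposition \ref{affine_orthogonality_proposition}, deducing $\|\psi\| = \|\phi\|$ and $|\langle \psi,\phi\rangle| = \|\psi\|\,\|\phi\|$, and then invoking the equality case of Cauchy--Schwarz. Your write-up is merely a little more explicit than the paper's (isolating the zero case and spelling out the Cauchy--Schwarz step), but there is no substantive difference.
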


\begin{proof}
It is clear from the definition of $W_{\mathrm{Aff}}$ that $\psi = c \cdot \phi$ with $|c| = 1$ implies that $W_{\mathrm{Aff}}^{\psi} = W_{\mathrm{Aff}}^{\phi}$. Conversely, if we assume that $W_{\mathrm{Aff}}^{\psi} = W_{\mathrm{Aff}}^{\phi}$ then the affine orthogonality relation \eqref{affine_orthogonality_relation_new} shows that \[|\langle \psi, \phi \rangle|_{L^{2}(\mathbb{R}_+)}^2 = \|\psi\|_{L^{2}(\mathbb{R}_+)}^4 = \|\phi\|_{L^{2}(\mathbb{R}_+)}^4.\] Hence $\|\phi\|_{L^{2}(\mathbb{R}_+)} = \|\psi\|_{L^{2}(\mathbb{R}_+)}$ and $|\langle \psi, \phi \rangle|_{L^{2}(\mathbb{R}_+)} = \|\psi\|_{L^{2}(\mathbb{R}_+)} \|\phi\|_{L^{2}(\mathbb{R}_+)}$. This can only happen when $\psi = c \cdot \phi$ with $|c| = 1$.
\end{proof}

The \textit{marginal properties} \cite[Lemma 4.3.6]{grochenig2013foundations} for the classical Wigner distribution strengthen a quantum mechanical interpretation of the Wigner distribution. For the affine Wigner distribution, we need an analogue of the Fourier transform on the group $\mathbb{R}^{+}$. This is the \textit{Mellin transform} given by \[\mathcal{M}(\psi)(x) = \mathcal{M}_{a}(\psi)(x) := \int_0^\infty \psi(a)a^{-2\pi i x} \, \frac{da}{a},\] for $x \in \mathbb{R}$ and $\psi \in L^{2}(\mathbb{R}_+)$. There is little consensus regarding the exponent of $a$ in the literature and we recommend checking carefully which convention is used whenever the Mellin transform is encountered. The Mellin transform is related to the Fourier transform $\mathcal{F}$ by $\mathcal{M}(\psi)(x) = \mathcal{F}(\Psi)(x)$, where $\Psi(x) := \psi(e^x)$ for $\psi \in L^{2}(\mathbb{R}_+)$. Hence the Mellin transform is a unitary map between the Hilbert spaces $L^{2}(\mathbb{R}_+)$ and $L^{2}(\mathbb{R})$. Additionally, the inverse of the Mellin transform is given by 
\begin{equation}
\label{Mellin_inverse}
    \mathcal{M}^{-1}(f)(a) = \mathcal{M}_{x}^{-1}(f)(a) = \int_{-\infty}^{\infty}f(x)a^{2\pi i x} \, dx,
\end{equation}
for $a \in \mathbb{R}_+$ and  $f \in L^{2}(\mathbb{R})$. Finally, the Mellin transform of a dilated function can be calculated to be 
\begin{equation}
\label{Mellin_dilation}
    \mathcal{M}(D_r \psi)(x) = r^{-2 \pi i x - \frac{1}{2}}\mathcal{M}(\psi)(x).
\end{equation}

The following marginal properties have been stated in \cite{shenoy1995wide} where the proofs are referred to the unpublished Ph.D. thesis of R. G. Shenoy. We provide a new proof of this remarkable fact to fill inn gaps that are lacking in the original sources.

\begin{proposition}
\label{marginal_properties_proposition}
The affine Wigner distribution satisfies the marginal properties
\begin{equation*}
    \int_{-\infty}^{\infty} W_{\mathrm{Aff}}^{\psi}(x,a) \, dx = |\psi(a)|^2, \quad (x,a) \in \mathrm{Aff},
\end{equation*} 
\begin{equation*}
    \int_{0}^{\infty} W_{\mathrm{Aff}}^{\psi}(x,a) \, \frac{da}{a} = |\mathcal{M}(\psi)(x)|^2, \quad (x,a) \in \mathrm{Aff},
\end{equation*} 
for all $\psi \in \mathcal{S}(\mathbb{R}_+)$.
\end{proposition}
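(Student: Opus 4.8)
The plan is to treat both marginals through the one-dimensional slice $g_{a}(u) := \psi(a\lambda(u))\overline{\psi(a\lambda(-u))}$, so that the diagonal case of the factorization in Lemma \ref{factorization} reads $W_{\mathrm{Aff}}^{\psi}(x,a) = \mathcal{F}_1(g_a)(x)$. Two elementary facts about $\lambda(u) = ue^{u}/(e^{u}-1)$ drive everything: first $\lambda(0) = 1$, and second the quotient identity $\lambda(-u)/\lambda(u) = e^{-u}$, which follows from $\lambda(-u) = u/(e^{u}-1)$. I would first record that for $\psi \in \mathcal{S}(\mathbb{R}_{+})$ the slice $g_a$ lies in $\mathcal{S}(\mathbb{R})$: writing $\Psi(t) = \psi(e^{t})$, each factor is $\Psi$ evaluated at $\log a + \log\lambda(\pm u)$, and since $\log\lambda(u) \to +\infty$ like $\log u$ while $\log\lambda(-u) \to -\infty$ like $-u$ as $u \to +\infty$ (and symmetrically as $u \to -\infty$), in either tail one factor decays faster than any polynomial; smoothness is clear because $\lambda$ is real-analytic and strictly positive with a removable singularity at $u = 0$.

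For the first marginal, integrating $W_{\mathrm{Aff}}^{\psi}(x,a) = \mathcal{F}_1(g_a)(x)$ in $x$ and applying Fourier inversion at the point $u = 0$ gives
\[
\int_{-\infty}^{\infty} W_{\mathrm{Aff}}^{\psi}(x,a)\,dx = \int_{-\infty}^{\infty}\mathcal{F}_1(g_a)(x)\,dx = g_a(0) = \big|\psi(a\lambda(0))\big|^{2} = |\psi(a)|^{2},
\]
using $\lambda(0) = 1$ in the last step. The inversion is legitimate precisely because $g_a \in \mathcal{S}(\mathbb{R})$.

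For the second marginal I would interchange the order of integration and evaluate the inner $a$-integral first. Setting $h(u) := \int_{0}^{\infty}\psi(a\lambda(u))\overline{\psi(a\lambda(-u))}\,\tfrac{da}{a}$, the substitution $b = a\lambda(u)$ leaves the Haar measure $\tfrac{da}{a} = \tfrac{db}{b}$ invariant and, by the quotient identity, turns the second argument into $a\lambda(-u) = e^{-u}b$; hence $h(u) = \int_{0}^{\infty}\psi(b)\overline{\psi(e^{-u}b)}\,\tfrac{db}{b}$. Passing to $b = e^{t}$ identifies $h$ with the autocorrelation $h(u) = \int_{-\infty}^{\infty}\Psi(t)\overline{\Psi(t-u)}\,dt$ of $\Psi$. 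Since $\int_{0}^{\infty}W_{\mathrm{Aff}}^{\psi}(x,a)\,\tfrac{da}{a} = \mathcal{F}(h)(x)$, the Wiener–Khinchin identity $\mathcal{F}(h)(x) = |\mathcal{F}(\Psi)(x)|^{2}$ combined with $\mathcal{M}(\psi)(x) = \mathcal{F}(\Psi)(x)$ yields $|\mathcal{M}(\psi)(x)|^{2}$, as claimed.

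The main obstacle is the analytic bookkeeping rather than the algebra: one must justify both the Fubini interchange and the Fourier-inversion step, and the natural anchor for these is the rapid decay of $g_a$ and $h$. Absolute convergence for Fubini becomes transparent after the change of variables, since $\int_{\mathbb{R}}\int_{\mathbb{R}_{+}}|\psi(b)|\,|\psi(e^{-u}b)|\,\tfrac{db}{b}\,du = \big(\int_{\mathbb{R}}|\Psi(t)|\,dt\big)^{2} < \infty$. The one genuinely delicate point is the asymmetric, non-Schwartz growth of $\lambda$ itself, which is why I emphasize that the rapid decay of each product is inherited from whichever factor has argument tending to $0$ or to $\infty$, and not from any tameness of $\lambda$.
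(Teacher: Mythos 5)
Your proof is correct, and it splits into two cases relative to the paper. For the first marginal you do exactly what the paper does: write $W_{\mathrm{Aff}}^{\psi}(\cdot,a)$ as the Fourier transform of the slice $u \mapsto \psi(a\lambda(u))\overline{\psi(a\lambda(-u))}$, apply Fourier inversion at $u=0$, and use $\lambda(0)=1$; no difference there. For the second marginal your route is genuinely different in its mechanics. The paper stays inside the Mellin formalism: it rewrites $W_{\mathrm{Aff}}^{\psi}$ in multiplicative coordinates, applies Parseval for the Mellin transform in the $a$-variable, invokes the dilation covariance \eqref{Mellin_dilation} to extract the factors $\left(\frac{u\log u}{u-1}\right)^{2\pi i\beta}$ and $\left(\frac{\log u}{u-1}\right)^{2\pi i\beta}$ (whose ratio is $u^{2\pi i\beta}$), and then recognizes an inverse Mellin followed by a Mellin transform. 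You instead exploit the group structure directly: the Haar invariance of $\frac{da}{a}$ under the substitution $b=a\lambda(u)$ together with the identity $\lambda(-u)/\lambda(u)=e^{-u}$ collapses the inner $a$-integral to the autocorrelation $\int_{\mathbb{R}}\Psi(t)\overline{\Psi(t-u)}\,dt$ of $\Psi(t)=\psi(e^t)$, after which the ordinary convolution theorem (Wiener--Khinchin) gives $|\mathcal{F}(\Psi)(x)|^2=|\mathcal{M}(\psi)(x)|^2$. The two arguments are both Plancherel-type at bottom (your convolution theorem is the additive avatar of the paper's Mellin--Parseval step), but yours buys two things: the Fubini justification becomes completely explicit and trivial, namely $\left(\int_{\mathbb{R}}|\Psi(t)|\,dt\right)^2<\infty$, where the paper only asserts that the interchange is "easily justified"; and it makes visible that the $a$-marginal depends only on the ratio $\lambda(-u)/\lambda(u)=e^{-u}$ and not on the precise shape of $\lambda$. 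What the paper's version buys is consistency with its broader theme --- the Mellin transform as the canonical Fourier transform on $\mathbb{R}_+$ --- and it rehearses the same dilation-covariance manipulations that reappear in Lemma \ref{Mellin_factorization}. Your preliminary observation that the slice lies in $\mathcal{S}(\mathbb{R})$ (rapid decay inherited in each tail from whichever factor has argument tending to $0$ or $\infty$) is also sound and is in fact slightly more careful than the paper's one-line justification of the inversion step.
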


\begin{proof}
To show the first marginal property, we again use the factorization in Lemma \ref{factorization} and obtain
\begin{align*}
    \int_{-\infty}^{\infty} W_{\mathrm{Aff}}^{\psi}(x,a) \, dx & = \mathcal{F}_{1}^{-1}\left(W_{\mathrm{Aff}}^{\psi}\right)(0,a)  \\ & = \mathcal{F}_{1}^{-1}\left(\mathcal{F}_1 \Pi\left(\psi \otimes \overline{\psi}\right)\right)(0,a) \\ & = \Pi\left(\psi \otimes \overline{\psi}\right)(0,a) \\ & = |\psi(a)|^2.
\end{align*}
In the last step, we used that $\lambda(0) = 1$ as a straightforward limit argument shows. The validity of the pointwise convergence in the Fourier inversion step is clear from the assumption on $\psi$. \par
For the second marginal property, we utilize a change of variables in the definition of the affine Wigner distribution to get the alternative form \begin{equation*}
W_{\mathrm{Aff}}^{\psi,\phi}(x,a) = \int_0^{\infty}u^{-2\pi ix} \psi\left(a\frac{u\log(u)}{u-1}\right)\overline{\phi\left(a\frac{\log(u)}{u-1}\right)} \, \frac{du}{u}.
\end{equation*}
The isometry property of the Mellin transform can then be used to obtain 
\begin{align*}
\int_0^\infty W_{\mathrm{Aff}}^\psi(x,a) \, \frac{da}{a}&=\int_0^\infty\int_0^\infty u^{-2\pi ix}\psi\left(a\frac{u\log(u)}{u-1}\right)\overline{\psi\left(a\frac{\log(u)}{u-1}\right)} \, \frac{da \, du}{au}\\
&=\int_0^\infty\int_{-\infty}^\infty u^{-2\pi ix}\mathcal{M}_a\left(\psi\left(a\frac{u\log(u)}{u-1}\right)\right)(\beta)\overline{\mathcal{M}_a\left(\psi\left(a\frac{\log(u)}{u-1}\right)\right)(\beta)} \, \frac{d\beta \, du}{u}.
\end{align*}
By using the dilation relation \eqref{Mellin_dilation} we can transform this expression to 
\begin{align*}
\int_0^\infty W_{\mathrm{Aff}}^\psi(x,a) \, \frac{da}{a}&=\int_0^\infty\int_{-\infty}^\infty u^{-2\pi ix}\left(\frac{u\log(u)}{u-1}\right)^{2\pi i\beta}\mathcal{M}_a\left(\psi\right)(\beta)\overline{\left(\frac{\log(u)}{u-1}\right)^{2\pi i\beta}\mathcal{M}_a(\psi)(\beta)} \, \frac{d\beta \, du}{u}\\
&=\int_0^\infty\int_{-\infty}^\infty u^{-2\pi ix}u^{2\pi i\beta}|\mathcal{M}_a(\psi)(\beta)|^2 \, \frac{d\beta \, du}{u}.
\end{align*}
Finally, by using the inverse Mellin transform \eqref{Mellin_inverse} we end up with 
\begin{align*}
    \int_0^\infty W_{\mathrm{Aff}}^\psi(x,a) \, \frac{da}{a}&=\int_0^\infty u^{-2\pi ix}\mathcal{M}^{-1}_\beta(|\mathcal{M}_a(\psi)(\beta)|^2)(u) \, \frac{du}{u}\\
&=\mathcal{M}_u(\mathcal{M}^{-1}_\beta(|\mathcal{M}_a(\psi)(\beta)|^2)(u))(x) \\ & = |\mathcal{M}(\psi)(x)|^2.
\end{align*}
Interchanging the order of integration and the pointwise convergence of the Mellin transform is easily justified under the assumption that $\psi \in \mathcal{S}(\mathbb{R}_+)$.
\end{proof}

\begin{remark}
It follows from Proposition \ref{marginal_properties_proposition} that \[\int_{0}^{\infty} \int_{-\infty}^{\infty} W_{\mathrm{Aff}}^{\psi}(x,a) \, \frac{da \, dx}{a} = \int_{0}^{\infty}|\psi(a)|^2 \, \frac{da}{a} = \|\psi\|_{L^{2}(\mathbb{R}_{+})}^2,\]
for all $\psi$ in the dense subspace $\mathcal{S}(\mathbb{R}_+) \subset L^{2}(\mathbb{R}_+)$. If $\|\psi\|_{L^{2}(\mathbb{R}_+)} = 1$ and $W_{\mathrm{Aff}}^\psi$ is everywhere non-negative, then the affine Wigner distribution would be a probability density function on the upper half-plane. We will elaborate on this in Section \ref{sec: Open_Questions}. 
\end{remark}

If $\psi \in \mathcal{S}(\mathbb{R}_+)$ have compact support and $a \in \mathbb{R}_+$ is outside the support of $\psi$, then Proposition \ref{marginal_properties_proposition} shows that\[\int_{-\infty}^{\infty} W_{\mathrm{Aff}}^{\psi}(x,a) \, dx = 0.\] This extreme case can be improved with the following \textit{finite support property}.

\begin{proposition}
Assume a continuous function $\psi \in L^{2}(\mathbb{R}_{+})$ is supported in $[r,s] \subset \mathbb{R}_{+}$. Then $W_{\mathrm{Aff}}^{\psi}(x,a) = 0$ for all $x \in \mathbb{R}$ whenever $a \notin [r,s].$
\end{proposition}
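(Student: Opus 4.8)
The plan is to reduce the statement to a monotonicity property of the function $\lambda$ appearing in the definition of $W_{\mathrm{Aff}}^{\psi}$. The integrand of $W_{\mathrm{Aff}}^{\psi}(x,a)$ is
\[
\psi(a\lambda(u))\,\overline{\psi(a\lambda(-u))}\,e^{-2\pi i x u},
\]
and since $\psi$ vanishes off $[r,s]$ this is nonzero only when both $a\lambda(u) \in [r,s]$ and $a\lambda(-u) \in [r,s]$. It therefore suffices to show that, for $a \notin [r,s]$, these two membership conditions cannot hold simultaneously at any single $u \in \mathbb{R}$; the integrand then vanishes identically in $u$ and the integral is $0$ for every $x$.

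The key analytic input I would establish first is that $\lambda$ is strictly increasing on $\mathbb{R}$ with $\lambda(0) = 1$. Writing $\lambda(u) = u/(1 - e^{-u})$, a direct computation gives $\lambda'(u) = h(u)/(1 - e^{-u})^2$ with numerator $h(u) = 1 - (1+u)e^{-u}$. Since $h(0) = 0$ and $h'(u) = u e^{-u}$ changes sign from negative to positive at $u = 0$, the origin is a global minimum of $h$, so $h(u) \ge 0$ and hence $\lambda' \ge 0$. From monotonicity and $\lambda(0) = 1$ I obtain the dichotomy I actually need: for $u > 0$ one has $\lambda(-u) < 1 < \lambda(u)$, for $u < 0$ one has $\lambda(u) < 1 < \lambda(-u)$, and for $u = 0$ both values equal $1$. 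In every case $\min(\lambda(u),\lambda(-u)) \le 1 \le \max(\lambda(u),\lambda(-u))$.

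With this in hand the proposition follows by splitting into the two ways $a$ can lie outside $[r,s]$. If $a > s$, then membership of both $a\lambda(u)$ and $a\lambda(-u)$ in $[r,s]$ would force $\lambda(u) \le s/a < 1$ and $\lambda(-u) \le s/a < 1$, contradicting $\max(\lambda(u),\lambda(-u)) \ge 1$. If $a < r$, then it would force $\lambda(u) \ge r/a > 1$ and $\lambda(-u) \ge r/a > 1$, contradicting $\min(\lambda(u),\lambda(-u)) \le 1$. Thus for $a \notin [r,s]$ the integrand is zero for every $u$, giving $W_{\mathrm{Aff}}^{\psi}(x,a) = 0$ for all $x$.

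I expect the monotonicity of $\lambda$ to be the only genuine obstacle: everything else is bookkeeping on the support, but the sign analysis of $h(u) = 1 - (1+u)e^{-u}$ (equivalently, the convexity estimate $e^{u} \ge 1 + u$ applied symmetrically in $u$ and $-u$) is where the structure of the affine group enters. Continuity of $\psi$ is used only to make the pointwise support condition meaningful and plays no further role.
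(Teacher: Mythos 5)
Your proof is correct and follows essentially the same route as the paper's: both reduce the claim to the fact that $\lambda(u)$ and $\lambda(-u)$ cannot simultaneously lie in $(0,1)$ (when $a>s$) or in $(1,\infty)$ (when $a<r$), which follows from $\lambda$ being increasing with $\lambda(0)=1$. The only difference is that you explicitly verify the monotonicity of $\lambda$ via the sign analysis of $h(u) = 1-(1+u)e^{-u}$, a fact the paper simply asserts.
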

\begin{proof}
The functions $\psi(a\lambda(u))$ and $\psi(a\lambda(-u))$ are both non-zero if and only if \[\lambda(u),\lambda(-u) \in L := \left[\frac{r}{a},\frac{s}{a}\right].\] If $a > s$ then $L \subset (0,1)$. Hence it suffices to show that $\lambda(u)$ and $\lambda(-u)$ can not take values in $(0,1)$ simultaneously. This follows since $\lambda(u)$ is an increasing function that only takes values in $(0,1)$ whenever $u < 0$. If $a < r$ then $L \subset (1,\infty)$. In this case, the result follows from the fact that $\lambda(u) > 1$ if and only if $u > 0$. 
\end{proof}

\section{Alternative Descriptions}
\label{sec: Alternative_Descriptions}

Although the affine Wigner distribution was constructed rather recently, it has appeared in the literature several times in different disguises. In this section, we outline two instances of this and see how this enriches our understanding of the more subtle properties of the affine Wigner distribution. \par
Consider a function $\psi \in L^{2}(\mathbb{R}) \cap L^{2}(\mathbb{R}_+)$ that is supported on $\mathbb{R}_+$ and let $f \in L^{2}(\mathbb{R})$ be such that $\hat{f} = \psi$. The affine Wigner distribution $W_{\mathrm{Aff}}^{\phi}$ is related to the Bertrand $P := (P_{0},1)$ distribution described in \cite{papandreou1998quadratic} by the formula 
\begin{equation*}
    W_{\mathrm{Aff}}^{\psi}(x,a) = \frac{1}{a}Pf\left(a,-\frac{x}{a}\right).
\end{equation*}
Notice that the convention for the Bertrand distribution is that the first entry is a positive number, while the second is real. The distribution $P$ is called the \textit{Bertrand} $P_{0}$ \textit{distribution} and is in both the \textit{affine class} and the \textit{hyperbolic class} described in \cite{papandreou1998quadratic}. From this we can gauge several invariance properties of the affine Wigner distribution: 

\begin{itemize}
    \item The fact that $P$ is in the affine class gives the invariance properties 
    \begin{equation}
    \label{elementary_symmetries_of_affine_Wigner}
        W_{\mathrm{Aff}}^{M_{\omega}\psi}(x,a) = W_{\mathrm{Aff}}^{\psi}(x + a\omega,a), \qquad W_{\mathrm{Aff}}^{D_{r}\psi}(x,a) = \frac{1}{r}W_{\mathrm{Aff}}^{\psi}\left(x,\frac{a}{r}\right),
    \end{equation}
    where $M_{\omega}$ denotes the frequency-shift operator and $D_r$ denotes the dilation operator. These invariance properties can be summarized as 
    \begin{equation}
    \label{invariance_property}
        W_{\mathrm{Aff}}^{U(x,a)\psi}(y,b) = W_{\mathrm{Aff}}^{\psi}(y-bx, ab),
    \end{equation}
    where $U$ is the action of the affine group on $L^{2}(\mathbb{R}_+)$ given in \eqref{Fourier_side_action}.
    \item The fact that $P$ is in the hyperbolic class gives the invariance property 
    \begin{equation}
    \label{hyperbolic_invariance}
        W_{\mathrm{Aff}}^{\mathcal{H}(c,f_{r})\psi}(x,a) = W_{\mathrm{Aff}}^{\psi}(x+c,a),
    \end{equation}
    where $\mathcal{H}(c,f_{r})$ is the transformation \[\mathcal{H}(c,f_{r})\psi(r) := e^{-2\pi i c \ln\left(\frac{r}{f_{r}}\right)}\psi(r), \quad r,f_r > 0, \,  c \in \mathbb{R}.\]
    Pay attention to the fact that the \textit{positive reference frequency} $f_r$ only appears on the left-hand side of \eqref{hyperbolic_invariance}.
\end{itemize}
\par The affine Wigner distribution $W_{\mathrm{Aff}}$ can be derived in another way by emphasizing invariance properties as done in \cite{bertrand1992class} and \cite{molnar2001coherent}. From this perspective, one starts with a general quadratic distribution and require invariance under a group extension of the affine group. This will produce the distribution  \[W^{\psi}(x,a) := \int_{-\infty}^{\infty}\psi(a\lambda(u))\overline{\psi(a\lambda(-u))}e^{-2\pi iux}\mu(u) \, du,\] where $\mu(u)$ is a weight function that satisfies $\overline{\mu(u)} = \mu(-u)$. The requirement that $W^{\psi}$ satisfies the affine orthogonality relation
\begin{equation*}
\int_{-\infty}^{\infty}\int_{0}^{\infty}W^{\psi}(x,a)\overline{W^{\phi}(x,a)} \, \frac{da \, dx}{a} = |\langle \psi,\phi \rangle|^{2},
\end{equation*} 
forces $\mu \equiv 1$ so that $W^{\psi} = W_{\mathrm{Aff}}^{\psi}$. Through this description, the affine orthogonality relation would not need to be proved as it is incorporated in the construction. However, this approach conceals the quantization picture. The affine Wigner distribution $W_{\mathrm{Aff}}$ is a special case of a family of distributions that are called \textit{tomographic distributions} in \cite{bertrand1992class}. \par
We would like to mention that there have been other attempts at defining a notion of affine Wigner distribution that do not coincide with our definition. As an example, we refer the reader to \cite{gazeau2016covariant} and the recent successor paper \cite{gazeau20192d} where an affine Wigner-like quasi-probability is defined through a semi-classical quantization approach. Although this is different from the approach in \cite{gayral2007fourier} that our work is based on, it has similarities in both motivation and properties. 

\section{Affine Convolution Representation of the Scalogram}
\label{sec: Convolution_Relation}

Recall from the introduction that the classical Wigner distribution can represent the spectogram through convolution
\begin{equation}
\label{scalogram_convolution_second_time}
    \mathrm{SPEC}_{g}f(x,\omega) = W_{P(g)} * W_{f}(x,\omega) = W_{P(\hat{g})} * W_{\hat{f}}(\omega,-x),
\end{equation}
where $P(g)(x) := g(-x)$. This relation was mentioned in \cite[Eq 85]{mark1970spectral} where the Wigner distribution went under the name \textit{(instantaneous) spectrum-smoothing function}. It later appeared in \cite[Eq 4.5]{classen1970}, where it was used to show that the spectogram is a Cohen class distribution. Finally, it was put on more rigorous foundations in \cite[Proposition 1.99]{folland1989harmonic}. \par 
One could also use the classical Wigner distribution to represent the scalogram by writing 
\begin{align*}
    \mathrm{SCAL}_{g}{f}(x,a) & = \langle f,T_x D_a g \rangle_{L^{2}(\mathbb{R})} \overline{\langle f,T_x D_a g \rangle}_{L^{2}(\mathbb{R})} \\ & = \langle W_{f}, W_{T_x D_a g} \rangle_{L^{2}(\mathbb{R}^{2})} \\ & = \int_{\mathbb{R}^2}W_{f}(\tau, \xi)W_{g}\left(\frac{\tau -x}{a},a\xi\right) \, d\tau \, d\xi,
\end{align*}
where we used the orthogonality of the classical Wigner distribution given in \eqref{usual_Wigner_orthogonality}. However, even though this might look like a convolution type representation of the scalogram, it does not incorporate one of the natural measures on the affine group. Hence we need to look elsewhere to obtain a proper convolution representation for the scalogram. \par 
Before stating the result, we recall some generalities from the theory of locally compact groups applied to the affine group: The \textit{affine convolution} between two functions $f,g$ on the affine group is given (whenever it is well-defined) by 
\begin{align*}
    f *_{\mathrm{Aff}} g(x,a) & := \int_{-\infty}^{\infty}\int_{0}^{\infty}f(y,b)g\left((y,b)^{-1} \cdot_{\mathrm{Aff}} (x,a)\right) \, \frac{db \, dy}{b^2} \\ & = \int_{-\infty}^{\infty}\int_{0}^{\infty}g(y,b)f\left((x,a) \cdot_{\mathrm{Aff}} (y,b)^{-1}\right) \, \frac{db \, dy}{b}.
\end{align*}
A departure from the usual Euclidean convolution is that the affine convolution is not commutative. The \textit{modular function} $\Delta$ on any locally compact group measures the difference between the right and left Haar measure. We refer the reader to a precise definition in \cite[Chapter 2.4]{folland2016course} as we only need that the modular function on the affine group is \[\Delta(x,a) = \frac{1}{a}, \qquad (x,a) \in \mathrm{Aff}.\] Finally, the \textit{(right) involution} of a function $f$ on the affine group is given by \[I(f)(x,a) := \Delta(x,a) \overline{f((x,a)^{-1})} = \frac{1}{a}\overline{f\left(-\frac{x}{a},\frac{1}{a}\right)}, \qquad (x,a) \in \mathrm{Aff}.\]
The following convolution formula should be compared with \eqref{scalogram_convolution_second_time}.

\begin{theorem}
\label{convolution_relation_theorem}
Let $f$ and $g$ be square integrable functions on the real line and assume their Fourier transforms $\phi := \widehat{f}$ and $\psi := \widehat{g}$ are supported in $\mathbb{R}_{+}$ and are in $L^{2}(\mathbb{R}_{+})$. Then\[\mathrm{SCAL}_{g}{f}(x,a) = \left(I\left(W_{\mathrm{Aff}}^{\psi}\right) *_{\mathrm{Aff}} \Delta W_{\mathrm{Aff}}^{\phi}\right)\left(\frac{x}{a}, \frac{1}{a}\right),\]
for all $(a,x) \in \mathrm{Aff}$.
\end{theorem}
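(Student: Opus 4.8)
The plan is to transport the classical argument behind \eqref{scalogram_convolution_second_time} to the affine world, using the affine orthogonality relation of Proposition \ref{affine_orthogonality_proposition} in place of Moyal's identity and the covariance \eqref{invariance_property} in place of the Heisenberg covariance. Since $\phi=\widehat f$ and $\psi=\widehat g$ are supported in $\mathbb{R}_+$, I would first move everything to the Fourier side. By Plancherel, $\mathcal{W}_g f(x,a)=\langle\widehat f,\widehat{T_xD_ag}\rangle_{L^2(\mathbb{R})}$, and computing $\widehat{T_xD_ag}(\omega)=a^{1/2}e^{-2\pi i x\omega}\psi(a\omega)$ exhibits the wavelet transform as (a normalisation of) a matrix coefficient of the representation $U$ from \eqref{Fourier_side_action}, namely $\mathcal{W}_g f(x,a)=\langle\phi,U(-x,a)\psi\rangle$ up to the factor that converts the Plancherel (Lebesgue) pairing into the $L^2(\mathbb{R}_+)$ pairing against the Haar measure $a^{-1}\,da$. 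Reconciling the $L^2(\mathbb{R})$-inner product natural to the wavelet transform with the $L^2(\mathbb{R}_+)$-inner product in which $U$ and $W_{\mathrm{Aff}}$ live is the first point that must be handled carefully.

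With the scalogram written as $|\langle\phi,U(-x,a)\psi\rangle|^2$, I would apply the cross version of the affine orthogonality relation to promote this scalar to
\[
\mathrm{SCAL}_g f(x,a)=\big\langle W_{\mathrm{Aff}}^{\phi},\,W_{\mathrm{Aff}}^{U(-x,a)\psi}\big\rangle_{L_r^2(\mathrm{Aff})}.
\]
The covariance identity \eqref{invariance_property} then rewrites the second factor as the affine shift $W_{\mathrm{Aff}}^{U(-x,a)\psi}(\eta,c)=W_{\mathrm{Aff}}^{\psi}(\eta+cx,ac)$, so that the scalogram becomes an explicit integral of $W_{\mathrm{Aff}}^{\phi}$ against a translated and dilated copy of $W_{\mathrm{Aff}}^{\psi}$ over $\mathrm{Aff}$ with respect to the right Haar measure $a^{-1}\,da\,dx$.

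It remains to recognise this integral as the advertised affine convolution. Here I would unwind the definitions of the involution $I$, the modular function $\Delta$, and $*_{\mathrm{Aff}}$, and perform the substitution from the convolution variables to the coordinates $(\eta,c)$ above; the weight $\Delta=a^{-1}$ together with the involution is precisely what converts the left Haar measure in the definition of $*_{\mathrm{Aff}}$ into the right Haar measure of the orthogonality relation, while the inversion $(x,a)\mapsto(x/a,1/a)$ is forced by the position of $U(-x,a)\psi$ relative to $\psi$. The main obstacle is this last bookkeeping: because $\mathrm{Aff}$ is non-unimodular and $*_{\mathrm{Aff}}$ non-commutative, every substitution carries a modular factor that must be tracked exactly. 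A reassuring independent check is the fully direct route --- substitute the integral formulas for both affine Wigner distributions into $I(W_{\mathrm{Aff}}^{\psi})*_{\mathrm{Aff}}\Delta W_{\mathrm{Aff}}^{\phi}$, carry out the inner ($y$-)integration, which yields a Dirac mass collapsing the two frequency variables, and change variables through $\lambda$ using the identities $\lambda(u)-\lambda(-u)=u$ and $\lambda'(u)/\lambda(u)+\lambda'(-u)/\lambda(-u)=1$; the first simplifies the phase, and the second is exactly the Jacobian identity that matches the two measures.
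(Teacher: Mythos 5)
Your overall strategy coincides with the paper's proof: pass to the Fourier side, write the scalogram as the squared modulus of a matrix coefficient of the representation $U$ from \eqref{Fourier_side_action}, apply the affine orthogonality relation of Proposition \ref{affine_orthogonality_proposition} in place of Moyal's identity, invoke the covariance \eqref{invariance_property}, and finally unwind the definitions of $I$, $\Delta$ and $*_{\mathrm{Aff}}$. Your concluding ``direct route'' (collapsing the $y$-integral to a Dirac mass and using $\lambda(u)-\lambda(-u)=u$) is a reasonable independent check, but it is only sketched.

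The genuine gap is precisely the step you defer as a ``normalisation'': it is where all the constants and signs of the theorem come from, and your sketch is internally inconsistent about it. The paper keeps the dilation factor and writes $\mathrm{SCAL}_{g}f(x,a)=\left|\langle \phi,\sqrt{a}\,U(x,a)\psi\rangle_{L^{2}(\mathbb{R}_+)}\right|^2$, so that after orthogonality a factor $a$ sits inside the integral; that factor is exactly what combines with the modular function and the involution to yield the convolution evaluated at $(x/a,1/a)$. You instead assert $\mathrm{SCAL}_{g}f(x,a)=\langle W_{\mathrm{Aff}}^{\phi},W_{\mathrm{Aff}}^{U(-x,a)\psi}\rangle_{L_{r}^{2}(\mathrm{Aff})}$, with no factor of $a$ and with $U(-x,a)$ in place of $U(x,a)$. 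Carried through the same group bookkeeping --- using $W_{\mathrm{Aff}}^{\psi}(\eta+cx,ac)=W_{\mathrm{Aff}}^{\psi}\bigl((\eta,c)\cdot_{\mathrm{Aff}}(x,a)\bigr)$ and $(x,a)^{-1}=(-x/a,1/a)$ --- this lands on
\[
\frac{1}{a}\left(I\left(W_{\mathrm{Aff}}^{\psi}\right)*_{\mathrm{Aff}}\Delta W_{\mathrm{Aff}}^{\phi}\right)\left(-\frac{x}{a},\frac{1}{a}\right),
\]
which differs from the stated right-hand side by the factor $1/a$ and a reflection in $x$; in particular your claim that the evaluation point $(x/a,1/a)$ is ``forced'' contradicts your own choice of $U(-x,a)$. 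There is a second, deeper issue hiding in the same step: the conversion between the Plancherel (Lebesgue) pairing and the $L^{2}(\mathbb{R}_+,a^{-1}da)$ pairing is not a scalar factor at all, but the substitution $\phi(\omega)\mapsto\sqrt{\omega}\,\phi(\omega)$, $\psi(\omega)\mapsto\sqrt{\omega}\,\psi(\omega)$ (a multiplication operator); tracking it honestly changes which functions' Wigner distributions appear in the final formula. The paper's own first step elides this as well, but a blind proof must either establish the claimed identity $|\langle f,T_xD_ag\rangle_{L^{2}(\mathbb{R})}|=|\langle\phi,\sqrt{a}U(x,a)\psi\rangle_{L^{2}(\mathbb{R}_+)}|$ or flag that it fails; ``up to a factor'' does neither.
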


\begin{proof}
By using Parseval's identity and that the support of the Fourier transforms are in $\mathbb{R}_{+}$ we obtain 
\begin{align*}
    \mathrm{SCAL}_{g}{f}(x,a) & = \mathcal{W}_{g}f(x,a)\overline{\mathcal{W}_{g}f(x,a)} \\ & = \left|\langle f,T_x D_a g\rangle_{L^{2}(\mathbb{R})}\right|^2 \\ & = \left|\langle \phi,\sqrt{a}U(x,a)\psi\rangle_{L^{2}(\mathbb{R}_+)}\right|^2,
\end{align*}
where $U(x,a)$ is the action of the affine group given in \eqref{Fourier_side_action}.
The affine orthogonality relation given in Proposition \ref{affine_orthogonality_proposition} and the invariance property given in \eqref{invariance_property} together show that 
\begin{align*}
    \mathrm{SCAL}_{g}{f}(x,a) & = \int_{-\infty}^{\infty}\int_{0}^{\infty} W_{\mathrm{Aff}}^{\phi}(y,b) \cdot a \cdot W_{\mathrm{Aff}}^{U(x,a)\psi}(y,b) \, \frac{db \, dy}{b} \\ & =  \int_{-\infty}^{\infty}\int_{0}^{\infty}W_{\mathrm{Aff}}^{\phi}(y,b)\cdot a \cdot W_{\mathrm{Aff}}^{\psi}(y-bx,ab) \, \frac{db \, dy}{b} \\ & =
    \int_{-\infty}^{\infty}\int_{0}^{\infty}W_{\mathrm{Aff}}^{\phi}(y,b) \cdot ab \cdot W_{\mathrm{Aff}}^{\psi}\left((y,b)\cdot_{\mathrm{Aff}}(-x,a)\right) \, \frac{db \, dy}{b^2}.
\end{align*}
We use the involution on the affine group to write \[ab \cdot  W_{\mathrm{Aff}}^{\psi}\left((y,b)\cdot_{\mathrm{Aff}}(-x,a)\right) = I\left(W_{\mathrm{Aff}}^{\psi}\right)\left((-x,a)^{-1} \cdot_{\mathrm{Aff}} (y,b)^{-1}\right).\] Combining these observations shows that 
\begin{align*}
    \mathrm{SCAL}_{g}{f}(x,a) & = \int_{-\infty}^{\infty}\int_{0}^{\infty}\frac{W_{\mathrm{Aff}}^{\phi}(y,b)}{b} \cdot I\left(W_{\mathrm{Aff}}^{\psi}\right)\left(\left(\frac{x}{a},\frac{1}{a}\right) \cdot_{\mathrm{Aff}} (y,b)^{-1}\right) \, \frac{db \, dy}{b} \\ & = \left(I\left(W_{\mathrm{Aff}}^{\psi}\right) *_{\mathrm{Aff}} \Delta W_{\mathrm{Aff}}^{\phi}\right)\left(\frac{x}{a},\frac{1}{a}\right),
\end{align*} where $\Delta$ is the modular function on the affine group.
\end{proof}

\section{The Affine Ambiguity Function and Distributional Extension of Affine Quantization}
\label{sec: Affine_ambiguity_function_section}

The \textit{cross-ambiguity function} in time-frequency analysis of $f,g \in L^{2}(\mathbb{R})$ is defined to be 
\begin{equation*}
    A(f,g)(x,\omega) := \int_{-\infty}^{\infty}f\left(t + \frac{x}{2}\right)\overline{g\left(t - \frac{x}{2}\right)}e^{-2\pi i t \omega} \, dt, \qquad (x,\omega) \in \mathbb{R}^2.
\end{equation*}
The \textit{ambiguity function} $Af := A(f,f)$ of $f \in L^{2}(\mathbb{R})$ has been frequently used in radar applications \cite[Chapter 4.2]{grochenig2013foundations}. In the affine setting, we suggest that the \textit{affine cross-ambiguity function} of $\psi,\phi \in L^{2}(\mathbb{R}_+)$ should be the function $A_{\mathrm{Aff}}^{\psi,\phi}$ on the upper half-plane defined by 
\begin{equation*}
    A_{\mathrm{Aff}}^{\psi,\phi}(x,a) := \int_{0}^{\infty} \psi\left(r \sqrt{a}\right)\overline{\phi\left(\frac{r}{\sqrt{a}}\right)}r^{-2\pi i x} \, \frac{dr}{r}, \quad (x,a) \in \mathrm{Aff}.
\end{equation*}
Similarly as before, we call the function $A_{\mathrm{Aff}}^{\psi} := A_{\mathrm{Aff}}^{\psi, \psi}$ the \textit{affine ambiguity function}. \par 
In \cite{shenoy1995wide} the authors define a different notion of affine ambiguity function under the name \textit{wide-band ambiguity function}. Notice that the definition of $A_{\mathrm{Aff}}^{\psi,\phi}$ incorporates the Haar measure on $\mathbb{R}_{+}$ in a natural way. Moreover, we will show that our definition possesses properties that justifies the terminology \textit{affine ambiguity function}. The first statement in the following lemma is a straightforward change of variables, while the last statement is a direct consequence of \cite[Lemma 4.2.1]{grochenig2013foundations}.

\begin{lemma}
\label{lemma_affine_ambiguity_function}
   For $\psi,\phi \in L^{2}(\mathbb{R}_+)$ we define the functions $\Psi(x) := \psi(e^{x})$ and $\Phi(x) := \phi(e^{x})$ for $x \in \mathbb{R}$. Then $\Psi, \Phi \in L^{2}(\mathbb{R})$ and \[A_{\mathrm{Aff}}^{\psi,\phi}(\omega,e^x) = A(\Psi,\Phi)(x,\omega), \qquad (x,\omega) \in \mathbb{R}^2.\] Moreover, the affine ambiguity function satisfies
   \[|A_{\mathrm{Aff}}^{\psi}(x,a)| <  A_{\mathrm{Aff}}^{\psi}(0,1) = \|\psi\|_{L^{2}(\mathbb{R}_+)}^{2},\] for every $(x,a) \neq (0,1).$
\end{lemma}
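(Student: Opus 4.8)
The plan is to reduce the claim to the classical sharp bound on the radar ambiguity function via the identity $A_{\mathrm{Aff}}^{\psi}(\omega,e^{t}) = A\Psi(t,\omega)$ recorded in the first part of this lemma. First I would note that the exponential change of variables $\Psi(t) := \psi(e^{t})$ is an isometry from $L^{2}(\mathbb{R}_{+})$ onto $L^{2}(\mathbb{R})$ (the same one underlying the Mellin transform, since $\mathcal{M}\psi = \mathcal{F}\Psi$ and $\mathcal{F}$ is unitary); in particular $\|\Psi\|_{L^{2}(\mathbb{R})} = \|\psi\|_{L^{2}(\mathbb{R}_{+})}$. Writing an arbitrary point of $\mathrm{Aff}$ as $(x,a)$ with $a = e^{t}$, the identity reads $A_{\mathrm{Aff}}^{\psi}(x,a) = A\Psi(\log a, x)$, and the distinguished point $(x,a) = (0,1)$ corresponds precisely to the origin $(\log a, x) = (0,0)$. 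Hence $(x,a) \neq (0,1)$ if and only if $(\log a, x) \neq (0,0)$, and $A_{\mathrm{Aff}}^{\psi}(0,1) = A\Psi(0,0) = \|\Psi\|_{L^{2}(\mathbb{R})}^{2} = \|\psi\|_{L^{2}(\mathbb{R}_{+})}^{2}$.

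With this dictionary in place the statement is exactly the content of the classical sharp bound $|A\Psi(t,\omega)| < A\Psi(0,0) = \|\Psi\|_{L^{2}(\mathbb{R})}^{2}$ for $(t,\omega) \neq (0,0)$, which is \cite[Lemma 4.2.1]{grochenig2013foundations}; the affine statement then follows immediately by transport of structure. So the entire substance of the argument is already packaged in that cited result, and the only thing to verify carefully is the bookkeeping matching the exceptional point and the two norms, done above.

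For completeness I would indicate where the strictness actually comes from, as this is the only nonroutine ingredient. The nonstrict bound $|A_{\mathrm{Aff}}^{\psi}(x,a)| \le \|\psi\|_{L^{2}(\mathbb{R}_{+})}^{2}$ is immediate from Cauchy--Schwarz once one observes that dilation invariance of the Haar measure $r^{-1}\,dr$ gives $\|\psi(\cdot\sqrt{a})\|_{L^{2}(\mathbb{R}_{+})} = \|\psi(\cdot/\sqrt{a})\|_{L^{2}(\mathbb{R}_{+})} = \|\psi\|_{L^{2}(\mathbb{R}_{+})}$. Equality would force the pointwise proportionality $\psi(r\sqrt{a})\,r^{-2\pi i x} = \lambda\,\psi(r/\sqrt{a})$ a.e. for a unimodular scalar $\lambda$; under $r = e^{s}$ this becomes a time--frequency shift eigenvalue equation of the form $\Psi(s + \log a) = \lambda\, e^{2\pi i x s}\,\Psi(s)$ (up to an irrelevant phase absorbed into $\lambda$). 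The main obstacle, and the crux of the statement, is excluding this away from the identity: taking moduli shows $|\Psi|$ is periodic with period $\log a$, so $\log a \neq 0$ forces $\Psi \equiv 0$ by integrability over $\mathbb{R}$; and if $\log a = 0$ while $x \neq 0$, the factor $1 - \lambda e^{2\pi i x s}$ vanishes only on a discrete set, again forcing $\Psi \equiv 0$. In either case $\psi = 0$, contradicting $(x,a) \neq (0,1)$, which yields the strict inequality.
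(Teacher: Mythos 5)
Your proposal follows essentially the same route as the paper: the paper likewise treats the identity $A_{\mathrm{Aff}}^{\psi,\phi}(\omega,e^x) = A(\Psi,\Phi)(x,\omega)$ as a straightforward change of variables and obtains the strict maximum at $(0,1)$ as a direct consequence of \cite[Lemma 4.2.1]{grochenig2013foundations}, exactly as you do. Your supplementary equality-case analysis (Cauchy--Schwarz plus the periodicity/phase argument) is correct and simply re-proves the cited classical lemma in affine coordinates; the only blemish is the final phrasing --- the contradiction is with the implicit assumption $\psi \neq 0$, not with $(x,a) \neq (0,1)$.
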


From Lemma \ref{lemma_affine_ambiguity_function} we obtain most of the expected results regarding the affine cross-ambiguity function. To illustrate this, we show two essential properties.

\begin{proposition}
The affine cross-ambiguity function satisfies the orthogonality relation
\begin{equation*}
    \left\langle A_{\mathrm{Aff}}^{\psi_1, \phi_1}, A_{\mathrm{Aff}}^{\psi_2, \phi_2} \right\rangle_{L_{r}^{2}(\mathrm{Aff})} = \langle \psi_1, \psi_2 \rangle_{L^{2}(\mathbb{R}_+)} \overline{\langle \phi_1, \phi_2 \rangle}_{L^{2}(\mathbb{R}_+)},
\end{equation*}
for $\psi_1,\psi_2,\phi_1,\phi_2 \in L^{2}(\mathbb{R}_+)$.
\end{proposition}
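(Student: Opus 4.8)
The plan is to transport the entire identity to the time-frequency side using the dictionary established in Lemma \ref{lemma_affine_ambiguity_function}, where the corresponding orthogonality relation for the classical cross-ambiguity function is already available. Recall from that lemma that $A_{\mathrm{Aff}}^{\psi,\phi}(\omega, e^x) = A(\Psi,\Phi)(x,\omega)$, where $\Psi(x) = \psi(e^x)$ and $\Phi(x) = \phi(e^x)$. First I would write out the left-hand side as an integral against the right Haar measure,
\[
\left\langle A_{\mathrm{Aff}}^{\psi_1,\phi_1}, A_{\mathrm{Aff}}^{\psi_2,\phi_2}\right\rangle_{L_r^2(\mathrm{Aff})} = \int_{-\infty}^{\infty}\int_0^\infty A_{\mathrm{Aff}}^{\psi_1,\phi_1}(x,a)\,\overline{A_{\mathrm{Aff}}^{\psi_2,\phi_2}(x,a)}\,\frac{da\,dx}{a},
\]
and then perform the substitution $a = e^t$. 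The crucial point is that this substitution sends the right Haar measure $a^{-1}\,da\,dx$ to the Lebesgue measure $dt\,dx$ on $\mathbb{R}^2$, which is exactly the measure underlying the classical ambiguity function.

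After the substitution, Lemma \ref{lemma_affine_ambiguity_function} rewrites each factor as $A_{\mathrm{Aff}}^{\psi_j,\phi_j}(x,e^t) = A(\Psi_j,\Phi_j)(t,x)$, so the integral becomes the ordinary $L^2(\mathbb{R}^2)$ inner product $\langle A(\Psi_1,\Phi_1), A(\Psi_2,\Phi_2)\rangle_{L^2(\mathbb{R}^2)}$. At this point I would invoke the orthogonality relation for the classical cross-ambiguity function, namely $\langle A(\Psi_1,\Phi_1), A(\Psi_2,\Phi_2)\rangle_{L^2(\mathbb{R}^2)} = \langle \Psi_1,\Psi_2\rangle_{L^2(\mathbb{R})}\,\overline{\langle \Phi_1,\Phi_2\rangle}_{L^2(\mathbb{R})}$. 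This is the Moyal-type identity for the ambiguity function; it follows directly from the Wigner orthogonality \eqref{usual_Wigner_orthogonality}, since the cross-ambiguity function is a symplectic Fourier transform of the cross-Wigner transform and the symplectic Fourier transform is unitary on $L^2(\mathbb{R}^2)$.

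It remains to translate the two $L^2(\mathbb{R})$ inner products back to $L^2(\mathbb{R}_+)$. The same exponential substitution shows that $\psi \mapsto \Psi = \psi \circ \exp$ is unitary from $L^2(\mathbb{R}_+)$ onto $L^2(\mathbb{R})$; indeed $\langle \Psi_1,\Psi_2\rangle_{L^2(\mathbb{R})} = \int_{-\infty}^\infty \psi_1(e^x)\overline{\psi_2(e^x)}\,dx = \langle \psi_1,\psi_2\rangle_{L^2(\mathbb{R}_+)}$, and likewise for the $\Phi_j$. Substituting these equalities yields the claimed formula.

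The computation is almost entirely bookkeeping; the only genuine external input is the orthogonality relation for the classical ambiguity function, which is standard. The one point that demands a little care is verifying that the exponential substitution simultaneously normalizes the right Haar measure to Lebesgue measure and realizes the isometry $L^2(\mathbb{R}_+) \cong L^2(\mathbb{R})$, since it is precisely this double role that makes the two sides match with no stray Jacobian factors left over.
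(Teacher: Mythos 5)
Your proposal is correct and follows essentially the same route as the paper's proof: both transfer the identity to the classical setting via Lemma \ref{lemma_affine_ambiguity_function} together with the substitution $a = e^t$ (which converts the right Haar measure into Lebesgue measure), and both then invoke the Moyal-type orthogonality for the classical cross-ambiguity function, justified by its relation to the cross-Wigner transform through a unitary Fourier-type map. The only cosmetic difference is that the paper cites the explicit factorization $W(\Psi_i,\Phi_i) = \mathcal{F}\mathcal{U}A(\Psi_i,\Phi_i)$ from Gr\"ochenig, whereas you phrase the same fact in terms of the symplectic Fourier transform.
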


\begin{proof}
   Let $\Psi_i(x) := \psi_i(e^x)$ and $\Phi_i(x) := \phi_i(e^x)$ for $i = 1,2$ and $x \in \mathbb{R}$. We have by Lemma \ref{lemma_affine_ambiguity_function} that \begin{align*}
    \left\langle A_{\mathrm{Aff}}^{\psi_1, \phi_1}, A_{\mathrm{Aff}}^{\psi_2, \phi_2} \right\rangle_{L_{r}^{2}(\mathrm{Aff})} & = \int_{- \infty}^{\infty} \int_{0}^{\infty} A_{\mathrm{Aff}}^{\psi_1, \phi_1}(x,a) \overline{A_{\mathrm{Aff}}^{\psi_2, \phi_2}(x,a)} \, \frac{da \, dx}{a} \\ & = \int_{- \infty}^{\infty} \int_{0}^{\infty} A(\Psi_1, \Phi_1)(\ln(a),x) \overline{A(\Psi_2, \Phi_2)(\ln(a),x)} \, \frac{da \, dx}{a} \\ & = \int_{- \infty}^{\infty} \int_{- \infty}^{\infty} A(\Psi_1, \Phi_1)(u,x) \overline{A(\Psi_2, \Phi_2)(u,x)} \, du\,dx.
   \end{align*}
   In \cite[Lemma 4.3.4]{grochenig2013foundations} it is showed that the ambiguity function is related to the usual cross-Wigner transform by 
   \begin{equation}
   \label{ambiguity_wigner_correspondance}
       W(\Psi_i, \Phi_i) = \mathcal{F} \mathcal{U}A(\Psi_i, \Phi_i), \quad i = 1,2, 
   \end{equation}
   where $\mathcal{F}$ is the Fourier transform and $\mathcal{U}$ is the rotation $\mathcal{U}F(x,\omega) := F(\omega,-x)$ for a function $F$ on $\mathbb{R}^2$. In particular, the ambiguity function satisfies the same orthogonality properties as the cross-Wigner transform \eqref{usual_Wigner_orthogonality}. Hence we obtain that \[\left\langle A_{\mathrm{Aff}}^{\psi_1, \phi_1}, A_{\mathrm{Aff}}^{\psi_2, \phi_2} \right\rangle_{L_{r}^{2}(\mathrm{Aff})} = \langle \Psi_1, \Psi_2 \rangle_{L^{2}(\mathbb{R})} \overline{\langle \Phi_1, \Phi_2 \rangle}_{L^{2}(\mathbb{R})} = \langle \psi_1, \psi_2 \rangle_{L^{2}(\mathbb{R}_+)} \overline{\langle \phi_1, \phi_2 \rangle}_{L^{2}(\mathbb{R}_+)}.\]
   
   \vspace{-0.90cm}
   
\end{proof}
\vspace{0.2cm}

The second property is an uncertainty principle for the affine ambiguity function. Notice that if $U = U_1 \times U_2 \subset \mathrm{Aff}$ is a Borel set, then the right Haar measure $\mu_{r}(U)$ of $U$ is given by \[\mu_{r}(U) = |U_{1} \times \ln(U_2)|,\] where $| \cdot |$ is the usual Euclidean measure.

\begin{corollary}
Let $\psi \in L^{2}(\mathbb{R}_+)$ be normalized and let $U \subset \mathrm{Aff}$ be a Borel set. Assume that there is an $\epsilon > 0$ such that 
\begin{equation}
\label{uncertainty_assumption}
    \iint_{U}|A_{\mathrm{Aff}}^{\psi}(x,a)|^2 \, \frac{da \, dx}{a} \geq 1 - \epsilon.
\end{equation}
Then \[\mu_{r}(U) \geq (1 - \epsilon)^{\frac{p}{p-2}}\left(\frac{p}{2}\right)^{\frac{2}{p-2}},\] for all $p > 2$. In particular, setting $p = 4$ gives $\mu_{r}(U) \geq 2(1 - \epsilon)^2$ while letting $p$ tend to infinity shows that $\mu_{r}(U) \geq 1 - \epsilon$.
\end{corollary}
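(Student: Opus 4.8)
The plan is to combine two facts about the affine ambiguity function: a global $L^p$ bound on $|A_{\mathrm{Aff}}^{\psi}|$ with a favourable constant, and Hölder's inequality restricted to the set $U$. Throughout write $F := A_{\mathrm{Aff}}^{\psi}$ and let $d\mu_r := a^{-1}\,da\,dx$ denote the right Haar measure, so that the hypothesis \eqref{uncertainty_assumption} reads $\iint_U |F|^2 \, d\mu_r \geq 1 - \epsilon$.

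First I would establish the estimate
\[
\iint_{\mathrm{Aff}} |F(x,a)|^p \, \frac{da\,dx}{a} \leq \frac{2}{p}, \qquad p > 2.
\]
Setting $\Psi(x) := \psi(e^{x})$ gives $\|\Psi\|_{L^{2}(\mathbb{R})} = \|\psi\|_{L^{2}(\mathbb{R}_+)} = 1$, and the substitution $a = e^{u}$ converts $d\mu_r$ into the Lebesgue measure $du\,dx$. Invoking the correspondence $A_{\mathrm{Aff}}^{\psi}(x,e^{u}) = A\Psi(u,x)$ from Lemma \ref{lemma_affine_ambiguity_function} then yields
\[
\iint_{\mathrm{Aff}} |F|^p \, \frac{da\,dx}{a} = \iint_{\mathbb{R}^2} |A\Psi(u,x)|^p \, du\,dx = \|A\Psi\|_{L^{p}(\mathbb{R}^2)}^{p}.
\]
The right-hand side is controlled by Lieb's uncertainty inequality (see \cite{grochenig2013foundations}), which for $p > 2$ gives $\|A\Psi\|_{L^{p}(\mathbb{R}^2)}^{p} \leq (2/p)\,\|\Psi\|_{L^{2}(\mathbb{R})}^{2p} = 2/p$, as required.

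With this estimate in hand I would apply Hölder's inequality on $U$ with the conjugate exponents $p/2$ and $p/(p-2)$:
\[
1 - \epsilon \leq \iint_U |F|^2 \, \frac{da\,dx}{a} \leq \left(\iint_U |F|^p \, \frac{da\,dx}{a}\right)^{2/p} \mu_r(U)^{(p-2)/p} \leq \left(\frac{2}{p}\right)^{2/p} \mu_r(U)^{(p-2)/p}.
\]
Isolating $\mu_r(U)$ and raising to the power $p/(p-2)$ gives $\mu_r(U) \geq (1-\epsilon)^{p/(p-2)}(p/2)^{2/(p-2)}$. The two special cases are then immediate: $p = 4$ gives $2(1-\epsilon)^2$, while letting $p \to \infty$ uses $(p/2)^{2/(p-2)} \to 1$ and $p/(p-2) \to 1$ to recover $\mu_r(U) \geq 1 - \epsilon$.

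The crux is obtaining the sharp constant $2/p$ in the $L^p$ estimate. The elementary pointwise bound $|F| \leq 1$ from Lemma \ref{lemma_affine_ambiguity_function} together with $\|F\|_{L_r^{2}(\mathrm{Aff})} = 1$ only produces $\iint_{\mathrm{Aff}}|F|^p \, d\mu_r \leq 1$, which would degrade the conclusion to $\mu_r(U) \geq (1-\epsilon)^{p/(p-2)}$ and forfeit the gain $(p/2)^{2/(p-2)} > 1$. Recovering this gain is exactly what Lieb's inequality supplies, and the reduction in the first step shows that the affine problem inherits it verbatim from the Euclidean ambiguity function.
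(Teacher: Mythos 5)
Your proposal is correct and takes essentially the same route as the paper: both transfer the affine ambiguity function to the Euclidean one through the substitution $a = e^{u}$ and Lemma \ref{lemma_affine_ambiguity_function}, and then the concentration bound comes from Lieb's inequality. The only difference is that you carry out the Lieb-plus-H\"older argument explicitly on $U$, whereas the paper passes to the STFT via $|A\Psi| = |V_{\Psi}\Psi|$ and cites the packaged weak uncertainty principle \cite[Theorem 3.3.3]{grochenig2013foundations}, whose internal proof is exactly your H\"older step.
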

\begin{proof}
   Notice that the assumption \eqref{uncertainty_assumption} is by Lemma \ref{lemma_affine_ambiguity_function} equivalent to \[\int_{U_1}\int_{\ln(U_2)} |A\Psi(u,x)|^2 \, du \, dx \geq 1 - \epsilon,\]
   where $\Psi(x) := \psi(e^x)$. We can write $A\Psi(u,x) = e^{\pi i u x}V_{\Psi}\Psi(u,x)$, where $V$ is the STFT given in \eqref{STFT}. The assumption \[\int_{U_1}\int_{\ln(U_2)} |V_{\Psi}{\Psi}(u,x)|^2 \, du \, dx \geq 1 - \epsilon\] implies by Lieb's uncertainty principle \cite[Theorem 3.3.3]{grochenig2013foundations} that we have \[\mu_{r}(U) = |U_1 \times \ln(U_2)| \geq (1 - \epsilon)^{\frac{p}{p-2}}\left(\frac{p}{2}\right)^{\frac{2}{p-2}}, \qquad p > 2.\]
   \vspace{-1.25cm}
   
\end{proof}
\vspace{0.2cm}

We need to relate the affine ambiguity function to the affine Wigner distribution. Define the function \[\Theta(y,b) := \left(\frac{\sqrt{b}\log(b)}{b - 1}\right)^{2\pi i y},\] for $y \in \mathbb{R}$ and $b > 0$ with the convention that $\Theta(y,1) = 1$ for all $y \in \mathbb{R}$. If we write $b = e^{u}$ for $u = \log(b)$, then \[\frac{\sqrt{b}\log(b)}{b - 1} = \sqrt{\lambda(u)\lambda(-u)},\] where $\lambda$ is the function given in \eqref{lambda_function}. Hence we can think of $\Theta(y,b)$ as arising from a symmetrization of the function $\lambda$. We leave the verification of the following result to the reader as it is straightforward.

\begin{lemma}
\label{Mellin_factorization}
   For $\psi,\phi \in L^{2}(\mathbb{R}_+)$ we have the equality
   \begin{equation*}
       W_{\mathrm{Aff}}^{\psi,\phi}(x,a) = \mathcal{M}_{y}^{-1} \otimes \mathcal{M}_{b}\left[\Theta(y,b) \cdot A_{\mathrm{Aff}}^{\psi,\phi}(y,b)\right](x,a),
   \end{equation*}
   where $(x,a) \in \mathrm{Aff}$ and $\mathcal{M}$ is the Mellin transform.
\end{lemma}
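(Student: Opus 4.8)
The plan is to evaluate the right-hand side directly, applying the two one-dimensional Mellin transforms in turn and using the dilation rule \eqref{Mellin_dilation} to absorb the unimodular factor $\Theta$. Before starting, I would record the two elementary identities
$$\frac{\sqrt{b}\log(b)}{b-1}\cdot\sqrt{b} = \lambda(\log b), \qquad \frac{\sqrt{b}\log(b)}{b-1}\cdot\frac{1}{\sqrt{b}} = \lambda(-\log b),$$
which rewrite the symmetric quantity $c_b := \frac{\sqrt{b}\log(b)}{b-1} = \sqrt{\lambda(\log b)\lambda(-\log b)}$ in terms of the function $\lambda$ from \eqref{lambda_function}. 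These are exactly what convert the symmetric arguments appearing in $A_{\mathrm{Aff}}^{\psi,\phi}$ into the asymmetric arguments $a\lambda(\pm\log b)$ appearing in $W_{\mathrm{Aff}}^{\psi,\phi}$.

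The computation then proceeds in three steps, reading the tensor product $\mathcal{M}_y^{-1}\otimes\mathcal{M}_b$ so that $\mathcal{M}_y^{-1}$ acts in the frequency slot $y$ and $\mathcal{M}_b$ in the scaling slot $b$, producing the $a$- and $x$-dependence respectively. First, for fixed $b$ the definition of the affine cross-ambiguity function exhibits $A_{\mathrm{Aff}}^{\psi,\phi}(\cdot,b)$ as the Mellin transform in $r$ of the function $h_b(r):=\psi(r\sqrt{b})\overline{\phi(r/\sqrt{b})}$. Since $\Theta(y,b)=c_b^{\,2\pi i y}$, the dilation rule \eqref{Mellin_dilation} lets me absorb this factor into a dilation of the argument, giving $\Theta(y,b)\,A_{\mathrm{Aff}}^{\psi,\phi}(y,b)=\mathcal{M}_r\big[h_b(c_b\,\cdot)\big](y)$. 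Applying $\mathcal{M}_y^{-1}$ then inverts this Mellin transform and recovers $h_b(c_b\,\cdot)$ evaluated at the new scaling variable $a$; using the two identities above, $h_b(c_b a)=\psi(a\lambda(\log b))\overline{\phi(a\lambda(-\log b))}$.

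It remains to apply $\mathcal{M}_b$ in the variable $b$, which produces $\int_0^\infty \psi(a\lambda(\log b))\overline{\phi(a\lambda(-\log b))}\,b^{-2\pi i x}\,\frac{db}{b}$. The substitution $b=e^{v}$, under which $b^{-2\pi i x}=e^{-2\pi i x v}$ and $\frac{db}{b}=dv$, turns this into exactly the defining integral of $W_{\mathrm{Aff}}^{\psi,\phi}(x,a)$, completing the identification.

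The main obstacle is bookkeeping rather than analysis: one must keep straight that the inverse Mellin transform in the frequency slot $y$ produces the scaling variable $a$, while the forward Mellin transform in the scaling slot $b$ produces the frequency variable $x$, a crossing of roles that is easy to misread. The only genuine analytic point is justifying the interchange of the two transforms (equivalently, Fubini together with pointwise Mellin inversion as in \eqref{Mellin_inverse}); this is transparent for $\psi,\phi\in\mathcal{S}(\mathbb{R}_+)$, where all integrals converge absolutely, and the general case in $L^{2}(\mathbb{R}_+)$ follows by density, since both sides depend continuously on $\psi$ and $\phi$ in the $L^{2}$-norm (the left-hand side by the isometric factorization of Lemma \ref{factorization}, and the right-hand side by unitarity of the Mellin transform).
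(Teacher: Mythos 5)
Your proposal is correct, and it supplies precisely the verification the paper omits: the paper explicitly leaves this lemma to the reader as ``straightforward,'' after recording the identity $\frac{\sqrt{b}\log(b)}{b-1} = \sqrt{\lambda(u)\lambda(-u)}$ (with $b = e^{u}$) that your two elementary identities $c_b\sqrt{b} = \lambda(\log b)$ and $c_b/\sqrt{b} = \lambda(-\log b)$ unpack. Your chain of steps --- reading $A_{\mathrm{Aff}}^{\psi,\phi}(\cdot,b)$ as a Mellin transform, absorbing $\Theta$ via the dilation rule \eqref{Mellin_dilation}, inverting in $y$ to produce the $a$-dependence, then taking $\mathcal{M}_b$ and substituting $b = e^{v}$ to recover the defining integral of $W_{\mathrm{Aff}}^{\psi,\phi}$ --- is exactly the intended computation, and your closing density argument (isometry of both sides on $L^{2}(\mathbb{R}_+)\times L^{2}(\mathbb{R}_+)$, via Lemma \ref{factorization} on the left and unitarity of the Mellin transform on the right) correctly handles the passage from $\mathcal{S}(\mathbb{R}_+)$ to general $L^{2}$ functions, where the identity holds in $L_{r}^{2}(\mathrm{Aff})$.
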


We say that a smooth function $f$ on the affine group $\mathrm{Aff}$ is \textit{rapidly decaying} if \[(x,\omega) \longmapsto f(x,e^{\omega}) \in \mathcal{S}(\mathbb{R}^2).\]
The space of rapidly decaying smooth functions on $\mathrm{Aff}$ will be denoted by $\mathcal{S}(\mathrm{Aff})$. The following result illustrates how we can use the Mellin transform and the affine ambiguity function to deduce properties of the affine Wigner distribution.

\begin{proposition}
\label{Wigner_Schwartz_functions}
For $\psi,\phi \in \mathcal{S}(\mathbb{R}_+)$ the affine Wigner distribution satisfies $ W_{\mathrm{Aff}}^{\psi,\phi} \in \mathcal{S}(\mathrm{Aff})$.
\end{proposition}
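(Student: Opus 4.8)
The plan is to read off the result from the Mellin factorization of Lemma~\ref{Mellin_factorization} after transporting everything to logarithmic coordinates, where the Mellin transforms become Fourier transforms and the Schwartz class is preserved. First I would record that $\psi,\phi \in \mathcal{S}(\mathbb{R}_+)$ means precisely that $\Psi(x) := \psi(e^x)$ and $\Phi(x) := \phi(e^x)$ lie in $\mathcal{S}(\mathbb{R})$. The classical cross-ambiguity function maps $\mathcal{S}(\mathbb{R}) \times \mathcal{S}(\mathbb{R})$ into $\mathcal{S}(\mathbb{R}^2)$ (for instance via the correspondence \eqref{ambiguity_wigner_correspondance} with the cross-Wigner transform, since the Fourier transform and the rotation $\mathcal{U}$ preserve $\mathcal{S}(\mathbb{R}^2)$), so $A(\Psi,\Phi) \in \mathcal{S}(\mathbb{R}^2)$. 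The identity $A_{\mathrm{Aff}}^{\psi,\phi}(\omega,e^x) = A(\Psi,\Phi)(x,\omega)$ from Lemma~\ref{lemma_affine_ambiguity_function} then gives, after swapping the two coordinates, that $(x,\omega) \mapsto A_{\mathrm{Aff}}^{\psi,\phi}(x,e^\omega)$ is Schwartz; that is, $A_{\mathrm{Aff}}^{\psi,\phi} \in \mathcal{S}(\mathrm{Aff})$.

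Next I would reduce the two Mellin operations appearing in Lemma~\ref{Mellin_factorization} to Fourier transforms. Since the Mellin transform is a Fourier transform in logarithmic coordinates, $\mathcal{M}(\eta) = \mathcal{F}(\eta \circ \exp)$, the substitution $b = e^\omega$ turns $\mathcal{M}_b$ into a Fourier transform in $\omega$, while the substitution $a = e^\sigma$ turns $\mathcal{M}_y^{-1}$ into an inverse Fourier transform from $y$ to $\sigma$. Hence, in the logarithmic coordinates $(y,\omega)$ on the source and $(x,\sigma)$ on the target, the operator $\mathcal{M}_y^{-1} \otimes \mathcal{M}_b$ is a composition of a partial Fourier transform and a partial inverse Fourier transform in distinct variables, and is therefore a bijection of $\mathcal{S}(\mathbb{R}^2)$ onto itself. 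It thus suffices to prove that $\Theta(y,b) \cdot A_{\mathrm{Aff}}^{\psi,\phi}(y,b)$ belongs to $\mathcal{S}(\mathrm{Aff})$, which by the previous step reduces to showing that multiplication by $\Theta$ preserves $\mathcal{S}(\mathrm{Aff})$.

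The main obstacle is precisely this last point: controlling the unimodular multiplier $\Theta$. In logarithmic coordinates one has $\Theta(y,e^\omega) = e^{2\pi i y\, h(\omega)}$, where $h(\omega) := \frac{\omega}{2} + \log\frac{\omega}{e^\omega - 1}$; note that $\frac{\omega}{e^\omega-1}$ is positive and tends to $1$ as $\omega \to 0$. I would verify three properties of $h$: it extends smoothly across $\omega = 0$ (the apparent singularities of $\log b$, $\sqrt{b}$ and $b-1$ cancel), it grows at most linearly, with $h(\omega) \sim -|\omega|/2$ as $|\omega| \to \infty$, and all its derivatives $h^{(k)}$ with $k \geq 1$ are bounded; indeed a direct computation gives $h'(\omega) = \frac{1}{2} + \frac{1}{\omega} - \frac{e^\omega}{e^\omega-1}$, which is smooth (bounded near $0$ and tending to $\mp\tfrac12$ as $\omega \to \pm\infty$), and the higher derivatives decay. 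These properties place $e^{2\pi i y h(\omega)}$ in the space $\mathcal{O}_M(\mathbb{R}^2)$ of smooth functions with polynomially bounded derivatives: each $\partial_y$ produces a factor $h(\omega)$ of linear growth, while each $\partial_\omega$ produces $2\pi i y\, h'(\omega)$ together with lower-order terms that are polynomial in $(y,\omega)$. Since $\mathcal{O}_M(\mathbb{R}^2)$ is exactly the space of multipliers of $\mathcal{S}(\mathbb{R}^2)$, the Leibniz rule shows that multiplying the Schwartz function $(y,\omega) \mapsto A_{\mathrm{Aff}}^{\psi,\phi}(y,e^\omega)$ by $\Theta$ yields another Schwartz function. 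Combining this with the Fourier bijection of the previous step gives $W_{\mathrm{Aff}}^{\psi,\phi} \in \mathcal{S}(\mathrm{Aff})$.
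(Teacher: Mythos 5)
Your proof is correct and takes essentially the same route as the paper's: it combines Lemma \ref{Mellin_factorization}, the fact that the classical ambiguity function maps Schwartz functions to Schwartz functions, the multiplier property of $\Theta$, and the identification of the Mellin transform with the Fourier transform in logarithmic coordinates. The only difference is that you verify in detail (via the function $h$ and its derivatives) the claim that $\Theta$ is a smooth function with polynomially bounded derivatives, which the paper's proof simply asserts.
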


\begin{proof}
If we let $\Psi(x) := \psi(e^x)$ and $\Phi(x) := \phi(e^{x})$, then by Lemma \ref{lemma_affine_ambiguity_function} and Lemma \ref{Mellin_factorization} we want to show that \[(x,\omega) \longmapsto \mathcal{M}_{y}^{-1} \otimes \mathcal{M}_{b}\left[\Theta(y,b) \cdot A^{\Psi,\Phi}(\log(b),y)\right](x,e^{\omega}) \in \mathcal{S}(\mathbb{R}^2).\]
It follows from \cite[Theorem 11.2.5]{grochenig2013foundations} that the usual ambiguity function sends Schwartz functions on $\mathbb{R}$ to Schwartz functions on $\mathbb{R}^2$. Hence $A(y,b) := A^{\Psi,\Phi}(\log(b),y) \in \mathcal{S}(\mathrm{Aff})$. Since $\Theta(y,b)$ is a smooth function with polynomially bounded derivatives, the same goes for the product $\Theta(y,b) \cdot A(y, b)$. Recall that the Mellin transform is related to the Fourier transform by the formula $\mathcal{M}(\psi)(x) = \mathcal{F}(\Psi)(x)$ for $x \in \mathbb{R}$. Thus the result follows from the fact that the Fourier transform preserves Schwartz functions. 
\end{proof}

The dual space of $\mathcal{S}(\mathrm{Aff})$ will be denoted by $\mathcal{S}'(\mathrm{Aff})$ and called the \textit{tempered distributions on the affine group}. The following is now a direct consequence of \eqref{affine_weyl_correspondence} and Proposition \ref{Wigner_Schwartz_functions}.

\begin{corollary}
\label{extend_affine_quantization_to_tempered}
The affine quantization $f \mapsto A_{f}$ extends to a well-defined map from $f \in \mathcal{S}'(\mathrm{Aff})$ to operators $A_{f}:\mathcal{S}(\mathbb{R}_+) \to \mathcal{S}'(\mathbb{R}_+)$.
\end{corollary}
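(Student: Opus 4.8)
The plan is to \emph{define} $A_f$ for $f\in\mathcal{S}'(\mathrm{Aff})$ by dualizing the weak correspondence \eqref{affine_weyl_correspondence}, using Proposition \ref{Wigner_Schwartz_functions} to guarantee that the objects being paired against $f$ are genuine test functions on the affine group. Concretely, for $\psi\in\mathcal{S}(\mathbb{R}_+)$ I would declare $A_f\psi$ to be the functional on $\mathcal{S}(\mathbb{R}_+)$ given by
\[
\langle A_f\psi,\phi\rangle := \big\langle f, W_{\mathrm{Aff}}^{\phi,\psi}\big\rangle, \qquad \phi\in\mathcal{S}(\mathbb{R}_+),
\]
where on the right the bracket denotes the pairing between $\mathcal{S}'(\mathrm{Aff})$ and $\mathcal{S}(\mathrm{Aff})$ extending the $L_{r}^{2}(\mathrm{Aff})$-inner product. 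This is legitimate precisely because Proposition \ref{Wigner_Schwartz_functions} gives $W_{\mathrm{Aff}}^{\phi,\psi}\in\mathcal{S}(\mathrm{Aff})$ whenever $\phi,\psi\in\mathcal{S}(\mathbb{R}_+)$, so $f$ may be evaluated on it. For $f\in L_{r}^{2}(\mathrm{Aff})$ the right-hand side is literally \eqref{affine_weyl_correspondence}, so the new definition reproduces the Hilbert--Schmidt operator of Proposition \ref{quantization_isometry_proposition}; since $\mathcal{S}(\mathrm{Aff})$ is dense in $L_{r}^{2}(\mathrm{Aff})$ and the $L_{r}^{2}$-pairing is continuous on $\mathcal{S}(\mathrm{Aff})$, the inclusion $L_{r}^{2}(\mathrm{Aff})\hookrightarrow\mathcal{S}'(\mathrm{Aff})$ makes the construction a genuine extension.

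The remaining point is to verify that $\phi\mapsto\langle f,W_{\mathrm{Aff}}^{\phi,\psi}\rangle$ is a continuous functional on $\mathcal{S}(\mathbb{R}_+)$, i.e. that $A_f\psi\in\mathcal{S}'(\mathbb{R}_+)$. As $f$ is by definition continuous on $\mathcal{S}(\mathrm{Aff})$, it suffices to upgrade Proposition \ref{Wigner_Schwartz_functions} from a statement about images to one about continuity: the bilinear map $(\phi,\psi)\mapsto W_{\mathrm{Aff}}^{\phi,\psi}$ is separately, indeed jointly, continuous from $\mathcal{S}(\mathbb{R}_+)\times\mathcal{S}(\mathbb{R}_+)$ into $\mathcal{S}(\mathrm{Aff})$. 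This follows by re-reading the proof of Proposition \ref{Wigner_Schwartz_functions}: with $\Psi=\psi\circ\exp$ and $\Phi=\phi\circ\exp$, the map factors through (i) the topological isomorphism $\mathcal{S}(\mathbb{R}_+)\to\mathcal{S}(\mathbb{R})$, $\psi\mapsto\Psi$, built into the definition of $\mathcal{S}(\mathbb{R}_+)$; (ii) the cross-ambiguity map $(\Psi,\Phi)\mapsto A(\Psi,\Phi)$, continuous $\mathcal{S}(\mathbb{R})\times\mathcal{S}(\mathbb{R})\to\mathcal{S}(\mathbb{R}^2)$ by \cite[Theorem 11.2.5]{grochenig2013foundations}; (iii) multiplication by $\Theta$, smooth with polynomially bounded derivatives and hence continuous on $\mathcal{S}(\mathbb{R}^2)$; and (iv) the partial transforms $\mathcal{M}_{y}^{-1}\otimes\mathcal{M}_b$ of Lemma \ref{Mellin_factorization}, continuous because the Mellin transform is intertwined with the Fourier transform and the latter preserves $\mathcal{S}$. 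Each factor is continuous, so the composition is, and composing with the continuous functional $f$ shows $A_f\psi\in\mathcal{S}'(\mathbb{R}_+)$.

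Finally, linearity of $\psi\mapsto A_f\psi$ is immediate from linearity of $\phi\mapsto W_{\mathrm{Aff}}^{\phi,\psi}$ in $\psi$, and the joint continuity established above shows moreover that $\psi\mapsto A_f\psi$ is continuous from $\mathcal{S}(\mathbb{R}_+)$ into $\mathcal{S}'(\mathbb{R}_+)$ equipped with its weak-$*$ topology; hence $A_f$ is a well-defined continuous operator $\mathcal{S}(\mathbb{R}_+)\to\mathcal{S}'(\mathbb{R}_+)$. The only content beyond the definition lies in the continuity statement of the second paragraph, and I expect the main (mild) obstacle to be purely bookkeeping: tracking the Fréchet-space seminorm estimates through steps (i)--(iv) and keeping the linear versus conjugate-linear conventions consistent between the $L_{r}^{2}(\mathrm{Aff})$-inner product and the distributional pairing, so that the formula genuinely defines a tempered distribution rather than merely a well-defined number for each $\phi$.
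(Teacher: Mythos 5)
Your proposal is correct and takes essentially the same approach as the paper: the paper obtains this corollary as a direct consequence of the weak correspondence \eqref{affine_weyl_correspondence} together with Proposition \ref{Wigner_Schwartz_functions}, which is precisely the dualized definition $\langle A_f\psi,\phi\rangle := \langle f, W_{\mathrm{Aff}}^{\phi,\psi}\rangle$ you give. The continuity bookkeeping in your second paragraph (joint continuity of $(\phi,\psi)\mapsto W_{\mathrm{Aff}}^{\phi,\psi}$ into $\mathcal{S}(\mathrm{Aff})$, so that $A_f\psi$ is genuinely an element of $\mathcal{S}'(\mathbb{R}_+)$ rather than a bare linear functional) is left implicit in the paper, so you have in fact supplied a detail the paper omits.
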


\begin{example}
\label{point_mass_example}
Consider the point measure $\delta_{\mathrm{Aff}}(x,a) \in \mathcal{S}'(\mathrm{Aff})$ defined by
\[\left\langle \delta_{\mathrm{Aff}}(x,a), f \right\rangle = \overline{f(x,a)}, \] for $f \in \mathcal{S}(\mathrm{Aff})$ and $(x,a) \in \mathrm{Aff}$. We compute for $\psi, \phi \in \mathcal{S}(\mathbb{R}_+)$ that 
\[\left\langle A_{\delta_{\mathrm{Aff}}(x,a)} \psi, \phi \right\rangle = \left\langle \delta_{\mathrm{Aff}}(x,a), W_{\mathrm{Aff}}^{\phi,\psi} \right\rangle = \overline{W_{\mathrm{Aff}}^{\phi,\psi}(x,a)} = W_{\mathrm{Aff}}^{\psi,\phi}(x,a).\]
Hence the operator $A_{\delta_{\mathrm{Aff}}(x,a)}$ is weakly defined through the values of the affine Wigner distribution.
\end{example}

\section{An Almost Analytic Decomposition}
\label{sec: Polyanalytic_Decomposition}

Recall that analytic and anti-analytic functions $f$ are characterized by the equations \[\partial_{\bar{z}}f(z) = 0, \qquad \partial_{z}f(z) = 0,\] respectively. The fact that the affine Wigner distribution $W_{\mathrm{Aff}}^{\psi,\phi}$ is in the space $L_{r}^{2}(\mathrm{Aff})$ for $\psi, \phi \in L^{2}(\mathbb{R}_+)$ allows us to exclude (anti-)analytic functions from being in the image of the affine Wigner distribution. 

\begin{proposition}
\label{no_analytic_functions}
There are no analytic or anti-analytic functions in the space $L_{r}^{2}(\mathrm{Aff})$. In particular, functions on the form $f = W_{\mathrm{Aff}}^{\psi,\phi}$ for $\psi,\phi \in L^{2}(\mathbb{R}_+)$ can neither be analytic nor anti-analytic. 
\end{proposition}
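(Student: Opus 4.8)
The plan is to identify $\mathrm{Aff}$ with the upper half-plane via $z = x + ia$ (so $a > 0$) and to prove the sharper statement that the only \emph{holomorphic} $f$ on the upper half-plane lying in $L_{r}^{2}(\mathrm{Aff}) = L^{2}(a^{-1}\,da\,dx)$ is $f \equiv 0$. The anti-analytic case then follows for free: if $\partial_{z}f = 0$ then $\overline{f}$ is analytic and $\|\overline{f}\|_{L_{r}^{2}(\mathrm{Aff})} = \|f\|_{L_{r}^{2}(\mathrm{Aff})}$, so $\overline{f} = 0$. The ``in particular'' assertion is then immediate, since $W_{\mathrm{Aff}}^{\psi,\phi} \in L_{r}^{2}(\mathrm{Aff})$ by Lemma~\ref{factorization} (the factorization $W_{\mathrm{Aff}}^{\psi,\phi} = \mathcal{F}_{1}\Pi(\psi\otimes\overline{\phi})$ exhibits it as the image of the isometry $\Pi$ under the unitary $\mathcal{F}_{1}$).

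First I would pass to the Fourier side of the horizontal slices. For $f \in L_{r}^{2}(\mathrm{Aff})$, Fubini shows that the slice $f_{a} := f(\cdot + ia)$ lies in $L^{2}(\mathbb{R})$ for a.e.\ $a > 0$; let $S$ be the full-measure set of such heights. The heart of the argument is a Paley--Wiener relation: for $a,a' \in S$ the slice transforms satisfy
\[
    \widehat{f_{a'}}(\eta) = e^{-2\pi(a'-a)\eta}\,\widehat{f_{a}}(\eta),
\]
equivalently $G(\eta) := e^{2\pi a\eta}\widehat{f_{a}}(\eta)$ is independent of $a \in S$. I would prove this by applying Cauchy's theorem to the holomorphic function $g(z) := f(z)e^{-2\pi i\eta z}$ over the rectangle $[-R,R]\times[a,a']$: the horizontal edges recover $e^{2\pi a\eta}\widehat{f_{a}}(\eta)$ and $e^{2\pi a'\eta}\widehat{f_{a'}}(\eta)$, while the two vertical edges are killed along a sequence $R_{n}\to\infty$. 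For the latter, note that $x \mapsto \int_{a}^{a'}|f(x+it)|^{2}\,dt$ lies in $L^{1}(\mathbb{R})$ (the weight is bounded on $[a,a']$), hence is small along some $\pm R_{n}$, and Cauchy--Schwarz together with the boundedness of $e^{2\pi\eta t}$ on $[a,a']$ forces the vertical contributions to vanish.

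With $G$ in hand I would conclude by a single Tonelli computation. By Plancherel applied slicewise,
\[
    \|f\|_{L_{r}^{2}(\mathrm{Aff})}^{2} = \int_{0}^{\infty}\|f_{a}\|_{L^{2}(\mathbb{R})}^{2}\,\frac{da}{a} = \int_{\mathbb{R}}|G(\eta)|^{2}\left(\int_{0}^{\infty}e^{-4\pi a\eta}\,\frac{da}{a}\right)d\eta.
\]
The inner integral diverges for every $\eta \neq 0$ --- near $a = 0$ when $\eta > 0$, and near $a = \infty$ when $\eta < 0$ --- so finiteness of the left-hand side forces $G = 0$ a.e., whence $\widehat{f_{a}} = 0$ and $f_{a} = 0$ for $a \in S$, and $f \equiv 0$ by continuity. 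This divergence at the boundary $a \to 0$ is precisely the mechanism by which the \emph{critical} weight $a^{-1}$ excludes holomorphic functions, whereas a weight $a^{\alpha}$ with $\alpha > -1$ would not.

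The main obstacle I anticipate is the Paley--Wiener step, namely establishing height-independence of $G$ for a genuine Bergman-type function that has no a priori boundary values and whose slices are only known to be square integrable for a.e.\ height. Everything else is routine Fourier analysis; the delicate points are justifying the vanishing of the vertical edges from mere $L^{2}$-integrability of $f$ over horizontal strips (rather than pointwise decay), which is why the $L^{1}$-in-$x$ reduction and the choice of a good subsequence $R_{n}$ are needed, and handling $\widehat{f_{a}}$ as an $L^{2}$-Fourier transform, so that the identity for $G$ is obtained a.e.\ in $\eta$ after passing to a subsequence --- which is all that the final step requires.
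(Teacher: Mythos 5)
Your proof is correct, but it takes a genuinely different route from the paper's. The paper transplants the problem to the unit disc via the standard linear fractional transformation: analytic functions in $L_{r}^{2}(\mathrm{Aff})$ correspond to analytic $f$ on $\mathbb{D}$ with $\int_{\mathbb{D}}|f(z)|^2(1-|z|^2)^{-1}\,dz<\infty$, and the blow-up of the weight at the boundary circle forces $f\equiv 0$ (the paper phrases this as vanishing at the boundary plus unique continuation; the rigorous version rests on the monotonicity in $r$ of the circular means $\int_0^{2\pi}|f(re^{i\theta})|^2\,d\theta$ together with the divergence of $\int^1 r(1-r^2)^{-1}\,dr$). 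You instead stay on the half-plane and run a Paley--Wiener argument: Cauchy's theorem over rectangles yields $\widehat{f_a}(\eta)=e^{-2\pi a\eta}G(\eta)$ with $G$ independent of the height, and then Plancherel, Tonelli, and the divergence of $\int_0^\infty e^{-4\pi a\eta}a^{-1}\,da$ for every $\eta$ force $G=0$, hence $f$ vanishes on a.e.\ horizontal line and $f\equiv 0$ by the identity theorem. Your two subsequence devices --- killing the vertical edges via the $L^1$-function $x\mapsto\int_a^{a'}|f(x+it)|^2\,dt$, and treating the truncated Fourier integrals as $L^2$-limits with a.e.-convergent subsequences --- are exactly what is needed to make the contour step legitimate for a mere Bergman-type function with no pointwise decay, so the argument is sound. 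What your route buys: it is self-contained, never leaves the half-plane (avoiding the conformal change of variables, whose isometric form the paper leaves implicit), and it isolates the critical-exponent mechanism, since for a weight $a^{\alpha}$ with $\alpha>-1$ the Laplace integral converges for $\eta>0$ and the analytic subspace is nontrivial, whereas $\alpha=-1$ kills everything. What the paper's route buys is brevity: granted standard facts about weighted Bergman spaces on the disc, triviality follows in two lines. Both proofs reduce the ``in particular'' claim to the membership $W_{\mathrm{Aff}}^{\psi,\phi}\in L_{r}^{2}(\mathrm{Aff})$, which Lemma~\ref{factorization} provides.
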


\begin{proof}
The conclusion is easier to obtain by looking at the isomorphic spaces in the unit disc $\mathbb{D}$ by applying the standard linear fractional transformation. Under this transformation, the analytic functions in $L_{r}^{2}(\mathrm{Aff})$ are transformed to the analytic functions $f$ in the unit disc satisfying the integrability condition
\begin{equation}
\label{disc_bargman_space}
\int_{\mathbb{D}}\frac{|f(z)|^2}{1-|z|^2} \, dz < \infty.
\end{equation}
Any such analytic function will have to vanish as it approaches the boundary circle. Thus they are identically zero inside the unit disc as well by the unique continuation principle for analytic functions. The case of anti-analytic functions is similar.
\end{proof}

\begin{remark}
Proposition \ref{no_analytic_functions} shows a big difference between the affine Wigner distribution and both the classical Wigner distribution and the wavelet transform; the classical Wigner distribution can produce Gaussians, while one can obtain plenty of analytic functions from the wavelet transform as shown in \cite[Chapter 2.5]{daubechies1992ten}.
\end{remark}

From Proposition \ref{no_analytic_functions} a few natural questions emerge: What kind of analytic-like functions are in the space $L_{r}^{2}(\mathrm{Aff})$? Is it possible to decompose the space $L_{r}^{2}(\mathrm{Aff})$ into pieces consisting of \textquote{almost analytic} and \textquote{almost anti-analytic} functions? By looking at the equivalent integrability condition in the disk, it is clear that the function $f(z) = 1 - |z|^2$ satisfies \eqref{disc_bargman_space}. Although it is not analytic nor anti-analytic, it is almost both.

\begin{definition}
A function $f:U \to \mathbb{C}$ where $U \subset \mathbb{C}$ is an open subset is called \textit{poly-analytic of order} $n \in \mathbb{N}$ if \[\partial_{\overline{z}}^{n}f = 0.\] Similarly, a function $f:U \to \mathbb{C}$ will be called \textit{anti-poly-analytic of order} $n \in \mathbb{N}$ if \[\partial_{z}^{n}f = 0.\]
\end{definition}

Notice that the function $f(z) = 1 - |z|^2$ is both poly-analytic and anti-poly-analytic of order two. The poly-analytic functions of order one are simply analytic functions, while the anti-poly-analytic functions of order one are the anti-analytic functions. We refer to an (anti-)poly-analytic function of order $n$ as \textit{pure} if it is not (anti-)poly-analytic of order $n-1$ or lower. 
Poly-analytic functions and anti-poly-analytic functions do not inherit all the amazing properties that analytic functions are known for; the function $f(z) = 1 - |z|^2$ vanish on the whole unit circle without being identically zero. The failure of the strong unique continuation principle for poly-analytic and anti-poly-analytic functions is what makes it possible for them to exist in $L_{r}^{2}(\mathrm{Aff})$.
\par 
We will show, inspired by a method in \cite{vasilevski1999structure}, that we can decompose the space $L_{r}^{2}(\mathrm{Aff})$ into pieces consisting of poly-analytic and anti-poly-analytic functions.\,\,Before we can do this, we explore how generalized Laguerre polynomials give us a suitable orthonormal basis.

\begin{definition}
The \textit{generalized Laguerre polynomials} with (real) \textit{parameter} $\alpha$ are the polynomials $L_{n}^{(\alpha)}$ defined by \[L_{n}^{(\alpha)}(x) := \frac{x^{-\alpha}e^{x}}{n!}\frac{d^n}{dx^n}\left(e^{-x}x^{n + \alpha}\right) = \sum_{k = 0}^{n}(-1)^{k}\binom{n + \alpha}{n-k}\frac{x^k}{k!}.\]
\end{definition}
For $\alpha > -1$ we have the orthogonality relation
\begin{equation}
\label{orthogonality_Laguerre}
    \int_{0}^{\infty}x^{\alpha}e^{-x}L_{n}^{(\alpha)}(x)L_{m}^{(\alpha)}(x) \, dx = \frac{\Gamma(n + \alpha + 1)}{n!}\delta_{n,m},
\end{equation} where $\Gamma$ denotes the Gamma function. Introduce the functions 
\begin{equation}
\label{explicit_Laguerre}
    \mathcal{L}_{n}^{(\alpha)}(x) := \sqrt{\frac{n!}{\Gamma(n+\alpha+1)}}x^{\frac{\alpha + 1}{2}}e^{-\frac{x}{2}}L_{n}^{(\alpha)}(x),
\end{equation}
for $\alpha > -1$. It is straightforward to check that the functions in \eqref{explicit_Laguerre} form an orthonormal basis for $L^{2}(\mathbb{R}_{+})$ for each fixed $\alpha > -1$ by using \eqref{orthogonality_Laguerre}. If $\alpha = 1$ we use the simplified notation $\mathcal{L}_n := \mathcal{L}_{n}^{(1)}$.

\begin{lemma}
\label{Wigner_distribution_bases}
If $\{\psi_{n}\}_{n \in \mathbb{N}}$ is an orthonormal basis for $L^{2}(\mathbb{R}_{+}),$ then the functions $\{W_{\mathrm{Aff}}^{\psi_{n},\psi_{m}}\}_{n,m \in \mathbb{N}}$ constitute an orthonormal basis for $L_{r}^{2}(\mathrm{Aff})$. In particular, for a fixed $\alpha > -1$, we can expand any $f \in L_{r}^{2}(\mathrm{Aff})$ as \[f = \sum_{n,m = 0}^{\infty}\left\langle f,W_{\mathrm{Aff}}^{\mathcal{L}_{n}^{(\alpha)},\mathcal{L}_{m}^{(\alpha)}}\right\rangle W_{\mathrm{Aff}}^{\mathcal{L}_{n}^{(\alpha)},\mathcal{L}_{m}^{(\alpha)}}.\]
\end{lemma}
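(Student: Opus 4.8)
The plan is to use the orthogonality relation already established for the affine cross-Wigner transform, together with the completeness of the tensor products $\{\psi_n \otimes \psi_m\}$ in the appropriate $L^2$ space. The key observation is that Proposition \ref{affine_orthogonality_proposition} generalizes readily to the cross-Wigner transform: by the factorization $W_{\mathrm{Aff}}^{\psi,\phi} = \mathcal{F}_1 \Pi(\psi \otimes \overline{\phi})$ from Lemma \ref{factorization}, and since both $\mathcal{F}_1$ and $\Pi$ are isometries (the former on each fiber, the latter by the first part of Lemma \ref{factorization}), I obtain
\[
\left\langle W_{\mathrm{Aff}}^{\psi_1,\phi_1}, W_{\mathrm{Aff}}^{\psi_2,\phi_2} \right\rangle_{L_r^2(\mathrm{Aff})} = \left\langle \psi_1 \otimes \overline{\phi_1}, \psi_2 \otimes \overline{\phi_2} \right\rangle = \langle \psi_1, \psi_2 \rangle \, \overline{\langle \phi_1, \phi_2 \rangle}.
\]

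First I would apply this polarized orthogonality relation to the basis elements. Taking $\psi_1 = \psi_n$, $\phi_1 = \psi_m$, $\psi_2 = \psi_{n'}$, $\phi_2 = \psi_{m'}$ and using $\langle \psi_n, \psi_{n'} \rangle = \delta_{n,n'}$ gives
\[
\left\langle W_{\mathrm{Aff}}^{\psi_n,\psi_m}, W_{\mathrm{Aff}}^{\psi_{n'},\psi_{m'}} \right\rangle_{L_r^2(\mathrm{Aff})} = \delta_{n,n'}\,\delta_{m,m'},
\]
so the family $\{W_{\mathrm{Aff}}^{\psi_n,\psi_m}\}_{n,m}$ is orthonormal. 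The second, more substantive step is completeness. Here I would exploit that $\Pi$ is an isometric isomorphism onto $L_r^2(\mathrm{Aff})$ (this is asserted in Lemma \ref{factorization}, where $\Pi$ is called an isometry onto its target), composed with $\mathcal{F}_1$ which is unitary on the first variable. Since $\{\psi_n \otimes \overline{\psi_m}\}_{n,m}$ is an orthonormal basis for $L^2(\mathbb{R}^+ \times \mathbb{R}^+, (rs)^{-1}\,dr\,ds)$ whenever $\{\psi_n\}$ is an orthonormal basis for $L^2(\mathbb{R}_+)$, applying the isometric isomorphism $\mathcal{F}_1 \Pi$ transports this basis to an orthonormal basis of $L_r^2(\mathrm{Aff})$, which is exactly $\{W_{\mathrm{Aff}}^{\psi_n,\psi_m}\}_{n,m}$.

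The main obstacle I anticipate is verifying that $\mathcal{F}_1 \Pi$ is genuinely surjective onto $L_r^2(\mathrm{Aff})$, not merely isometric; an isometry carries an orthonormal basis to an orthonormal system, but completeness of the image requires surjectivity. One must confirm that $\Pi$ maps onto all of $L_r^2(\mathrm{Aff})$, which amounts to checking that the change of variables $(r,s) = (a\lambda(u), a\lambda(-u))$ is a measure-preserving bijection between the relevant domains. Since $\lambda(u)\lambda(-u) > 0$ and the map $(u,a) \mapsto (a\lambda(u), a\lambda(-u))$ is a diffeomorphism with the measures matching (this underlies the isometry claim in Lemma \ref{factorization}), surjectivity follows. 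Once this is in hand, the particular expansion in the Laguerre basis $\mathcal{L}_n^{(\alpha)}$ is immediate, since these form an orthonormal basis of $L^2(\mathbb{R}_+)$ for each fixed $\alpha > -1$, and the stated expansion is just the abstract Parseval expansion of $f$ against the orthonormal basis $\{W_{\mathrm{Aff}}^{\mathcal{L}_n^{(\alpha)}, \mathcal{L}_m^{(\alpha)}}\}$.
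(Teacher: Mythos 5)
Your proof is correct, but the completeness step takes a genuinely different route from the paper's. The paper proves completeness via the quantization correspondence: if $f \in L_{r}^{2}(\mathrm{Aff})$ is orthogonal to every $W_{\mathrm{Aff}}^{\psi_n,\psi_m}$, then the relation $\langle A_f\psi,\phi\rangle_{L^{2}(\mathbb{R}_+)} = \langle f, W_{\mathrm{Aff}}^{\phi,\psi}\rangle_{L_{r}^{2}(\mathrm{Aff})}$ forces all matrix coefficients $\langle A_f\psi_m,\psi_n\rangle$ to vanish, hence $A_f = 0$, and the injectivity of the isomorphism in Proposition \ref{quantization_isometry_proposition} gives $f = 0$. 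You instead upgrade Lemma \ref{factorization}: you argue that $\Pi$ is not merely an isometry but a unitary, because the change of variables $(u,a)\mapsto(a\lambda(u),a\lambda(-u))$ is a measure-preserving bijection of $\mathbb{R}\times\mathbb{R}_+$ onto $\mathbb{R}_+\times\mathbb{R}_+$ (indeed $u = \log(r/s)$ and $a = s(e^u-1)/u$ invert it), so the unitary $\mathcal{F}_1\Pi$ carries the orthonormal basis $\{\psi_n\otimes\overline{\psi_m}\}_{n,m}$ onto the orthonormal basis $\{W_{\mathrm{Aff}}^{\psi_n,\psi_m}\}_{n,m}$. You correctly flag surjectivity of $\Pi$ as the crux --- an isometry alone would only yield an orthonormal system, not completeness --- and your justification of it is sound. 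As for what each approach buys: yours is self-contained modulo Lemma \ref{factorization}, avoids invoking the quantization isomorphism, and as a byproduct produces the polarized orthogonality relation $\langle W_{\mathrm{Aff}}^{\psi_1,\phi_1}, W_{\mathrm{Aff}}^{\psi_2,\phi_2}\rangle_{L_{r}^{2}(\mathrm{Aff})} = \langle\psi_1,\psi_2\rangle\,\overline{\langle\phi_1,\phi_2\rangle}$, which the paper's orthonormality step implicitly needs but only states in the diagonal case (Proposition \ref{affine_orthogonality_proposition}); the paper's route, by contrast, never has to address surjectivity of $\Pi$ at all --- that issue is absorbed into Proposition \ref{quantization_isometry_proposition} --- and it keeps the argument aligned with the operator-theoretic viewpoint that the paper exploits later in the approximation and trace-class applications.
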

\begin{proof}
The orthonormality of the functions $W_{\mathrm{Aff}}^{\psi_{n},\psi_{m}}$ clearly follows from Proposition \ref{affine_orthogonality_proposition}. To see the completeness in $L_{r}^{2}(\mathrm{Aff})$ we assume that $f \in L_{r}^{2}(\mathrm{Aff})$ satisfies \[\left\langle f, W_{\mathrm{Aff}}^{\psi_{n},\psi_{m}}\right\rangle_{L_{r}^{2}(\mathrm{Aff})} = 0\] for every $n,m \in \mathbb{N}$. If we let $A_{f}$ be the Hilbert-Schmidt operator acting on $L^{2}(\mathbb{R}_{+})$ corresponding to $f$ through the quantization procedure, then equation (\ref{affine_weyl_correspondence}) implies that \[\langle A_{f}\psi_{m},\psi_{n} \rangle_{L^{2}(\mathbb{R}_{+})} = 0.\] Since $\{\psi_{n}\}_{n \in \mathbb{N}}$ is an orthonormal basis for $L^{2}(\mathbb{R}_{+})$ we have that $A_{f} = 0.$ As the quantization correspondence between $f$ and $A_{f}$ is a Hilbert space isomorphism, we conclude that $f = 0$.
\end{proof}

Returning to the problem of decomposing $L_{r}^{2}(\mathrm{Aff})$, we use the notation $\mathbb{A}^{n}(\mathrm{Aff})$ and $\mathbb{A}^{\perp,n}(\mathrm{Aff})$ for all functions $f \in L_{r}^{2}(\mathrm{Aff})$ that are poly-analytic and anti-poly-analytic of order $n$, respectively. Finally, we use the notation $\mathcal{A}^{n}(\mathrm{Aff}) \subset \mathbb{A}^{n}(\mathrm{Aff})$ and $\mathcal{A}^{\perp,n}(\mathrm{Aff}) \subset \mathbb{A}^{\perp,n}(\mathrm{Aff})$ for the subspaces of pure poly-analytic and pure anti-poly-analytic functions of order $n$, respectively.

\begin{proposition}
\label{polyanalytic_decomposition}
The space $L_{r}^{2}(\mathrm{Aff})$ has the orthogonal decomposition \[L_{r}^{2}(\mathrm{Aff}) = \bigoplus_{n = 2}^{\infty}\mathcal{A}^{n}(\mathrm{Aff}) \oplus \mathcal{A}^{\perp,n}(\mathrm{Aff}).\] Moreover, the spaces $\mathbb{A}^{n}(\mathrm{Aff})$, $\mathbb{A}^{\perp,n}(\mathrm{Aff})$, $\mathcal{A}^{n}(\mathrm{Aff})$, and $\mathcal{A}^{\perp,n}(\mathrm{Aff})$ for $n \geq 2$ can be identified with the spaces \[\mathbb{A}^{n}(\mathrm{Aff}) \simeq L^{2}(\mathbb{R}_{+},dx) \otimes \bigoplus_{k = 0}^{n-2}\mathrm{span}\left\{\mathcal{L}_{k}\right\}, \qquad \mathbb{A}^{\perp,n}(\mathrm{Aff}) \simeq L^{2}(\mathbb{R}_{-},dx) \otimes \bigoplus_{k = 0}^{n-2}\mathrm{span}\left\{\mathcal{L}_{k}\right\},\] \[\mathcal{A}^{n}(\mathrm{Aff}) \simeq L^{2}(\mathbb{R}_{+},dx) \otimes \mathrm{span}\left\{\mathcal{L}_{n-2}\right\}, \qquad \mathcal{A}^{\perp,n}(\mathrm{Aff}) \simeq L^{2}(\mathbb{R}_{-},dx) \otimes \mathrm{span}\left\{\mathcal{L}_{n-2}\right\}.\]
\end{proposition}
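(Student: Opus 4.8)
The plan is to follow the strategy of \cite{vasilevski1999structure}: conjugate by a partial Fourier transform so that the differential poly-analyticity conditions become fiberwise ordinary differential equations, and then read off the resulting finite-dimensional fibers in terms of the Laguerre functions $\mathcal{L}_n = \mathcal{L}_n^{(1)}$. Identifying $(x,a) \in \mathrm{Aff}$ with the point $z = x + ia$ in the upper half-plane, the right Haar measure is $a^{-1}\,da\,dx$ and $\partial_{\bar z} = \tfrac12(\partial_x + i\partial_a)$, $\partial_z = \tfrac12(\partial_x - i\partial_a)$. First I would apply the Fourier transform $\mathcal{F}_1$ in the first variable, which by Plancherel is a unitary map from $L_r^2(\mathrm{Aff})$ onto $L^2(\mathbb{R} \times \mathbb{R}_+, \, d\xi \, a^{-1}\,da)$, where $\xi$ is the variable dual to $x$. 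Under $\mathcal{F}_1$ the substitution $\partial_x \mapsto 2\pi i \xi$ converts $\partial_{\bar z}^{\,n} f = 0$ into the fiberwise equation $(\partial_a + 2\pi \xi)^n \mathcal{F}_1 f(\xi, \cdot) = 0$, and $\partial_z^{\,n} f = 0$ into $(\partial_a - 2\pi\xi)^n \mathcal{F}_1 f(\xi, \cdot) = 0$.

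Next I would solve these constant-coefficient ODEs on each fiber. In the poly-analytic case the general solution is $\mathcal{F}_1 f(\xi, a) = q_\xi(a)\, e^{-2\pi\xi a}$ with $q_\xi$ a polynomial of degree at most $n-1$. The decisive step is the integrability constraint $\int_0^\infty |q_\xi(a)\, e^{-2\pi\xi a}|^2 \, a^{-1} \, da < \infty$. At infinity this forces $\xi > 0$ (for $\xi \le 0$ no nonzero solution is square integrable), which accounts for the appearance of $L^2(\mathbb{R}_+)$ in the statement; at the origin the weight $a^{-1}$ forces $q_\xi(0) = 0$. This last point is exactly what excludes analytic functions (the case $n=1$ leaves only $q_\xi \equiv 0$, recovering Proposition \ref{no_analytic_functions}) and reduces the fiber to the $(n-1)$-dimensional space $e^{-2\pi\xi a}\,\mathrm{span}\{a, a^2, \ldots, a^{n-1}\}$. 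Performing the dilation $t = 4\pi\xi a$, under which the Haar measure $a^{-1}\,da$ is invariant, I would match this span with the Laguerre functions: since $\mathcal{L}_k^{(1)}(t) = c_k\, t\, e^{-t/2} L_k^{(1)}(t)$ carries the prefactor $t^{(\alpha+1)/2} = t$ precisely when $\alpha = 1$, the span of $\mathcal{L}_0, \ldots, \mathcal{L}_{n-2}$ equals $e^{-t/2}\,\mathrm{span}\{t, \ldots, t^{n-1}\}$. This explains why $\alpha = 1$ is the correct parameter and identifies the fiber of $\mathbb{A}^n(\mathrm{Aff})$ with $\mathrm{span}\{\mathcal{L}_0, \ldots, \mathcal{L}_{n-2}\} \subset L^2(\mathbb{R}_+)$.

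Assembling the fibers yields the global identifications. Because $t = 4\pi\xi a$ preserves $a^{-1}\,da$, the fiber decomposition is uniform in $\xi$, so after this normalization $\mathbb{A}^n(\mathrm{Aff}) \simeq L^2(\mathbb{R}_+, d\xi) \otimes \mathrm{span}\{\mathcal{L}_0, \ldots, \mathcal{L}_{n-2}\}$, with the anti-poly-analytic case supported on $\xi < 0$ and hence giving $L^2(\mathbb{R}_-, d\xi)$. The pure spaces are the orthogonal complements $\mathcal{A}^n(\mathrm{Aff}) = \mathbb{A}^n(\mathrm{Aff}) \ominus \mathbb{A}^{n-1}(\mathrm{Aff})$, whose fibers reduce to $\mathrm{span}\{\mathcal{L}_{n-2}\}$, and analogously for $\mathcal{A}^{\perp,n}(\mathrm{Aff})$. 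The orthogonal decomposition then rests on two orthogonalities that are transparent on the transform side: the poly- and anti-poly-analytic pieces are supported on the disjoint half-lines $\{\xi > 0\}$ and $\{\xi < 0\}$, while within each half-line the completeness of the orthonormal system $\{\mathcal{L}_k\}_{k \ge 0}$ in $L^2(\mathbb{R}_+)$ (from \eqref{orthogonality_Laguerre}) gives $L^2(\mathbb{R}_\pm, d\xi) \otimes L^2(\mathbb{R}_+) = \bigoplus_{k \ge 0} L^2(\mathbb{R}_\pm, d\xi)\otimes \mathrm{span}\{\mathcal{L}_k\}$. Re-indexing by $n = k+2$ produces $L_r^2(\mathrm{Aff}) = \bigoplus_{n=2}^\infty \mathcal{A}^n(\mathrm{Aff}) \oplus \mathcal{A}^{\perp,n}(\mathrm{Aff})$.

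The main obstacle I anticipate is making the passage from the distributional differential conditions to the fiberwise ODEs fully rigorous. One must justify that, for $f \in L_r^2(\mathrm{Aff})$, the equation $\partial_{\bar z}^{\,n} f = 0$ is equivalent for almost every $\xi$ to $(\partial_a + 2\pi\xi)^n \mathcal{F}_1 f(\xi, \cdot) = 0$ in the appropriate weak sense, that the admissible solutions are exactly the exponential-polynomial ones above (controlling regularity and excluding spurious ones), and that the $\xi$-dependent dilation $t = 4\pi\xi a$ defines a genuine unitary on the direct integral $\int_{\mathbb{R}_\pm}^{\oplus} L^2(\mathbb{R}_+) \, d\xi$. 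Once the direct-integral bookkeeping and the boundary analysis at $a = 0$ are handled carefully, the remaining computations with the Laguerre functions are routine.
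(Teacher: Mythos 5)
Your proposal is correct and takes essentially the same route as the paper's proof in Appendix A.1: a partial Fourier transform in $x$ converts $\partial_{\bar z}^{\,n} f = 0$ into a fiberwise constant-coefficient ODE, square-integrability against $a^{-1}\,da$ forces the half-line support in the dual variable (poly-analytic $\leftrightarrow$ positive frequencies, anti-poly-analytic $\leftrightarrow$ negative) and the vanishing of the constant term of the polynomial factor, and a dual-variable-dependent dilation identifies the resulting $(n-1)$-dimensional fibers with $\mathrm{span}\{\mathcal{L}_0,\dots,\mathcal{L}_{n-2}\}$. The paper merely packages these same steps into explicit isometries ($M_{1/\sqrt{a}}$, $\mathcal{F}\otimes I$, $\mathcal{V}$, $M_{\sqrt{a}}$) between weighted and unweighted $L^2$-spaces rather than carrying the measure $a^{-1}\,da$ throughout as you do.
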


We have delegated the proof of Proposition \ref{polyanalytic_decomposition} to Appendix \ref{Appendix_Proof} as it is heavily inspired by a technique used in \cite{vasilevski1999structure}. The poly-analytic functions have appeared prominently in the work of Abreu, see e.g. \cite{abreu2012super}, in the context of wavelet analysis and sampling theory. However, a significant difference is that Abreu only considers poly-analytic functions and not the anti-poly-analytic functions. 

\section{Applications}
\label{sec: Applications}

\subsection{Affine Wigner Approximation}

Let us use the notation \[\mathfrak{W}(\mathrm{Aff}) := \left\{W_{\mathrm{Aff}}^{\psi} \, \Big| \, \psi \in L^{2}(\mathbb{R}_+) \right\} \subset L_{r}^{2}(\mathrm{Aff}),\] and call $\mathfrak{W}(\mathrm{Aff})$ the \textit{affine Wigner space}. The affine orthogonality relation \eqref{affine_orthogonality_relation_new} implies that $\mathfrak{W}(\mathrm{Aff})$ is a closed subset of $L_{r}^{2}(\mathrm{Aff})$. Although we can create orthonormal bases for $L_{r}^{2}(\mathrm{Aff})$ by using the affine cross-Wigner transform as done in Lemma \ref{orthogonality_Laguerre}, the space $\mathfrak{W}(\mathrm{Aff})$ is a proper subset of $L_{r}^{2}(\mathrm{Aff})$. \par 
Despite the fact that an arbitrary function $f \in L_{r}^{2}(\mathrm{Aff})$ is not in the affine Wigner space, it is natural to ask how far $f$ is from being in $\mathfrak{W}(\mathrm{Aff})$. Hence we are interested in the following \textit{affine Wigner approximation problem}: Given a function $f \in L_{r}^{2}(\mathrm{Aff})$, we want to understand the quantity 
\begin{equation}
\label{minimization_problem}
    \inf_{g \in \mathfrak{W}(\mathrm{Aff})}\|f-g\|_{L_{r}^{2}(\mathrm{Aff})}.
\end{equation}
At this point, it should be clear to the reader that the norm on $L_{r}^{2}(\mathrm{Aff})$ is the most natural choice to consider. The analogous problem for the classical Wigner distribution has been recently investigated in \cite{ben2018wigner}. Our approach will be different from the one taken in \cite{ben2018wigner} as it will emphasize the quantization picture. \par 
Let us begin by discussing what we might expect to obtain: Our goal is to understand \eqref{minimization_problem} in terms of intrinsic properties of the function $f$. To be more precise, consider $g = W_{\mathrm{Aff}}^{\psi}$ for some $\psi \in L^{2}(\mathbb{R}_+)$. Then \eqref{affine_weyl_correspondence} and the affine orthogonality relation show that \[\left\langle A_{g}\phi, \phi \right\rangle = \left\langle g, W_{\mathrm{Aff}}^{\phi} \right\rangle = \left\langle W_{\mathrm{Aff}}^{\psi}, W_{\mathrm{Aff}}^{\phi} \right\rangle = |\langle \psi,\phi\rangle|^2,\] for $\phi \in L^{2}(\mathbb{R}_{+}).$ It follows that the 
Hilbert-Schmidt operator $A_{g}$ on $L^{2}(\mathbb{R}_{+})$ is the positive rank-one operator 
\begin{equation}
\label{positive_rank_one_operators}
    A_{g}\phi =\langle \phi, \psi \rangle \psi.
\end{equation}
The converse follows as well, so there is a one-to-one correspondence between affine Wigner distributions and the positive rank-one operators given in \eqref{positive_rank_one_operators}. Hence the distance \eqref{minimization_problem} should somehow be related to how far $A_f$ is from being a rank-one operator. We will show in Corollary \ref{Corollary_approximation} that, for a large class of functions $f \in L_{r}^{2}(\mathrm{Aff})$, this heuristic is correct. We use the notation \[\lambda_{\max}^{+}(A_f) := \max\left\{\max_{\lambda \in \mathrm{Spec}(A_{f})}\lambda,0\right\}.\] If $A_f$ is a positive operator, then $\lambda_{\max}^{+}(A_f)$ coincides with the spectral radius of $A_f$.

\begin{theorem}
\label{minimizing_theorem}
The affine Wigner approximation problem for a real-valued function $f \in L_{r}^{2}(\mathrm{Aff})$ has the explicit solution 
\begin{equation}
\label{approximation_statement}
    \inf_{g \in \mathfrak{W}(\mathrm{Aff})}\|f-g\|_{L_{r}^{2}(\mathrm{Aff})} = \sqrt{\|f\|_{L_{r}^{2}(\mathrm{Aff})}^2 - \lambda_{\max}^{+}(A_f)^{2}}.
\end{equation}
A minimizing function $h \in L_{r}^{2}(\mathrm{Aff})$ such that \[\inf_{g \in \mathfrak{W}(\mathrm{Aff})}\|f-g\|_{L_{r}^{2}(\mathrm{Aff})} = \|f-h\|_{L_{r}^{2}(\mathrm{Aff})}\] always exists. Moreover, when $\lambda_{\max}^{+}(A_f) > 0$ the number of minimizers is equal to the multiplicity of $\lambda_{\max}^{+}(A_f)$. If $\lambda_{\max}^{+}(A_f) = 0$, then the zero function is the unique minimizer.
\end{theorem}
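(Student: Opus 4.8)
The plan is to push the whole problem through the isometric isomorphism $f \mapsto A_f$ of Proposition \ref{quantization_isometry_proposition}, which identifies $L_r^2(\mathrm{Aff})$ with the Hilbert--Schmidt operators on $L^2(\mathbb{R}_+)$ and, by the discussion around \eqref{positive_rank_one_operators}, carries the affine Wigner space $\mathfrak{W}(\mathrm{Aff})$ exactly onto the positive rank-one operators $P_\psi$ defined by $P_\psi\phi := \langle\phi,\psi\rangle\psi$. Thus \eqref{minimization_problem} becomes the operator-theoretic minimization $\inf_{\psi}\|A_f - P_\psi\|_{\mathcal{HS}}$. Before optimizing, I would record that a real-valued $f$ forces $A_f$ to be self-adjoint: the change of variables $u\mapsto -u$ in the defining integral (using $\lambda(-u)=u/(e^u-1)$) gives $\overline{W_{\mathrm{Aff}}^{\psi,\phi}} = W_{\mathrm{Aff}}^{\phi,\psi}$, so in particular $W_{\mathrm{Aff}}^{\psi}$ is real-valued; hence by \eqref{affine_weyl_correspondence} the quadratic form $\langle A_f\psi,\psi\rangle = \langle f, W_{\mathrm{Aff}}^{\psi}\rangle$ is real for every $\psi$, and polarization yields $A_f = A_f^\ast$. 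Being Hilbert--Schmidt, it is compact and self-adjoint.

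The computational core is the expansion
\[
\|A_f - P_\psi\|_{\mathcal{HS}}^2 = \|A_f\|_{\mathcal{HS}}^2 - 2\langle A_f\psi,\psi\rangle + \|\psi\|_{L^2(\mathbb{R}_+)}^4,
\]
using $\|P_\psi\|_{\mathcal{HS}}^2 = \|\psi\|_{L^2(\mathbb{R}_+)}^4$ and $\langle A_f, P_\psi\rangle_{\mathcal{HS}} = \langle A_f\psi,\psi\rangle$ (real, by the previous step). I would then optimize in two stages: writing $\psi = \sqrt{t}\,v$ with $\|v\|=1$ and $t=\|\psi\|_{L^2(\mathbb{R}_+)}^2 \ge 0$, the right-hand side reads $\|A_f\|_{\mathcal{HS}}^2 - 2t\,\rho(v) + t^2$ with $\rho(v) := \langle A_f v, v\rangle$. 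Minimizing this quadratic over $t\ge 0$ gives optimal $t^\ast = \rho(v)^+$ and value $\|A_f\|_{\mathcal{HS}}^2 - (\rho(v)^+)^2$; minimizing over unit $v$ then amounts to maximizing $\rho(v)$, which by the spectral theorem equals $\lambda_{\max}(A_f)$. Taking the positive part and using the isometry $\|A_f\|_{\mathcal{HS}} = \|f\|_{L_r^2(\mathrm{Aff})}$ produces exactly \eqref{approximation_statement} after a square root.

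Existence and the count of minimizers read off from the same computation. If $\lambda_{\max}^+(A_f)=0$ the constrained optimum is $t^\ast=0$, i.e. $\psi=0$ and $h=W_{\mathrm{Aff}}^{0}=0$, the unique minimizer. If $\lambda_{\max}^+(A_f)>0$, compactness guarantees $\lambda_{\max}(A_f)$ is an attained eigenvalue, the maximizers of $\rho$ are precisely the unit vectors of the top eigenspace $E_{\lambda_{\max}}$, and each gives a minimizer $h = W_{\mathrm{Aff}}^{\psi^\ast}$ with $\psi^\ast = \sqrt{\lambda_{\max}(A_f)}\,v$. Since $W_{\mathrm{Aff}}^{\psi^\ast}$ determines $\psi^\ast$ only up to a unimodular constant (the corollary to Proposition \ref{affine_orthogonality_proposition}), distinct minimizers correspond bijectively to rays in $E_{\lambda_{\max}}$, so their number is controlled by the multiplicity of $\lambda_{\max}^+(A_f)$, there being a single minimizer exactly when this eigenvalue is simple.

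The step I expect to be the main obstacle is the constrained optimization over $t\ge 0$: it is precisely the nonnegativity constraint $t=\|\psi\|_{L^2(\mathbb{R}_+)}^2 \ge 0$ that converts $\lambda_{\max}(A_f)$ into its positive part and makes the sign of the top of the spectrum decisive for both the optimal value and the minimizer count. The self-adjointness reduction is the enabling prerequisite, since the two-stage optimization and the Rayleigh-quotient maximization of $\rho$ both rely on $\langle A_f\psi,\psi\rangle$ being real.
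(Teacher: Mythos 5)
Your proposal is correct, and it follows the same overall strategy as the paper: transport the problem through the isometry of Proposition \ref{quantization_isometry_proposition}, identify $\mathfrak{W}(\mathrm{Aff})$ with the positive rank-one operators, deduce self-adjointness of $A_f$ from real-valuedness of $f$, and then invoke spectral theory of compact self-adjoint operators. Where you genuinely diverge is at the core minimization. The paper first proves (with a Cauchy-sequence argument) that the eigenexpansion $A_f=\sum_k\lambda_k\,\phi_k\otimes\overline{\phi_k}$ converges in the Hilbert--Schmidt norm, and then asserts that $\inf_\psi\|\sum_k\lambda_k\,\phi_k\otimes\overline{\phi_k}-\psi\otimes\overline{\psi}\|_{\mathcal{HS}}$ is ``clearly minimized'' at $\psi=\sqrt{\lambda_j}\,\phi_j$, computing the value by orthogonality. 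You instead expand $\|A_f-P_\psi\|_{\mathcal{HS}}^2=\|A_f\|_{\mathcal{HS}}^2-2\langle A_f\psi,\psi\rangle+\|\psi\|^4$ and run a two-stage constrained optimization over the scale $t=\|\psi\|^2\ge 0$ and the direction $v$, reducing to the Rayleigh-quotient characterization $\sup_{\|v\|=1}\langle A_fv,v\rangle=\lambda_{\max}(A_f)$. This buys two things: it fills in rigorously the step the paper leaves as ``clearly'' (and dispenses with the need for the HS-convergent spectral series altogether), and it makes completely transparent where the positive part $\lambda_{\max}^{+}$ comes from, namely the constraint $t\ge 0$. Your self-adjointness argument (reality of $W_{\mathrm{Aff}}^{\psi}$ via $u\mapsto -u$, then polarization) is also a valid alternative to the paper's direct computation with $A_{\overline{f}}$. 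One remark on the minimizer count: your description of the minimizing set as the rays in the top eigenspace $E_{\lambda_{\max}}$ is the mathematically precise statement; note that it yields uniqueness exactly when the eigenvalue is simple and \emph{uncountably many} minimizers when the multiplicity is at least two, so it actually proves something sharper (and more defensible) than the theorem's literal claim that the number of minimizers equals the multiplicity --- a claim the paper's own proof does not establish in any stronger sense either.
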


\begin{proof}
From Proposition \ref{quantization_isometry_proposition} and the discussion above, we have that \[\inf_{g \in \mathfrak{W}(\mathrm{Aff})}\|f-g\|_{L_{r}^{2}(\mathrm{Aff})} = \inf_{\psi \in L^{2}(\mathbb{R}_+)}\|A_f - \psi \otimes \overline{\psi}\|_{\mathcal{HS}}.\] Since $A_f$ is a Hilbert-Schmidt operator it is in particular a compact operator. Moreover, $A_f$ is self-adjoint since 
\[\left\langle A_{f}\psi, \phi \right\rangle =  \left\langle A_{\overline{f}}\psi, \phi \right\rangle = \left\langle \overline{f}, W_{\mathrm{Aff}}^{\phi,\psi} \right\rangle = \overline{\left\langle f, W_{\mathrm{Aff}}^{\psi,\phi} \right\rangle} = \left\langle \psi, A_{f}\phi \right\rangle,\]
for $\psi,\phi \in L^{2}(\mathbb{R}_+).$ Thus the spectral theory for compact, self-adjoint operators implies that the spectrum $\mathrm{Spec}(A_f) = \{\lambda_{k}\}_{k = 0}^{\infty}$ of $A_f$ is countable with $0 \in \mathrm{Spec}(A_f)$ as the only possible accumulation point. Moreover, there is by \cite[Theorem 1.52]{folland2016course} an orthonormal basis $\{\phi_k\}_{k = 0}^{\infty}$ for $L^{2}(\mathbb{R}_+)$ such that $\phi_k$ is an eigenvector $A_f$ corresponding to the eigenvalue $\lambda_k$. The convention that eigenvalues with higher multiplicity than one are repeated according to their multiplicity is used. \par 
We claim that we can write $A_f = \sum_{k = 0}^{\infty} \lambda_k \phi_k \otimes \overline{\phi_k}$, where the convergence is in the Hilbert-Schmidt norm. Notice that convergence of $\sum_{k = 0}^{\infty} \lambda_k \phi_k \otimes \overline{\phi_k}$ to $A_{f}$ is guaranteed in the operator norm from the theory of compact operators \cite[Theorem 3.5]{busch2016quantum}. Hence it suffices to show that $\sum_{k = 0}^{\infty} \lambda_k \phi_k \otimes \overline{\phi_k}$ converges in the Hilbert-Schmidt norm; this will imply together with the norm inequality $\|\cdot\|_{op} \leq \|\cdot\|_{\mathcal{HS}}$ that $\sum_{k = 0}^{\infty} \lambda_k \phi_k \otimes \overline{\phi_k}$ must converge to $A_f$ in the Hilbert-Schmidt norm. Since $\mathcal{S}_{2}(L^{2}(\mathbb{R}_{+}))$ is complete, it suffices to show that $\sum_{k = 0}^{\infty} \lambda_k \phi_k \otimes \overline{\phi_k}$ is a Cauchy sequence. For $n,m \in \mathbb{N}$ with $n < m$ we have 
\begin{align*}
    \left\|\sum_{k = n}^{m} \lambda_k \phi_k \otimes \overline{\phi_k} \right\|_{\mathcal{HS}}^2 & = \left \langle \sum_{k = n}^{m} \lambda_k \phi_k \otimes \overline{\phi_k}, \sum_{k' = n}^{m} \lambda_{k'} \phi_{k'} \otimes \overline{\phi_{k'}} \right \rangle_{\mathcal{HS}} \\ & = \sum_{k,k' = n}^{m}\lambda_{k} \overline{\lambda_{k'}}  \left \langle \phi_k \otimes \overline{\phi_k}, \phi_{k'} \otimes \overline{\phi_{k'}} \right \rangle_{\mathcal{HS}} \\ & = \sum_{k = n}^{m} |\lambda_{k}|^2.
\end{align*}
The claim follows from the fact that $A_f$ is a Hilbert-Schmidt operator. \par 
Returning to the problem, we can now write 
\begin{equation}
\label{equation_in_proof_minimized}
    \inf_{g \in \mathfrak{W}(\mathrm{Aff})}\|f-g\|_{L_{r}^{2}(\mathrm{Aff})} = \inf_{\psi \in L^{2}(\mathbb{R}_+)}\left\|\sum_{k = 0}^{\infty} \lambda_k \phi_k \otimes \overline{\phi_k} - \psi \otimes \overline{\psi}\right\|_{\mathcal{HS}}.
\end{equation}
Assume that $\lambda_j = \lambda_{\max}^{+}(A_f)$. Then \eqref{equation_in_proof_minimized} is clearly minimized when $\psi = \sqrt{\lambda_j} \phi_j$. By orthogonality, we can rewrite \eqref{equation_in_proof_minimized} and obtain 
\[\inf_{g \in \mathfrak{W}(\mathrm{Aff})}\|f-g\|_{L_{r}^{2}(\mathrm{Aff})} = \sqrt{\|A_{f}\|_{\mathcal{HS}}^2 - \lambda_{\max}^{+}(A_f)^{2}} = \sqrt{\|f\|_{L_{r}^{2}(\mathrm{Aff})}^2 - \lambda_{\max}^{+}(A_f)^{2}}.\]
We always have a minimizer as we can take $h = W_{\mathrm{Aff}}^{\psi}$. The statement about uniqueness of minimizers is clear from \eqref{equation_in_proof_minimized}.
\end{proof}

\begin{remarks}
\begin{itemize}
    \item From the spectral theory of compact, self-adjoint operators, it also follows that the eigenspaces corresponding to non-zero eigenvalues are finite-dimensional. Hence, for a given $f \in L_{r}^{2}(\mathrm{Aff})$, there is at most a finite number of minimizers $h_1, \dots, h_k \in L_{r}^{2}(\mathrm{Aff})$ so that \[\inf_{g \in \mathfrak{W}(\mathrm{Aff})}\|f-g\|_{L_{r}^{2}(\mathrm{Aff})} = \|f-h_i\|_{L_{r}^{2}(\mathrm{Aff})},\] for $i = 1, \dots, k$. The statements in Theorem \ref{minimization_problem} regarding existence and uniqueness of minimizer do not follow automatically from Hilbert space theory as the affine Wigner space $\mathfrak{W}(\mathrm{Aff})$ is not convex. 
    \item The proof of Theorem \ref{minimization_problem} goes through almost verbatim to show the analogous result for the classical Wigner distribution. The analogous formula to \eqref{approximation_statement} for the classical Wigner distribution was shown in \cite[Theorem 3]{ben2018wigner} using a variational calculus approach. 
\end{itemize}
\end{remarks}

\begin{corollary}
\label{Corollary_approximation}
Let $f \in L_{r}^{2}(\mathrm{Aff})$ be real-valued and assume that \[\lambda_{\max}^{+}(A_f) = \max_{\lambda \in \mathrm{Spec}(A_f)} |\lambda|.\] Then \begin{equation}
\label{corollary_equation_approximation}
    \min_{g \in \mathfrak{W}(\mathrm{Aff})}\|f-g\|_{L_{r}^{2}(\mathrm{Aff})} = \sqrt{\|A_f\|_{\mathcal{HS}}^2 - \|A_{f}\|_{op}^2}.
\end{equation}
\end{corollary}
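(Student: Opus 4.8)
The plan is to obtain both terms on the right-hand side of \eqref{corollary_equation_approximation} directly from Theorem \ref{minimizing_theorem}, which already supplies the identity
\[\inf_{g \in \mathfrak{W}(\mathrm{Aff})}\|f-g\|_{L_{r}^{2}(\mathrm{Aff})} = \sqrt{\|f\|_{L_{r}^{2}(\mathrm{Aff})}^2 - \lambda_{\max}^{+}(A_f)^{2}}.\]
Since $f$ is real-valued, Theorem \ref{minimizing_theorem} also guarantees that the infimum is attained, so we may write $\min$ in place of $\inf$. It therefore remains only to reinterpret the two quantities $\|f\|_{L_{r}^{2}(\mathrm{Aff})}$ and $\lambda_{\max}^{+}(A_f)$ in terms of the operator $A_f$.

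First I would rewrite the norm of the symbol in terms of the Hilbert--Schmidt norm of $A_f$. By Proposition \ref{quantization_isometry_proposition} the affine quantization $f \mapsto A_f$ is an isometric isomorphism from $L_{r}^{2}(\mathrm{Aff})$ onto the space of Hilbert--Schmidt operators on $L^{2}(\mathbb{R}_+)$. In particular $\|f\|_{L_{r}^{2}(\mathrm{Aff})} = \|A_f\|_{\mathcal{HS}}$, which accounts for the first term under the square root.

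Next I would identify $\lambda_{\max}^{+}(A_f)$ with the operator norm. Recall from the proof of Theorem \ref{minimizing_theorem} that for real-valued $f$ the operator $A_f$ is self-adjoint; being Hilbert--Schmidt it is also compact. Consequently its spectrum is real and its operator norm equals the spectral radius, $\|A_f\|_{op} = \max_{\lambda \in \mathrm{Spec}(A_f)}|\lambda|$. The standing assumption of the corollary is precisely $\lambda_{\max}^{+}(A_f) = \max_{\lambda \in \mathrm{Spec}(A_f)}|\lambda|$, so combining these gives $\lambda_{\max}^{+}(A_f) = \|A_f\|_{op}$. Substituting both identifications into the formula from Theorem \ref{minimizing_theorem} yields \eqref{corollary_equation_approximation}.

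There is no serious obstacle here, as the result is essentially a restatement of Theorem \ref{minimizing_theorem}; the only point meriting attention is the role of the hypothesis. The quantity $\lambda_{\max}^{+}(A_f)$ records the largest eigenvalue of $A_f$ that is nonnegative, whereas $\|A_f\|_{op}$ records the eigenvalue of largest absolute value. These agree exactly when the dominant eigenvalue of $A_f$ is nonnegative, which is what the assumption enforces; without it the dominant eigenvalue could be negative, in which case $\lambda_{\max}^{+}(A_f) < \|A_f\|_{op}$ and the clean operator-theoretic expression \eqref{corollary_equation_approximation} would no longer hold. I would state this caveat explicitly so that the hypothesis is seen to be natural rather than merely technical.
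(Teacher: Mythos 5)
Your proposal is correct and follows essentially the same route as the paper: both deduce the result from Theorem \ref{minimizing_theorem} by using the self-adjointness of $A_f$ (coming from $f$ being real-valued) to identify $\max_{\lambda \in \mathrm{Spec}(A_f)}|\lambda|$ with $\|A_f\|_{op}$, together with the quantization isometry giving $\|f\|_{L_{r}^{2}(\mathrm{Aff})} = \|A_f\|_{\mathcal{HS}}$. Your added remark on why the hypothesis $\lambda_{\max}^{+}(A_f) = \max_{\lambda \in \mathrm{Spec}(A_f)}|\lambda|$ is needed is accurate but not part of the paper's (terser) argument.
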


\begin{proof}
Since $A_f$ is self-adjoint it follows from \cite[Proposition 1.24]{folland2016course} that \[\|A_f\|_{op} = \max_{\lambda \in \mathrm{Spec}(A_f)} |\lambda|.\] Hence the result follows from Theorem \ref{minimizing_theorem}.
\end{proof}

\begin{remark}
Notice that under the assumptions in Corollary \ref{Corollary_approximation}, the heuristic we presented regarding rank-one operators holds true: If $A_f$ is a rank-one operator, then the Hilbert-Schmidt norm and the operator norm coincide. Hence \eqref{corollary_equation_approximation} is zero and thus $f$ is in the affine Wigner space $\mathfrak{W}(\mathrm{Aff})$. Conversely, the equations 
\begin{equation}
\label{eigenvalues_explicit}
    \|A_f\|_{op}^2 = \max_{\lambda \in \mathrm{Spec}(A_f)} \lambda^2, \qquad \|A_f\|_{\mathcal{HS}}^2 = \sum_{\lambda \in \mathrm{Spec}(A_f)}\lambda^2
\end{equation}
imply that \eqref{corollary_equation_approximation} is zero precisely when $A_f$ is a rank-one operator. 
\end{remark}

\begin{example}
Let $f \in L_{r}^{2}(\mathrm{Aff})$ be such that $A_{f}$ is a positive operator with rank $k > 0$. Then \eqref{eigenvalues_explicit} implies that \[\|A_{f}\|_{op}^{2} \geq \frac{\|A_{f}\|_{\mathcal{HS}}^2}{k}.\] Hence we obtain from \eqref{corollary_equation_approximation} that \[\min_{g \in \mathfrak{W}(\mathrm{Aff})}\|f-g\|_{L_{r}^{2}(\mathrm{Aff})} = \sqrt{\|A_f\|_{\mathcal{HS}}^2 - \|A_{f}\|_{op}^2} \leq \sqrt{\frac{k-1}{k}}\|A_{f}\|_{\mathcal{HS}} = \sqrt{\frac{k-1}{k}}\|f\|_{L_{r}^{2}(\mathrm{Aff})}.\] This has the following consequence: Let $f_1, f_2 \in L_{r}^{2}(\mathrm{Aff})$ both correspond to positive operators $A_{f_1}$ and $A_{f_2}$ with finite rank. If $\mathrm{rank}(A_{f_1}) \ll \mathrm{rank}(A_{f_2})$, then $f_1$ will be closer to the affine Wigner space than $f_2$, unless the energy of $f_2$ is significantly smaller than that of $f_1$.
\end{example}

\subsection{Applications to Operators on $\mathbb{R}_{+}$}

We give two minor applications to operators on $\mathbb{R}_{+}$. 
An operator $A \in \mathcal{B}(L^{2}(\mathbb{R}_+))$ is said to be \textit{dilation invariant} if 
\begin{equation}
\label{dilation_invariant}
    A = D_{r}\circ A \circ D_{r}^{*},
\end{equation}
for all $r > 0$ where $D_r$ is the dilation operator in \eqref{dilation_operator}. We will use the affine quantization scheme to show the following result.

\begin{proposition}
\label{no_dilation_invariant_operators}
There are no non-zero dilation invariant Hilbert-Schmidt operators on $L^{2}(\mathbb{R}_+)$.
\end{proposition}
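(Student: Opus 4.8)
The plan is to pass from operators to their affine symbols via the quantization isomorphism of Proposition \ref{quantization_isometry_proposition}. Every Hilbert--Schmidt operator on $L^{2}(\mathbb{R}_{+})$ is of the form $A_{f}$ for a unique $f \in L_{r}^{2}(\mathrm{Aff})$, so it suffices to determine which symbols $f$ yield a dilation invariant operator and then to show that only $f = 0$ survives the integrability constraint built into $L_{r}^{2}(\mathrm{Aff})$. The first step is therefore to compute how the conjugation $A_{f} \mapsto D_{r} \circ A_{f} \circ D_{r}^{*}$ is reflected on the symbol side.

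For this I would use that $D_{r}$ is unitary with $D_{r}^{*} = D_{1/r}$, together with the defining relation \eqref{affine_weyl_correspondence}. Writing $\langle D_{r} A_{f} D_{r}^{*}\psi, \phi \rangle = \langle A_{f} D_{1/r}\psi, D_{1/r}\phi \rangle = \langle f, W_{\mathrm{Aff}}^{D_{1/r}\phi, D_{1/r}\psi}\rangle$, the computation reduces to the behaviour of the cross-Wigner distribution under dilations. A direct change of variables in the definition (the cross-version of \eqref{elementary_symmetries_of_affine_Wigner}) gives $W_{\mathrm{Aff}}^{D_{s}\phi, D_{s}\psi}(x,a) = s^{-1} W_{\mathrm{Aff}}^{\phi,\psi}(x, a/s)$, so that $W_{\mathrm{Aff}}^{D_{1/r}\phi, D_{1/r}\psi}(x,a) = r\, W_{\mathrm{Aff}}^{\phi,\psi}(x, ra)$. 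Substituting this and performing the change of variables $b = ra$ in the integral over $L_{r}^{2}(\mathrm{Aff})$ (under which $a^{-1}\, da = b^{-1}\, db$), I expect to obtain $D_{r} A_{f} D_{r}^{*} = A_{f_{r}}$, where $f_{r}(x,a) := r\, f(x, a/r)$.

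Given this identity, the dilation invariance $A_{f} = D_{r} A_{f} D_{r}^{*}$ together with the injectivity of the quantization map forces the symbol to satisfy $f(x,a) = r\, f(x, a/r)$ for all $r > 0$ and almost every $(x,a)$. Choosing $r = a$ yields $f(x,a) = a\, g(x)$ with $g(x) := f(x,1)$, and one checks this family indeed solves the functional equation. It then remains to impose membership in $L_{r}^{2}(\mathrm{Aff})$: since
\[\|f\|_{L_{r}^{2}(\mathrm{Aff})}^{2} = \int_{-\infty}^{\infty}\int_{0}^{\infty} a^{2}|g(x)|^{2}\, \frac{da\, dx}{a} = \left(\int_{0}^{\infty} a\, da\right)\int_{-\infty}^{\infty}|g(x)|^{2}\, dx,\]
and $\int_{0}^{\infty} a\, da = \infty$, finiteness of the norm forces $g = 0$, hence $f = 0$ and $A_{f} = 0$.

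The routine parts are the functional equation and the final integrability count; the step that requires care is the covariance computation, where one must track the dilation factor $r$ and the scaling of the scale variable correctly, and justify the passage to cross-Wigner transforms (rather than the diagonal relation literally stated in \eqref{elementary_symmetries_of_affine_Wigner}) either by the same change of variables or by polarization.
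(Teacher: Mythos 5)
Your proposal is correct and takes essentially the same route as the paper's proof: both pass from the operator to its symbol via the quantization correspondence, use the dilation covariance $W_{\mathrm{Aff}}^{D_{1/r}\phi,D_{1/r}\psi}(x,a) = r\,W_{\mathrm{Aff}}^{\phi,\psi}(x,ra)$ together with \eqref{affine_weyl_correspondence} to force the homogeneity relation $f(x,a) = r f\left(x,\tfrac{a}{r}\right)$, and then rule out nonzero $f$ by integrability in $L_{r}^{2}(\mathrm{Aff})$. The only divergence is the final step, where the paper simply notes $\|f\|_{L_{r}^{2}(\mathrm{Aff})}^{2} = r^{2}\|f\|_{L_{r}^{2}(\mathrm{Aff})}^{2}$ for every $r>0$, which is slightly cleaner than your substitution $r=a$: since the homogeneity relation holds almost everywhere for each \emph{fixed} $r$, evaluating it along the point-dependent choice $r=a$ requires a small Fubini-type justification that the norm-scaling argument avoids.
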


\begin{proof}
Assume by contradiction that $A \in \mathcal{S}_{2}(L^{2}(\mathbb{R}_+))$ is dilation invariant. The quantization correspondence implies that $A$ is on the form $A = A_f$ for some $f \in L_{r}^{2}(\mathrm{Aff})$. It follows from \eqref{elementary_symmetries_of_affine_Wigner} that \[W_{\mathrm{Aff}}^{D_{\frac{1}{r}}\psi, D_{\frac{1}{r}}\phi}(x,a) = r\cdot W_{\mathrm{Aff}}^{\psi,\phi}(x,ra), \qquad \psi,\phi \in L^{2}(\mathbb{R}_+),\] for $r > 0$ and $(x,a) \in \mathrm{Aff}$. Hence \eqref{affine_weyl_correspondence} implies that
\begin{align*}
    \left\langle D_{r} A_{f} D_{\frac{1}{r}}\psi, \phi \right\rangle & = \left\langle A_{f} D_{\frac{1}{r}}\psi, D_{\frac{1}{r}} \phi \right\rangle \\ & = \left\langle f, W_{\mathrm{Aff}}^{D_{\frac{1}{r}}\phi, D_{\frac{1}{r}}\psi} \right\rangle \\ & = \int_{-\infty}^{\infty}\int_{0}^{\infty} rf(x,a) \overline{W_{\mathrm{Aff}}^{\phi,\psi}(x,ra)} \, \frac{da \, dx}{a} \\ & = \int_{-\infty}^{\infty}\int_{0}^{\infty} rf\left(x,\frac{a}{r}\right) W_{\mathrm{Aff}}^{\psi,\phi}(x,a) \, \frac{da \, dx}{a}.
\end{align*}
On the other hand, since $A_f$ is dilation invariant we also have \[\left\langle D_{r} A_{f} D_{\frac{1}{r}}\psi, \phi \right\rangle = \left\langle A_{f} \psi, \phi \right\rangle = \left\langle f, W_{\mathrm{Aff}}^{\phi, \psi} \right\rangle = \int_{-\infty}^{\infty}\int_{0}^{\infty} f\left(x,a\right) W_{\mathrm{Aff}}^{\psi,\phi}(x,a) \, \frac{da \, dx}{a}.\]By Lemma \ref{Wigner_distribution_bases}, this forces $f \in L_{r}^{2}(\mathrm{Aff})$ to satisfy the homogeneity relation \[f(x,a) = rf\left(x,\frac{a}{r}\right),\] for all $r \in \mathbb{R}_{+}$ and almost every $(x,a) \in \mathrm{Aff}$. However, this implies that \[\|f\|_{L_{r}^{2}(\mathrm{Aff})}^2 = \int_{-\infty}^{\infty} \int_{0}^{\infty} |f(x,a)|^2 \, \frac{da \, dx}{a} = r^2 \int_{-\infty}^{\infty} \int_{0}^{\infty} \left|f\left(x,\frac{a}{r}\right)\right|^2 \, \frac{da \, dx}{a} = r^2\|f\|_{L_{r}^{2}(\mathrm{Aff})}^2.\]
Hence $f$ is not in $L_{r}^{2}(\mathrm{Aff})$ unless $f = 0$, in which case $A_f$ is the zero operator.
\end{proof}

\begin{remarks}
\begin{itemize}
\item Notice that the proof of Proposition \ref{no_dilation_invariant_operators} actually shows that there can be no non-zero Hilbert-Schmidt operator $A$ that satisfies \eqref{dilation_invariant} even for a single $r \neq 1$. In particular, there are no non-zero Hilbert-Schmidt operators on $L^{2}(\mathbb{R}_+)$ that are \textit{discretely dilation invariant} in the sense that \[A = D_{2^n} \circ A \circ D_{\frac{1}{2^n}}, \quad n \in \mathbb{Z}.\]
\item Consider the space $\mathcal{M}_{(0,\infty)}$ of all $\psi \in L^{2}(\mathbb{R}_+)$ such that the Mellin transform of $\psi$ satisfies \[\mathrm{supp}(\mathcal{M}(\psi)) \subset \mathbb{R}_{+}.\] Then the orthogonal projection $P:L^{2}(\mathbb{R}_+) \to \mathcal{M}_{(0,\infty)}$ is dilation invariant due to the dilation property of the Mellin transform given in \eqref{Mellin_dilation}. Hence there are non-trivial dilation invariant operators on $L^{2}(\mathbb{R}_+)$.
\end{itemize}
\end{remarks}

We end by giving an application to the trace class operators on $L^{2}(\mathbb{R}_{+})$. The following result is motivated by \cite[Proposition 162]{de2017wigner}.

\begin{corollary}
\label{trace_class_proposition}
Let $T \in \mathcal{S}_{1}(L^{2}(\mathbb{R}_{+}))$ be a trace class operator. Then we can write $T = A_{f} \circ A_{g}$ for $f,g \in L_{r}^{2}(\mathrm{Aff})$. Moreover, the trace of $T$ can be calculated by the formula \[\mathrm{Tr}(T) = \mathrm{Tr}(A_{f} \circ A_{g}) = \int_{-\infty}^{\infty}\int_{0}^{\infty}f(x,a)g(x,a) \, \frac{da \, dx}{a}.\]
\end{corollary}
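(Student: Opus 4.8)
The plan is to treat the two assertions separately: first produce the factorization $T = A_f \circ A_g$, and then verify the trace formula via the isometric correspondence of Proposition \ref{quantization_isometry_proposition}. For the factorization I would invoke the polar decomposition $T = U|T|$, where $|T| = (T^*T)^{1/2}$ is a positive trace class operator and $U$ is a partial isometry. Since $|T| \in \mathcal{S}_1(L^2(\mathbb{R}_+))$ has summable eigenvalues $\{\mu_n\}$, its square root $|T|^{1/2}$ satisfies $\||T|^{1/2}\|_{\mathcal{HS}}^2 = \sum_n \mu_n < \infty$ and hence lies in $\mathcal{S}_2(L^2(\mathbb{R}_+))$. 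Setting $A := U|T|^{1/2}$ and $B := |T|^{1/2}$, both factors are Hilbert-Schmidt (the first because $\mathcal{S}_2$ is an ideal in $\mathcal{B}(L^2(\mathbb{R}_+))$), and $T = A \circ B$. Proposition \ref{quantization_isometry_proposition} then supplies symbols $f, g \in L_r^2(\mathrm{Aff})$ with $A = A_f$ and $B = A_g$, so that $T = A_f \circ A_g$.

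For the trace formula I would first record the adjoint identity $A_g^* = A_{\overline{g}}$, which is the natural extension of the self-adjointness computation carried out in the proof of Theorem \ref{minimizing_theorem}. Starting from $\langle A_g^* \phi, \psi\rangle = \overline{\langle A_g \psi, \phi\rangle} = \overline{\langle g, W_{\mathrm{Aff}}^{\phi,\psi}\rangle}$ and using the conjugation symmetry $\overline{W_{\mathrm{Aff}}^{\psi,\phi}} = W_{\mathrm{Aff}}^{\phi,\psi}$ (the identity already exploited in Example \ref{point_mass_example}), one rewrites the right-hand side as $\langle \overline{g}, W_{\mathrm{Aff}}^{\psi,\phi}\rangle = \langle A_{\overline{g}}\phi, \psi\rangle$, whence $A_g^* = A_{\overline{g}}$. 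Writing the Hilbert-Schmidt inner product as $\langle S, R\rangle_{\mathcal{HS}} = \mathrm{Tr}(R^* S)$, cyclicity of the trace gives
\[\mathrm{Tr}(A_f \circ A_g) = \mathrm{Tr}(A_g \circ A_f) = \mathrm{Tr}\big((A_g^*)^* A_f\big) = \langle A_f, A_g^*\rangle_{\mathcal{HS}} = \langle A_f, A_{\overline{g}}\rangle_{\mathcal{HS}}.\]
Since $f \mapsto A_f$ is an isometric isomorphism by Proposition \ref{quantization_isometry_proposition}, it preserves inner products by polarization, and therefore
\[\langle A_f, A_{\overline{g}}\rangle_{\mathcal{HS}} = \langle f, \overline{g}\rangle_{L_r^2(\mathrm{Aff})} = \int_{-\infty}^{\infty}\int_0^\infty f(x,a)\,\overline{\overline{g(x,a)}}\,\frac{da\,dx}{a} = \int_{-\infty}^{\infty}\int_0^\infty f(x,a)g(x,a)\,\frac{da\,dx}{a},\]
which is exactly the claimed identity.

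The routine ingredients are the polar factorization and the cyclic algebra of the trace; I expect the only point demanding genuine care to be the adjoint identity $A_g^* = A_{\overline{g}}$, since it is there that the conjugation symmetry of the affine cross-Wigner transform and the precise placement of complex conjugates in the inner-product conventions must be tracked carefully. Once that identity is secured, the isometry of affine dequantization converts the operator-side trace directly into the symbol-side integral, and no further analysis is needed.
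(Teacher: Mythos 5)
Your proposal is correct and follows essentially the same route as the paper: factor $T$ into two Hilbert--Schmidt operators, pass to symbols via the quantization isomorphism of Proposition \ref{quantization_isometry_proposition}, and evaluate the trace as a Hilbert--Schmidt inner product using $A_g^* = A_{\overline{g}}$. The only difference is that you supply details the paper merely cites, namely the polar-decomposition construction behind the factorization $T = A \circ B$ (stated without proof in Appendix \ref{Appendix_1}) and an explicit verification of the adjoint identity, both of which check out.
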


\begin{proof}
As mentioned in Appendix \ref{Appendix_1}, any trace class operator $T$ on $L^{2}(\mathbb{R}_{+})$ can be written as a composition of Hilbert-Schmidt operators $T = A \circ B$ with $A,B \in \mathcal{S}_{2}(L^{2}(\mathbb{R}_{+}))$. We can now use the bijective correspondence between Hilbert-Schmidt operators on $L^{2}(\mathbb{R}_{+})$ and $L_{r}^{2}(\mathrm{Aff})$ to write $A = A_f$ and $B = A_g$ for $f,g \in L_{r}^{2}(\mathrm{Aff})$. Thus by \eqref{composition_of_HS_operators} we can write 
\begin{equation*}
    \mathrm{Tr}(T) = \mathrm{Tr}(A_{f} \circ A_{g}) = \left\langle A_{g}, A_{f}^{*} \right\rangle_{\mathcal{HS}} = \left\langle g, \overline{f} \right\rangle_{L_{r}^{2}(\mathrm{Aff)}} = \int_{-\infty}^{\infty}\int_{0}^{\infty}f(x,a)g(x,a) \, \frac{da \, dx}{a}.
\end{equation*}

\vspace{-1cm}
   
\end{proof}
\vspace{0.2cm}

\begin{remark}
Notice that \[\mathrm{Tr}\left(T^{*}\right) = \mathrm{Tr}\left(A_{g}^{*} \circ A_{f}^{*}\right) = \mathrm{Tr}\left(A_{\overline{g}} \circ A_{\overline{f}}\right) = \int_{-\infty}^{\infty}\int_{0}^{\infty}\overline{f(x,a)g(x,a)} \, \frac{da \, dx}{a}.\] This can also be deduced from \eqref{trace_of_adjoint}. In particular, the trace of $T$ is real-valued whenever $f$ and $g$ are real-valued.
\end{remark}

\section{Further Research}
\label{sec: Open_Questions}

\subsubsection*{Affine Grossmann-Royer Operator}
A standard tool for deriving properties of the classical Wigner distribution is the \textit{Grossmann-Royer operator} $\widehat{R}(x,\omega)$ defined by the relation \[W(f,g)(x,\omega) = \left\langle \widehat{R}(x,\omega)f, g \right\rangle_{L^{2}(\mathbb{R}^d)},\] for $f,g \in L^{2}(\mathbb{R}^d)$ and $(x,\omega) \in \mathbb{R}^{2d}$. The precise formula for $\widehat{R}(x,\omega)$ can be found in \cite[Chapter 1]{de2017wigner} with a different normalization. An essential property of the Grossmann-Royer operator $\widehat{R}(x,\omega)$ is that \[\left\|\widehat{R}(x,\omega)f\right\|_{L^{2}(\mathbb{R}^{d})} = 2^d \cdot \|f\|_{L^{2}(\mathbb{R}^d)}, \] for all $f \in L^{2}(\mathbb{R}^d)$ and $(x,\omega) \in \mathbb{R}^{2d}$. This is immensely useful; to see that the classical cross-Wigner transform is bounded one simply needs to apply Cauchy-Schwarz inequality to obtain 
\begin{equation}
\label{Grossmann_bounded_equation}
    \sup_{(x,\omega) \in \mathbb{R}^{2d}}|W(f,g)(x,\omega)| \leq \left\|\widehat{R}(x,\omega)f\right\|_{L^{2}(\mathbb{R}^d)} \|g\|_{L^{2}(\mathbb{R}^d)} = 2^d \cdot \|f\|_{L^{2}(\mathbb{R}^d)} \|g\|_{L^{2}(\mathbb{R}^d)}.
\end{equation} \par
Analogously, we define the \textit{affine Grossmann-Royer operator} $\widehat{R}_{\mathrm{Aff}}(x,a)$ by the relation \[W_{\mathrm{Aff}}^{\psi,\phi}(x,a) = \left\langle \widehat{R}_{\mathrm{Aff}}(x,a)\psi, \phi \right\rangle_{L^{2}(\mathbb{R}_+)},\] for $\psi,\phi \in \mathcal{S}(\mathbb{R}_+)$ and $(x,a) \in \mathrm{Aff}$. We restrict our attention to Schwartz functions for convenience since then $W_{\mathrm{Aff}}^{\psi,\phi} \in \mathcal{S}(\mathrm{Aff})$ and hence have well-defined point values. Notice that the affine Grossmann-Royer operator $\widehat{R}_{\mathrm{Aff}}(x,a)$ is precisely the affine quantization of the point mass $\delta_{\mathrm{Aff}}(x,a)$ given in Example \ref{point_mass_example}.
\begin{lemma}
The affine Grossmann-Royer operator have the explicit form \[\widehat{R}_{\mathrm{Aff}}(x,a)\psi(r) = \frac{e^{2 \pi i x \lambda^{-1}\left(\frac{r}{a}\right)}\lambda^{-1}\left(\frac{r}{a}\right)\left(1 - e^{\lambda^{-1}\left(\frac{r}{a}\right)}\right)}{1 + \lambda^{-1}\left(\frac{r}{a}\right) - e^{\lambda^{-1}\left(\frac{r}{a}\right)}} \cdot \psi\left(r e^{-\lambda^{-1}\left(\frac{r}{a}\right)}\right),\]
for $\psi \in \mathcal{S}(\mathbb{R}_+)$, $r > 0$, and $(x,a) \in \mathrm{Aff}$ where $\lambda$ is the function given in \eqref{lambda_function}.
\end{lemma}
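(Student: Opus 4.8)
The plan is to extract $\widehat{R}_{\mathrm{Aff}}(x,a)$ directly from the definition of the affine cross-Wigner transform by a single change of variables that matches the integral against the inner product of $L^{2}(\mathbb{R}_+)$. Recall that the latter carries the Haar measure $dr/r$, so that the defining relation reads
\[
\int_{-\infty}^{\infty}\psi(a\lambda(u))\overline{\phi(a\lambda(-u))}e^{-2\pi i x u}\,du
=\int_{0}^{\infty}\bigl[\widehat{R}_{\mathrm{Aff}}(x,a)\psi\bigr](r)\,\overline{\phi(r)}\,\frac{dr}{r}.
\]
Since $\overline{\phi}$ is evaluated at $a\lambda(-u)$ on the left, the natural substitution is $r:=a\lambda(-u)$, which forces the argument of $\overline{\phi}$ to equal the new variable $r$. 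The function $\lambda$ is smooth and strictly increasing (as already used for the finite support property), hence a diffeomorphism of $\mathbb{R}$ onto $\mathbb{R}_+$; writing $w:=\lambda^{-1}(r/a)$ we get $-u=w$, that is $u=-\lambda^{-1}(r/a)$.

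First I would rewrite the remaining two factors in terms of $w$. The elementary identity $\lambda(v)=e^{v}\lambda(-v)$ (equivalently $\lambda(-v)=v/(e^{v}-1)$) together with $a\lambda(w)=r$ gives $a\lambda(u)=a\lambda(-w)=e^{-w}a\lambda(w)=r\,e^{-w}$, so that $\psi(a\lambda(u))=\psi(r\,e^{-w})$, while $e^{-2\pi i x u}=e^{2\pi i x w}=e^{2\pi i x\lambda^{-1}(r/a)}$. For the Jacobian, differentiating $r=a\lambda(w)$ yields $dr=a\lambda'(w)\,dw=-a\lambda'(w)\,du$; because $\lambda$ is increasing, $u\mapsto r$ is decreasing, running $r$ from $+\infty$ to $0$ as $u$ runs from $-\infty$ to $+\infty$, and reversing the limits converts $\int_{-\infty}^{\infty}du$ into $\int_{0}^{\infty}\frac{dr}{a\lambda'(w)}$. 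Passing to the measure $dr/r$ and using $r=a\lambda(w)$ turns the weight into $\lambda(w)/\lambda'(w)$, so that
\[
W_{\mathrm{Aff}}^{\psi,\phi}(x,a)
=\int_{0}^{\infty}e^{2\pi i x w}\,\frac{\lambda(w)}{\lambda'(w)}\,\psi\!\left(r\,e^{-w}\right)\overline{\phi(r)}\,\frac{dr}{r},
\qquad w=\lambda^{-1}\!\left(\tfrac{r}{a}\right),
\]
from which the operator $\widehat{R}_{\mathrm{Aff}}(x,a)$ is read off by stripping the factor $\overline{\phi(r)}\,dr/r$.

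It remains to simplify $\lambda(w)/\lambda'(w)$. The quotient rule applied to $\lambda(w)=we^{w}/(e^{w}-1)$ gives $\lambda'(w)=e^{w}(e^{w}-1-w)/(e^{w}-1)^{2}$, whence
\[
\frac{\lambda(w)}{\lambda'(w)}=\frac{w(e^{w}-1)}{e^{w}-1-w}=\frac{w\bigl(1-e^{w}\bigr)}{1+w-e^{w}},
\]
which is precisely the coefficient in the statement after substituting $w=\lambda^{-1}(r/a)$. I do not expect a genuine obstacle here: the argument is a careful change of variables, and the only points requiring attention are the orientation of the substitution $u\mapsto r$, the correct passage to the Haar measure $dr/r$, and the algebraic simplification of $\lambda/\lambda'$; the apparent $0/0$ at $w=0$ is a removable singularity with limiting value $\lambda(0)/\lambda'(0)=2$. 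Since $\psi\in\mathcal{S}(\mathbb{R}_+)$ all integrals converge absolutely, so the substitution is fully justified.
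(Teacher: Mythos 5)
Your proof is correct. The paper actually states this lemma without proof, and your argument is precisely the computation needed: the substitution $r = a\lambda(-u)$ (valid since $\lambda$ is a strictly increasing diffeomorphism of $\mathbb{R}$ onto $\mathbb{R}_+$), the identity $\lambda(u)=e^{u}\lambda(-u)$ giving $\psi(a\lambda(u))=\psi(re^{-w})$, the orientation-reversal and passage to the Haar measure $dr/r$ producing the weight $\lambda(w)/\lambda'(w)$, and the simplification $\lambda(w)/\lambda'(w)=w(e^{w}-1)/(e^{w}-1-w)=w(1-e^{w})/(1+w-e^{w})$ all check out, with the reading-off of $\widehat{R}_{\mathrm{Aff}}(x,a)$ from the weak pairing justified by density of $\mathcal{S}(\mathbb{R}_+)$ in $L^{2}(\mathbb{R}_+)$.
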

Trying to generalize the strategy in \eqref{Grossmann_bounded_equation} runs into a problem: The affine Grossmann-Royer operator is not a bounded operator on $\mathcal{S}(\mathbb{R}_+) \subset L^{2}(\mathbb{R}_{+})$ with respect to the norm $\|\cdot\|_{L^{2}(\mathbb{R}_+)}$. However, if $\psi \in \mathcal{S}(\mathbb{R}_{+})$ is supported in the interval $\left[\frac{1}{k}, k\right]$ for some $k > 0$, then there is a constant $C_k > 0$ such that \[\left\|\widehat{R}_{\mathrm{Aff}}(x,a)\psi\right\|_{L^{2}(\mathbb{R}_{+})} \leq C_k \cdot \|\psi\|_{L^{2}(\mathbb{R}_{+})}.\] We call the optimal constant $C_k$ in the inequality above the \textit{$k$-support constant}. Hence if $\phi \in \mathcal{S}(\mathbb{R}_{+})$ we have \[\sup_{(x,a) \in \mathrm{Aff}}\left|W_{\mathrm{Aff}}^{\psi,\phi}(x,a)\right| \leq C_k \cdot \|\psi\|_{L^{2}(\mathbb{R}_+)}\|\phi\|_{L^{2}(\mathbb{R}_+)}.\]
A trivial adaption of \cite[Lemma 4.3.7]{grochenig2013foundations} gives the following \textit{relative uncertainty principle} for the affine Wigner distribution.

\begin{proposition}
\label{Wigner_uncertainty_prop}
Let $\psi \in \mathcal{S}(\mathbb{R}_{+})$ be supported in the interval $\left[\frac{1}{k}, k\right]$ for some $k > 0$ and let $U \subset \mathrm{Aff}$ be a Borel set. Assume there is an $\epsilon \geq 0$ such that \[\int_{U} W_{\mathrm{Aff}}^{\psi}(x,a) \, \frac{da \, dx}{a} \geq (1 - \epsilon)\|\psi\|_{L^{2}(\mathbb{R}_+)}^{2}.\] Then the right Haar measure of $U$ satisfies \[\mu_{r}(U) \geq \frac{1 - \epsilon}{C_k},\]
where $C_k$ is the $k$-support constant.
\end{proposition}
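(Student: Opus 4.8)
The plan is to follow the template of the classical relative uncertainty principle in \cite[Lemma 4.3.7]{grochenig2013foundations}, whose entire analytic content is a uniform pointwise bound on the relevant quadratic distribution together with the monotonicity of the integral. In our setting such a bound is already available through the affine Grossmann-Royer operator: because $\psi$ is supported in $\left[\frac{1}{k},k\right]$, we may set $\phi = \psi$ in the estimate
\[\sup_{(x,a) \in \mathrm{Aff}}\left|W_{\mathrm{Aff}}^{\psi,\phi}(x,a)\right| \leq C_k \cdot \|\psi\|_{L^{2}(\mathbb{R}_+)}\|\phi\|_{L^{2}(\mathbb{R}_+)}\]
to obtain the pointwise control $\left|W_{\mathrm{Aff}}^{\psi}(x,a)\right| \leq C_k \|\psi\|_{L^{2}(\mathbb{R}_+)}^2$ valid for every $(x,a) \in \mathrm{Aff}$, where $C_k$ is the $k$-support constant.

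First I would record that $W_{\mathrm{Aff}}^{\psi}$ is real-valued, so that the hypothesis is a genuine inequality between real numbers. This follows from the substitution $u \mapsto -u$ in the defining integral, which interchanges the arguments $a\lambda(u)$ and $a\lambda(-u)$ and yields $\overline{W_{\mathrm{Aff}}^{\psi}(x,a)} = W_{\mathrm{Aff}}^{\psi}(x,a)$. Combined with the pointwise bound above, real-valuedness upgrades the absolute-value estimate to the plain inequality $W_{\mathrm{Aff}}^{\psi}(x,a) \leq C_k \|\psi\|_{L^{2}(\mathbb{R}_+)}^2$ on all of $\mathrm{Aff}$.

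The core of the argument is then a single chain of inequalities. Using the hypothesis on the left and the pointwise bound on the right, I would write
\[(1-\epsilon)\|\psi\|_{L^{2}(\mathbb{R}_+)}^2 \leq \int_{U} W_{\mathrm{Aff}}^{\psi}(x,a) \, \frac{da \, dx}{a} \leq C_k \|\psi\|_{L^{2}(\mathbb{R}_+)}^2 \int_{U} \frac{da \, dx}{a} = C_k \|\psi\|_{L^{2}(\mathbb{R}_+)}^2 \, \mu_{r}(U),\]
where the final equality is precisely the definition of the right Haar measure $\mu_r$ appearing throughout. Dividing by the positive quantity $C_k \|\psi\|_{L^{2}(\mathbb{R}_+)}^2$ (legitimate since $\psi \neq 0$ and $C_k > 0$) gives $\mu_{r}(U) \geq (1-\epsilon)/C_k$, as claimed. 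As a consistency check, the remark following Proposition \ref{marginal_properties_proposition} shows that the full integral of $W_{\mathrm{Aff}}^{\psi}$ equals $\|\psi\|_{L^{2}(\mathbb{R}_+)}^2$, so the hypothesis indeed asks that $U$ capture at least a $(1-\epsilon)$-fraction of the total mass.

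I do not anticipate any real obstacle, since all the difficulty has been concentrated into the existence and finiteness of $C_k$; this is exactly where the support hypothesis on $\psi$ is indispensable, as the affine Grossmann-Royer operator fails to be bounded on all of $L^{2}(\mathbb{R}_+)$. The only step requiring more than a one-line justification is the real-valuedness of $W_{\mathrm{Aff}}^{\psi}$, and even that reduces to the symmetry of the defining integral under $u \mapsto -u$; everything else is monotonicity of the integral over $U$.
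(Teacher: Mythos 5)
Your proof is correct and is exactly the argument the paper intends: the paper derives this proposition as a \textquote{trivial adaptation} of the classical relative uncertainty principle \cite[Lemma 4.3.7]{grochenig2013foundations}, i.e.\ the uniform pointwise bound $|W_{\mathrm{Aff}}^{\psi}(x,a)| \leq C_k\|\psi\|_{L^{2}(\mathbb{R}_+)}^2$ coming from the affine Grossmann--Royer operator and the $k$-support constant, followed by monotonicity of the integral over $U$. Your additional observation that $W_{\mathrm{Aff}}^{\psi}$ is real-valued (via the substitution $u \mapsto -u$) is a correct and welcome detail that the paper leaves implicit.
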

Motivated by Proposition \ref{Wigner_uncertainty_prop}, it is of interest to investigate the $k$-support constant $C_k$ both numerically and asymptotically.

\subsubsection*{Affine Positivity Conjecture}

One of the major results about the classical Wigner distribution is regarding positivity; when is $W_f$ a non-negative function on $\mathbb{R}^{2d}$? Normalized functions $f \in L^{2}(\mathbb{R}^{d})$ such that $W_f$ is non-negative would generate probability density functions on $\mathbb{R}^{2d}$ that represent the time-frequency distribution of $f$. However, a well-known result of Hudson \cite[Theorem 4.4.1]{grochenig2013foundations} shows that this can only happen for suitably perturbed Gaussians. \par Turning to the affine setting, we would like to determine the normalized functions $\psi \in L^{2}(\mathbb{R}_{+})$ such that $W_{\mathrm{Aff}}^{\psi}$ is a non-negative function on the affine group. In \cite{molnar2001coherent} the authors showed that the affine Wigner distribution $W_{\mathrm{Aff}}^{\psi_{s}}$ is non-negative if $\psi_{s}$ is the so called \textit{Morse ground state} 
\begin{equation*}
    \psi_{s}(r) := \frac{r^s e^{-\frac{r}{2}}}{\Gamma(2s)}, \qquad s \geq 0.
\end{equation*} 
In our case, we will only consider $\psi_{s}$ for $s > 0$ as $\psi_{0} \not \in L^{2}(\mathbb{R}_{+})$. More generally, one can use the invariance properties \eqref{invariance_property} and \eqref{hyperbolic_invariance} to show that the affine Wigner distribution $W_{\mathrm{Aff}}^{\psi}$ of 
\begin{equation}
\label{almost_ Klauder}
    \psi(r) = Cr^{-i(x + i a)}e^{i(y + ib)r},
\end{equation}
is non-negative when $C \in \mathbb{C}$ and $(x,a), (y,b) \in \mathrm{Aff}$. The functions on the form \eqref{almost_ Klauder} are the \textit{generalized Klauder wavelets} in \cite[Equation (41)]{flandrin1998separability} that are in $L^{2}(\mathbb{R}_+)$. This leads to the following affine positivity conjecture: 
\begin{center}
    The only functions $\psi \in L^{2}(\mathbb{R}_{+})$ such that $W_{\mathrm{Aff}}^{\psi}$ is non-negative on the affine group are the generalized Klauder wavelets in \eqref{almost_ Klauder}.
\end{center}

\section{Appendices}

\subsection{Proof of Proposition \ref{polyanalytic_decomposition}}
\label{Appendix_Proof}

Notice first that 
\begin{align*}
L_{r}^{2}(\mathrm{Aff}) & \simeq L^{2}(\mathbb{R},dx) \otimes L^{2}(\mathbb{R}_{+},a^{-1} \, da) \\ & \simeq \left(L^{2}(\mathbb{R}_{-},dx) \otimes L^{2}(\mathbb{R}_{+},a^{-1} \, da)\right) \oplus \left(L^{2}(\mathbb{R}_{+},dx) \otimes L^{2}(\mathbb{R}_{+},a^{-1} \, da)\right).    
\end{align*}
Hence the decomposition formulas for $\mathbb{A}^{n}(\mathrm{Aff})$ and $\mathbb{A}^{\perp,n}(\mathrm{Aff})$ given in \eqref{polyanalytic_decomposition} imply the desired orthogonal decomposition of $L_{r}^{2}(\mathrm{Aff})$ because the functions $\mathcal{L}_{n}$ form an orthonormal basis for $L^{2}(\mathbb{R}_{+},a^{-1} \, da)$. Moreover, the decompositions of the pure spaces $\mathcal{A}^{n}(\mathrm{Aff})$ and $\mathcal{A}^{\perp,n}(\mathrm{Aff})$ also follow from the decompositions of $\mathbb{A}^{n}(\mathrm{Aff})$ and $\mathbb{A}^{\perp,n}(\mathrm{Aff})$. We will focus on showing the decomposition 
\begin{equation}
\label{decomposition_equation_in_proof}
\mathbb{A}^{n}(\mathrm{Aff}) \simeq L^{2}(\mathbb{R}_{+},dx) \otimes \bigoplus_{k = 0}^{n-2}\mathrm{span}\left\{\mathcal{L}_{k}\right\},  
\end{equation}
since the decomposition of $\mathbb{A}^{\perp,n}(\mathrm{Aff})$ is similar. \par The idea of the proof is to define several isometries of the space $L_{r}^{2}(\mathrm{Aff})$ so that the equation $\partial_{\overline{z}}f = 0$ is transformed into something more manageable.
Define the two isometries \[\mathcal{F} \otimes I:L^{2}(\mathbb{R},dx) \otimes L^{2}(\mathbb{R}_{+},da) \longrightarrow L^{2}(\mathbb{R},dx) \otimes L^{2}(\mathbb{R}_{+},da),\] \[\mathcal{V}:L^{2}(\mathbb{R},dx) \otimes L^{2}(\mathbb{R}_{+},da) \longrightarrow L^{2}(\mathbb{R},dx) \otimes L^{2}(\mathbb{R}_{+},da),\] where $\mathcal{F}$ denotes the Fourier transform and $\mathcal{V}$ is given by \[\mathcal{V}(f)(x,a) = \frac{1}{\sqrt{2|x|}}f\left(x,\frac{a}{2|x|}\right).\]
Let $M_{g}$ denote the multiplication operator with symbol $g$. We consider the following diagram
\[
\begin{tikzcd}
[row sep=scriptsize, column sep=scriptsize]
& L^{2}(\mathrm{Aff},dx \, da) \simeq L^{2}(\mathbb{R},dx) \otimes L^{2}(\mathbb{R}_{+},da) \arrow[rr,"\mathcal{V} \circ \mathcal{F} \otimes I"] & & L^{2}(\mathbb{R},dx) \otimes L^{2}(\mathbb{R}_{+},da) \simeq L^{2}(\mathrm{Aff},dx \, da)\arrow[dd, "M_{\sqrt{a}}"]\\ \\
& L_{r}^{2}(\mathrm{Aff}) \simeq L^{2}(\mathbb{R},dx) \otimes L^{2}(\mathbb{R}_{+},a^{-1} \, da) \arrow[rr,"\Phi"] \arrow[uu, "M_{\frac{1}{\sqrt{a}}}"] & & L^{2}(\mathbb{R},dx) \otimes L^{2}(\mathbb{R}_{+},a^{-1} \, da) \simeq L_{r}^{2}(\mathrm{Aff})
\end{tikzcd}\] where $\Phi$ is defined to be the composition of the three other isometries. The image of $\mathbb{A}^{n}(\mathrm{Aff})$ under $\Phi$ consists of all functions in $L_{r}^{2}(\mathrm{Aff})$ that satisfies
    \begin{equation}
    \label{equation_transformed}
        \Phi(\partial_{x} + i \partial_{a})^{n}\Phi^{-1}f = 0.
    \end{equation}
    Since the Fourier transform is only applied to the second variable, we have \[\Phi f(x,a) = M_{\sqrt{a}}\mathcal{V}(\mathcal{F} \otimes I)M_{\frac{1}{\sqrt{a}}}f(x,a) = M_{\sqrt{2|x|}}\mathcal{V}(\mathcal{F} \otimes I)f(x,a).\] \par 
    When computing how the equation $\partial_{\overline{z}}^{n}f = 0$ gets transformed under $\Phi$, we remove constants as we go along as they do not change the solution to a homogeneous differential equation. Expanding the right hand side of (\ref{equation_transformed}) shows that
    \begin{align*}
 \Phi(\partial_{x} + i \partial_{a})^{n}\Phi^{-1}f(x,a) = |x|^n\left(\mathrm{sign}(x) + 2 \partial_a\right)^{n}f(x,a).
    \end{align*}
    Hence the solutions to \eqref{equation_transformed} are on the form \begin{equation}
    \label{transformed_equation_2}
        \sum_{k = 1}^{n-1}\mathbf{1}_{[0,\infty)}(x)\psi_{k}(x)a^{k}e^{-\frac{a}{2}}, \qquad \psi_k \in L^{2}(\mathbb{R},dx).
    \end{equation} \par 
    Notice that when $n = 1$ there are no solutions to \eqref{equation_transformed}, reflecting Proposition \ref{no_analytic_functions}. We can rewrite \eqref{transformed_equation_2}
    as all functions $f \in L_{r}^{2}(\mathrm{Aff})$ on the form \[f(x,a) = \sum_{k = 0}^{n-2}g_{k}(x)\mathcal{L}_{k}(a), \qquad g_k \in L^{2}(\mathbb{R}_+,dx).\] Thus we obtain the decomposition \eqref{decomposition_equation_in_proof} and the rest of the result follows from the remarks made in the beginning of the proof.

\subsection{Schatten Class Operators}
\label{Appendix_1}
As we need Hilbert-Schmidt operators throughout the paper and trace class operators in Corollary \ref{trace_class_proposition}, we review basic facts about Schatten class operators $\mathcal{S}_{p}(\mathcal{H})$ on an arbitrary separable Hilbert space $\mathcal{H}$. In this framework, the trace class operators are simply $\mathcal{S}_{1}(\mathcal{H})$, while the Hilbert-Schmidt operators are $\mathcal{S}_{2}(\mathcal{H})$. A thorough reference for Schatten class operators is \cite[Chapter 3]{busch2016quantum}. \par 
Recall that the \textit{singular values} $\{s_{n}(T)\}_{n = 0}^{\infty}$ of a bounded operator $T$ on a separable Hilbert space $\mathcal{H}$ are the eigenvalues of the operator $|T| := \sqrt{T^* T}$. The Schatten classes consist of operators where the singular values decay suitably fast: Define the $p$'th \textit{Schatten class operators} $\mathcal{S}_{p}(\mathcal{H})$ on a separable Hilbert space $\mathcal{H}$ for $1 \leq p < \infty$ to be all bounded operators $T$ such that the singular values $\{s_{n}(T)\}_{n = 0}^{\infty}$ of $T$ satisfies
\begin{equation}
\label{Schatten norm}
    \|T\|_{\mathcal{S}_{p}(\mathcal{H})} := \left(\sum_{n = 0}^{\infty}s_{n}(T)^{p}\right)^{\frac{1}{p}} < \infty.
\end{equation}

The expression \eqref{Schatten norm} makes $\mathcal{S}_{p}(\mathcal{H})$ into a Banach space for all $1 \leq p < \infty$ and the Schatten class operators are compact operators for all $1 \leq p < \infty$. Moreover, we have the norm estimates \[\|T\|_{op} \leq \|T\|_{\mathcal{S}_{q}(\mathcal{H})} \leq \|T\|_{\mathcal{S}_{p}(\mathcal{H})}, \qquad T \in \mathcal{S}_{p}(\mathcal{H}), \quad 1 \leq p \leq q < \infty.\] 

We are primarily interested in the following two cases: 
\begin{itemize}
\item For $p = 1$ we call $\mathcal{S}_{1}(\mathcal{H})$ the \textit{trace class operators}. The norm on $\mathcal{S}_{1}(\mathcal{H})$ is equivalently given by 
\begin{equation}
\label{trace_class_norm}
    \|T\|_{\mathcal{S}_{1}(\mathcal{H})} = \sum_{n = 0}^{\infty}\langle |T|\psi_n, \psi_n \rangle,
\end{equation}
where $\{\psi_n\}_{n = 0}^{\infty}$ is an orthonormal basis for $\mathcal{H}$. As expected, the expression \eqref{trace_class_norm} does not depend on the choice of orthonormal basis. Hence we can define the \textit{trace} of any trace class operator $T$ as the sum of the absolutely convergent series \[\mathrm{Tr}(T) = \sum_{n = 0}^{\infty}\langle T\psi_n, \psi_n \rangle,\] where $\{\psi_n\}_{n \in \mathbb{N}}$ is again any orthonormal basis for $\mathcal{H}$. Notice that this extends the usual trace of matrices to the setting of bounded operators on separable Hilbert spaces. The trace class operators are closed under taking adjoints, and we have the formula 
\begin{equation}
\label{trace_of_adjoint}
    \mathrm{Tr}(T^*) = \overline{\mathrm{Tr}(T)}, \qquad T \in \mathcal{S}_{1}(\mathcal{H}).
\end{equation}
\item For $p = 2$ we call $\mathcal{S}_{2}(\mathcal{H})$ the \textit{Hilbert-Schmidt operators}. This becomes a Hilbert space under the inner-product
\[\langle A,B \rangle_{\mathcal{HS}} := \sum_{n = 0}^{\infty}\langle A\psi_n,B\psi_n \rangle, \qquad A,B \in \mathcal{S}_{2}(\mathcal{H}),\]
where $\{\psi_n\}_{n = 0}^{\infty}$ is any orthonormal basis for $\mathcal{H}$. A useful fact is that any trace class operator $T \in \mathcal{S}_{1}(\mathcal{H})$ can be written as $T = A \circ B$ for two Hilbert-Schmidt operators $A,B \in \mathcal{S}_{2}(\mathcal{H})$. The trace of the operator $T$ can then be computed with the formula
\begin{equation}
\label{composition_of_HS_operators}
    \mathrm{Tr}(T) = \mathrm{Tr}(A \circ B) = \sum_{n = 0}^{\infty}\langle AB\psi_n,\psi_n \rangle = \sum_{n = 0}^{\infty}\langle B\psi_n,A^{*}\psi_n \rangle = \langle B, A^{*} \rangle_{\mathcal{HS}}.
\end{equation}
\end{itemize}


\bibliographystyle{abbrv}
\bibliography{main}

\Addresses

\end{document}